\documentclass[11pt,a4paper]{article}

\usepackage[utf8]{inputenc}
\usepackage[T1]{fontenc}
\usepackage[a4paper,top=2.5cm,bottom=2.5cm,left=2.5cm,right=2.5cm]{geometry}
\usepackage{amsmath,amssymb,amsthm,amsfonts}
\usepackage{mathtools}
\usepackage{enumitem}
\usepackage{hyperref}
\usepackage{xcolor}
\usepackage{tikz}
\usetikzlibrary{calc}

\usepackage{microtype}

\theoremstyle{plain}
\newtheorem{theorem}{Theorem}[section]
\newtheorem{lemma}[theorem]{Lemma}
\newtheorem{proposition}[theorem]{Proposition}
\newtheorem{corollary}[theorem]{Corollary}

\theoremstyle{definition}
\newtheorem{definition}[theorem]{Definition}
\newtheorem{example}[theorem]{Example}
\newtheorem{remark}[theorem]{Remark}

\newcommand{\R}{\mathbb{R}}
\newcommand{\N}{\mathbb{N}}
\newcommand{\GN}{\mathcal{G}^N}
\newcommand{\RN}{\mathbb{R}^N}
\newcommand{\Sh}{\mathrm{Sh}}
\newcommand{\Shi}{\mathrm{Sh}_i}
\newcommand{\Bz}{\mathrm{Bz}}
\newcommand{\Bzi}{\mathrm{Bz}_i}
\newcommand{\ED}{\mathrm{ED}}
\newcommand{\EDi}{\mathrm{ED}_i}
\newcommand{\ESD}{\mathrm{ESD}}
\newcommand{\ESDi}{\mathrm{ESD}_i}
\newcommand{\So}{\mathrm{So}}

\newcommand{\Feps}{F^{\varepsilon}}
\newcommand{\Fepsi}{F^{\varepsilon}_i}
\newcommand{\mV}{V}
\newcommand{\mL}{L}

\newcommand{\eps}{\varepsilon}

\newcommand{\Var}{\mathrm{Var}}
\newcommand{\E}{\mathbb{E}}

\title{The Geometry of Cooperative Game Solutions:\\ Stratified Egalitarian Shapley Values}

\author{Frank M.\,V.\,Feys} 
\date{\today}

\begin{document}

\maketitle

\begin{abstract}
\noindent The space $\mL$ of linear value maps on a finite-player cooperative game $\GN$ is finite-dimensional, and admits a canonical inner product induced by the Harsanyi-dividend decomposition of $\GN$. 
We show that this inner product is intrinsic: the same value arises from any orthonormal basis of $\GN$ with respect to the Harsanyi inner product, not merely from the unanimity basis used to define it. 
Within this geometry, the subspace $\mL^{\mathrm{ESL}}$ of efficient, symmetric, linear value maps admits a clean structure theorem. 
The induced orthogonal stratification of $\mL$ by coalition size yields a canonical linear isomorphism $\mL^{\mathrm{ESL}} \cong \R^{n-1}$, under which every efficient symmetric linear value map decomposes uniquely into $n - 1$ \emph{stratified epsilons}, one per coalition size. 
The classical egalitarian Shapley family of Joosten (1996) is, in fact, precisely the diagonal slice of this $\R^{n-1}$. 
The orthogonal projection of any $\Psi \in \mL^{\mathrm{ESL}}$ onto this diagonal yields an optimal parameter $\eps^*(\Psi)$ which equals the weighted mean of the stratified epsilons under an explicit probability distribution $\{w_a\}$ over coalition sizes, and the associated goodness-of-fit measure $R^2(\Psi)$ equals one minus the relative weighted variance of those epsilons. 
The framework is therefore a literal regression-statistics analogue of the coefficient of determination, with the underlying probability space being $(\{1, \ldots, n - 1\}, w)$. 
The framework yields explicit closed-form computations for natural solutions. 
At $n = 4$ it produces a clean three-way classification of the standard alternatives to the Shapley value: the Banzhaf value is nearly orthogonal to the egalitarian Shapley axis ($R^2 \approx 1\%$); the equal-surplus-division value is moderately aligned ($R^2 \approx 38\%$); the solidarity value is almost entirely aligned ($R^2 \approx 99.6\%$). 
Asymptotic analysis sharpens the contrast: $R^2(\ESD) \to 1$ exponentially fast, $R^2(\So) \to 1$ via vanishing relative weighted variance of the stratified epsilons, and $R^2(\Bz) \to 1/2$, the last reflecting a structural identity between the efficiency defect and the egalitarian-Shapley deviation of the Banzhaf value at every coalition size. 
Each conclusion is a one-line corollary of the stratified structure theorem. 
\end{abstract}

\section{Introduction}\label{sec-intro}

\subsection*{Motivation}

Cooperative game theory has produced a wide range of single-valued solution concepts that allocate the worth of a coalition $v(N)$ among its members. 
The most widely studied is the Shapley value, which is characterized by Shapley's classical four-axiom theorem: efficiency, symmetry, linearity, and the dummy property. 
Yet many alternatives have been proposed and studied, each motivated by a different fairness intuition: the Banzhaf value (averaging marginal contributions over coalitions), the equal-division solution (pure egalitarian sharing), the equal-surplus-division value (egalitarian sharing of the joint surplus above individual worths), the solidarity value of Nowak and Radzik (averaging marginal contributions within each coalition), and the egalitarian Shapley values of Joosten (convex combinations of the Shapley value and the equal-division solution).

Note that each such alternative has its own axiomatization, and the literature contains many comparison results phrased axiomatically. 
What is largely missing, however, is a \emph{quantitative} way to compare them. 
How close is the Banzhaf value to the Shapley value, in some intrinsic sense, exactly?
 Is the equal-surplus-division value structurally similar to the equal-division solution, or genuinely distinct from it? 
 Are the various ``egalitarian-flavored'' alternatives to Shapley really expressing one common idea, or several independent ones? More fundamentally: is there a canonical structure on the space of solutions in which such questions admit principled answers?

These questions are difficult to address, since a priori there is no canonical metric on the space of solutions.
A natural starting point is to think of solutions as elements of a function space and equip it with an inner product, but the obvious construction (summing $\langle \Psi(v), \Phi(v) \rangle_{\R^N}$ over a basis of games) depends, a priori, on the choice of basis.
 We show that the right inner product is intrinsic, that the space of efficient symmetric linear solutions has a clean canonical parametrization in terms of it, and that this parametrization gives every quantitative question about ``closeness to the Shapley value'' a precise geometric meaning.

\subsection*{Contributions}

In this paper we propose a geometric framework that addresses these questions.
 Our starting point is the observation that the space $\mL$ of \emph{linear} value maps
 is  finite-dimensional once the player set $N$ is fixed: $\dim(\mL) = (2^n - 1) \cdot n$.
   Virtually every solution studied in cooperative game theory is linear in $v$, so this restriction loses no important examples.

The space $\mL$ in fact admits a canonical inner product, which is induced by the Harsanyi-dividend decomposition of $\GN$ (Definition~\ref{def-IP-on-L}).
 Our first technical result (Theorem~\ref{thm-basis-indep}) shows that this inner product is intrinsic:
  the same value arises from any orthonormal basis of $\GN$ with respect to the Harsanyi inner product, not merely from the unanimity basis used to define it.

The structural core of the paper (Section~\ref{sec-spectral}) is as follows.
The Harsanyi inner product induces an orthogonal stratification of $\mL$ by coalition size: $\mL = \bigoplus_{a=1}^{n} \mL_a$, where $\mL_a$ consists of value maps supported on $a$-coalitions.
 Restricting to the symmetric and efficient subspace, each stratum becomes one-dimensional and is generated by the stratum-$a$ component of $\ED - \Sh$. 
We obtain a canonical linear isomorphism
\[
\mL^{\mathrm{ESL}} \xrightarrow{\;\sim\;} \R^{n-1}, \qquad \Psi \longmapsto \big(\eps_1(\Psi), \ldots, \eps_{n-1}(\Psi)\big),
\]
where $\Psi^{(a)} = (1 - \eps_a) \Sh^{(a)} + \eps_a \ED^{(a)}$ stratum by stratum (Theorem~\ref{thm-stratified-flat}).
 Every efficient symmetric linear value map is therefore a \emph{stratified egalitarian Shapley value},
  with one parameter per coalition size; and the classical egalitarian Shapley family of Joosten is, in fact, the one-parameter diagonal slice of this $\R^{n-1}$.

Within this picture, the orthogonal projection onto the egalitarian Shapley line decomposes rather cleanly.
 The optimal parameter $\eps^*(\Psi)$ is the weighted mean of the stratified epsilons under an explicit probability distribution $\{w_a\}$ over coalition sizes (Theorem~\ref{thm-stratified-proj}); the goodness-of-fit measure $R^2(\Psi)$ is one minus the relative weighted variance of those epsilons (Theorem~\ref{thm-R2-variance}).
  The framework is thus a literal regression-statistics analogue of the coefficient of determination, with $(\{1, \ldots, n-1\}, w)$ as the underlying probability space and the diagonal $\eps_1 = \ldots = \eps_{n-1}$ as the model class.
  The framework also extends to nonefficient symmetric linear value maps via an additional efficiency-defect coordinate at each stratum (Section~\ref{subsec-Bz-stratified}).

Applied to the standard alternatives to the Shapley value,
 the framework yields explicit closed-form computations and a clean three-way classification.
  At $n = 4$:
\begin{itemize}[leftmargin=2em]
\item the \emph{Banzhaf value} is nearly orthogonal to the egalitarian Shapley axis ($R^2 \approx 1\%$);
\item the \emph{equal-surplus-division value} is moderately aligned ($R^2 \approx 38\%$);
\item the \emph{solidarity value} is almost entirely aligned ($R^2 \approx 99.6\%$).
\end{itemize}
Asymptotic analysis as $n \to \infty$ sharpens this contrast (Theorems~\ref{thm-asym-ESD} and~\ref{thm-asym-Bz}, 
together with Equation~\eqref{eq-So-asymp} and Corollary~\ref{cor-Bz-asym-stratified}).
 For $\ESD$ the deviation from the egalitarian Shapley line is concentrated entirely on the singleton stratum, whose weight $w_1$ vanishes asymptotically, hence $R^2(\ESD) \to 1$.
  For the solidarity value, $\eps_a(\So) \to 1$ pointwise in $a$ as $n \to \infty$ and the relative weighted variance vanishes, hence $R^2(\So) \to 1$.
   For the Banzhaf value, the efficiency defect satisfies the structural identity $\delta_a(\Bz) = -\eps_a(\Bz)/n$ at every stratum, which forces the squared norms of the efficient and uniform components to match asymptotically, hence $R^2(\Bz) \to 1/2$.
   The Banzhaf value is therefore not just numerically different from the egalitarian Shapley family, but \emph{geometrically} so, in a way that no growth in the player set can erase.

We see this as a quantitative articulation of an intuition that working game theorists have long held informally; namely,
 the Banzhaf value ``feels different'' from the Shapley value in a way that the equal-surplus-division and solidarity values do not.
  Our framework provides the first formal vocabulary for that intuition, and grounds it in a structure theorem rather than a numerical coincidence.

We additionally recover, in our notation, the characterization of the egalitarian Shapley family of Joosten (1996) via an $\varepsilon$-Dummy axiom that directly relaxes the Dummy axiom of Shapley's classical theorem (Section~\ref{sec-eps-shapley}),
 and we extend the projection framework to richer multi-parameter affine flats through the Shapley value, with explicit closed-form coefficients via the normal equations (Section~\ref{sec-multiparam}).

\subsection*{Related Work}

The egalitarian Shapley family $\{\Feps  \mid \eps \in [0, 1]\}$ originates in Joosten~\cite{joosten1996} and has been characterized axiomatically several times since:
 by van den Brink, Funaki, and Ju~\cite{vandenbrink2013} via efficiency, linearity, anonymity, and weak monotonicity, and by Casajus and Huettner~\cite{casajus2013} via efficiency, additivity, desirability, and a null-player-in-a-productive-environment property.
  The $\eps$-Dummy axiom we use (Section~\ref{sec-eps-shapley}) is, after the reparameterization $\alpha = 1 - \eps$, the $\alpha$-egalitarian null player property attributed to Joosten in~\cite[footnote~2]{casajus2013};
   we include the corresponding four-axiom characterization for completeness, as the direct analogue of Shapley's classical theorem.

The Shapley value itself admits multiple axiomatizations beyond Shapley's original four-axiom theorem,
 notably Young's~\cite{young1985} strong-monotonicity characterization and the potential-function approach of Hart and Mas-Colell~\cite{hartmascolell1989}; we work with the original four-axiom formulation throughout.
  Existing comparisons of $\Sh$, $\ED$, $\ESD$, $\Bz$, and $\So$ in the literature have been primarily axiomatic or based on specific structural properties such as null-player robustness~\cite{kamijokongo2012,casajus2014};
   the present paper instead provides a single inner-product geometry in which all five solutions occupy precise quantitative positions.

What is new here is not the family $\{\Feps\}$ itself, but the inner-product geometry on $\mL$ within which the family becomes a canonical affine line, and the quantitative comparison of standard solutions to that line via projection.

\subsection*{Outline}

Section~\ref{sec-coop-games} fixes notation and recalls the Shapley value. 
Section~\ref{sec-eps-shapley} introduces the $\eps$-Dummy axiom and recovers the egalitarian Shapley family axiomatically. 
Sections~\ref{sec-value-space} and~\ref{sec-basis-indep} construct the inner product on $\mL$ and prove its basis-independence. 
Section~\ref{sec-best-approx} computes the optimal $\eps^*$ for the Banzhaf and equal-surplus-division values. 
Section~\ref{sec-pythagorean} sets up the Pythagorean decomposition, defines $R^2(\Psi)$, 
computes it for our three targets at $n = 4$, and presents the three-way classification. 
Section~\ref{sec-spectral} establishes the structural core of the paper:
 the orthogonal stratification of $\mL$ by coalition size, the canonical isomorphism $\mL^{\mathrm{ESL}} \cong \R^{n-1}$ via stratified epsilons, and the closed-form expression of $R^2$ as one minus the relative weighted variance of those epsilons. 
 Section~\ref{sec-asymptotic} establishes the asymptotic results $R^2(\ESD) \to 1$ and $R^2(\Bz) \to 1/2$ and uses the stratified picture to obtain $R^2(\So) \to 1$.
  Section~\ref{sec-multiparam} extends the framework to multi-parameter projection. 
  Section~\ref{sec-discussion} discusses open directions.
   Appendix~\ref{app-So-formula} derives the closed form for the stratified epsilons of the solidarity value.

\section{Cooperative Games and the Shapley Value}\label{sec-coop-games}

Let $N \subseteq \N$ be a finite set of players with $|N| = n \geq 2$.
 A \emph{cooperative game with transferable utility} (TU-game) is a pair $(N, v)$, where $v \colon 2^N \to \R$ is a \emph{characteristic function} satisfying $v(\emptyset) = 0$.
  Any subset $C \subseteq N$ is called a \emph{coalition}, and $v(C) \in \R$ is its \emph{worth}. 
  We take $N$ to be fixed and refer to the game by $v$ alone.

The set of all TU-games on $N$, denoted $\GN$, carries a vector-space structure under the pointwise operations $(v + w)(C) = v(C) + w(C)$ and $(\alpha v)(C) = \alpha \cdot v(C)$.
 This space has dimension $2^n - 1$. 
 For each nonempty $A \subseteq N$, the \emph{unanimity game} $u_A \in \GN$ is defined by
\[
u_A(C) = \begin{cases} 1 & \text{if } A \subseteq C, \\ 0 & \text{otherwise.} \end{cases}
\]
The collection $\{u_A \mid \emptyset \neq A \subseteq N\}$ forms a basis of $\GN$. 
Each game $v \in \GN$ has a unique decomposition
\[
v = \sum_{\emptyset \neq A \subseteq N} h_v(A) \cdot u_A,
\]
in which the coefficients $h_v(A) \in \R$ are called the \emph{Harsanyi dividends}~\cite{harsanyi1959} of $v$.

A (single-valued) \emph{solution} for TU-games is a map $\Psi \colon \GN \to \RN$ that assigns a payoff vector $\Psi(v) \in \RN$ to each game $v \in \GN$. 
We write $\Psi = (\Psi_i)_{i \in N}$.

The \emph{Shapley value} $\Sh \colon \GN \to \RN$, introduced by Shapley~\cite{shapley1953value},
 is given by
\[
\Shi(v) := \frac{1}{n!} \sum_{\pi \in \Pi(N)} m_i^\pi(v),
\]
for each $i \in N$, where $\Pi(N)$ is the set of permutations of $N$, $S_\pi(i) = \{j \in N \mid \pi(j) < \pi(i)\}$ denotes the set of strict predecessors of $i$ under $\pi$, and $m_i^\pi(v) := v(S_\pi(i) \cup \{i\}) - v(S_\pi(i))$ is the marginal contribution of $i$.
 The Shapley value of a unanimity game is $\Shi(u_A) = 1/|A|$ if $i \in A$, and $\Shi(u_A) = 0$ otherwise.

The \emph{equal-division solution} $\ED \colon \GN \to \RN$ is given by $\EDi(v) := v(N)/n$ for all $i \in N$.

We recall the four classical axioms for a solution $\Psi$:
\begin{description}[leftmargin=2.4cm,labelwidth=2.0cm,labelindent=0pt]
\item[(Eff)] $\sum_{i \in N} \Psi_i(v) = v(N)$ for all $v \in \GN$.
\item[(Sym)] If $i, j \in N$ are $v$-interchangeable, i.e., $v(S \cup \{i\}) = v(S \cup \{j\})$ 
for every $S \subseteq N$ containing neither, then $\Psi_i(v) = \Psi_j(v)$.
\item[(Lin)] $\Psi$ is linear: $\Psi(\alpha v + \beta w) = \alpha \Psi(v) + \beta \Psi(w)$ for all $v, w \in \GN$ and $\alpha, \beta \in \R$.
\item[(Dum)] If $i \in N$ is a \emph{dummy player} (i.e., $v(S \cup \{i\}) = v(S)$ for every $S \subseteq N$), then $\Psi_i(v) = 0$.
\end{description}

The classical theorem of Shapley~\cite{shapley1953value} states that there is exactly one solution satisfying (Eff), (Sym), (Lin) and (Dum), namely the Shapley value $\Sh$.

\section{The $\varepsilon$-Dummy Axiom and the $\varepsilon$-Shapley Value}\label{sec-eps-shapley}

The Dummy axiom (Dum) demands that a player who contributes nothing also receives nothing.
 While this is a natural marginalist principle, it is in some contexts arguably too strong: 
 a member of a society who is unable to contribute may nonetheless be entitled to a share of the collective surplus.
 In order to accommodate this, we propose a one-parameter relaxation of (Dum):

\begin{description}[leftmargin=2.4cm,labelwidth=2.0cm,labelindent=0pt]
\item[($\eps$-Dum)] If $i \in N$ is a dummy player, then $\Psi_i(v) = \eps \cdot v(N)/n$.
\end{description}

When $\eps = 0$,  this reduces to the classical Dummy axiom. 
When $\eps = 1$, a dummy player receives exactly the average payoff $v(N)/n$.
 For $\eps \in (0,1)$, the dummy receives a positive fraction of that average, 
 providing the ``guaranteed minimum'' we sought.

For each $\eps \in \R$, define $\Feps \colon \GN \to \RN$ by
\[
\Fepsi(v) := (1 - \eps) \Shi(v) + \eps \cdot \frac{v(N)}{n}, \qquad i \in N.
\]
We call $\Feps$ the \emph{$\eps$-Shapley value}.
 Equivalently, $\Feps = (1-\eps)\Sh + \eps \cdot \ED$, exhibiting $\Feps$ as a convex combination of the marginalist Shapley value and the egalitarian equal-division solution.

\begin{remark}\label{rem-joosten}
The family $\{\Feps \mid \eps \in [0,1]\}$ coincides with the so-called \emph{egalitarian Shapley values} introduced by Joosten~\cite{joosten1996} and studied extensively, with several axiomatic characterizations: 
first by Joosten himself~\cite{joosten1996}, who used parameterized standardness for two-player games and consistency; 
then by van den Brink, Funaki and Ju~\cite{vandenbrink2013}, who used efficiency, linearity, anonymity and weak monotonicity;
and by Casajus and Huettner~\cite{casajus2013} via efficiency, additivity, desirability and the null-player-in-a-productive-environment property. 
The axiom \emph{($\eps$-Dum)} corresponds, after the reparameterization $\alpha = 1 - \eps$, to the \emph{$\alpha$-egalitarian null player property} attributed to Joosten in~\cite[footnote~2]{casajus2013}. 
The characterization below is the corresponding direct analogue of Shapley's original four-axiom theorem and is included for completeness; 
the substantive contribution of this paper begins in Section~\ref{sec-value-space}.
\end{remark}

\begin{theorem}\label{thm-eps-shapley}
Let $\eps \in \R$. 
There is a unique solution $\Psi \colon \GN \to \RN$ satisfying \emph{(Eff)}, \emph{(Sym)}, \emph{(Lin)} and \emph{($\eps$-Dum)}, namely $\Psi = \Feps$.
\end{theorem}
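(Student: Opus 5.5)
The argument follows Shapley's classical proof. We first check existence, then prove uniqueness by reducing to unanimity games.

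\emph{Existence.} We check that $\Feps$ satisfies all four axioms.
Linearity holds because $\Sh$ and $\ED$ are both linear.
Efficiency follows from $\sum_i \Shi(v) = v(N)$ and $\sum_i v(N)/n = v(N)$. Hence $\sum_i \Fepsi(v) = (1-\eps)v(N) + \eps v(N) = v(N)$, and this holds for every real $\eps$.
Symmetry holds because $\Sh$ is symmetric and $\ED$ gives every player the same payoff.
For ($\eps$-Dum), suppose $i$ is a dummy player. Then $\Shi(v)=0$, so $\Fepsi(v) = \eps\, v(N)/n$.

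\emph{Uniqueness.} Let $\Psi$ satisfy the four axioms. By (Lin) it suffices to show $\Psi(u_A) = \Feps(u_A)$ for each nonempty $A$, since the unanimity games form a basis of $\GN$. Fix such an $A$ with $|A| = a$.

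In $u_A$, every player $j \notin A$ is a dummy, because $A \subseteq S\cup\{j\}$ if and only if $A\subseteq S$. Since $u_A(N) = 1$, ($\eps$-Dum) gives $\Psi_j(u_A) = \eps/n$.

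Any two players $i,k \in A$ are interchangeable in $u_A$. Indeed, if $S$ contains neither of them, then neither $S\cup\{i\}$ nor $S\cup\{k\}$ contains $A$ when $a\ge 2$, and both contain $A$ when $a=1$ and $S \supseteq \emptyset$ (only relevant if $i=k$). So by (Sym) all players in $A$ receive a common value $x$.

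Efficiency then gives $a x + (n-a)\eps/n = 1$, so
\[
x = \frac{1}{a}\Bigl(1 - \frac{(n-a)\eps}{n}\Bigr) = \frac{1-\eps}{a} + \frac{\eps}{n}.
\]
This is exactly $\Fepsi(u_A) = (1-\eps)\cdot\frac1a + \eps\cdot\frac1n$ for $i\in A$. For $j \notin A$ we have $\Feps_j(u_A) = 0 + \eps/n$, which also matches.

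The case $A = N$ needs separate mention, since then there are no dummies. Symmetry together with efficiency gives $x = 1/n$, which agrees with the formula above at $a=n$.

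The only mildly delicate points are verifying interchangeability within $A$ and handling the edge cases $a=1$ and $a=n$. The algebra identifying $x$ with the $\Feps$ formula is routine, so I expect no real obstacle.
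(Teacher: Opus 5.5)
Your proof is correct, but for $\eps \neq 1$ it takes a different route from the paper.

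\textbf{The paper's route.} The paper checks existence exactly as you do. For uniqueness it splits into two cases.
\begin{itemize}
\item When $\eps \neq 1$, it sets $G = \frac{1}{1-\eps}\Psi - \frac{\eps}{1-\eps}\ED$, verifies that $G$ satisfies (Eff), (Sym), (Lin) and (Dum), and concludes $G = \Sh$ by Shapley's classical theorem.
\item When $\eps = 1$, this rescaling is unavailable. Only there does the paper argue directly on unanimity games, showing that $\Psi - \ED$ vanishes on every $u_A$.
\end{itemize}

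\textbf{Your route.} Your argument is essentially that direct $\eps = 1$ computation carried out for every $\eps$. Players outside $A$ are dummies and receive $\eps/n$. Symmetry gives a common value $x$ on $A$. Efficiency then pins down $x = (1-\eps)/a + \eps/n$.

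\textbf{What each buys.} Your version is uniform in $\eps$ and self-contained. It needs no case split and no appeal to Shapley's uniqueness theorem; at $\eps = 0$ it actually reproves that uniqueness half. The paper's reduction instead shows structurally that the $\eps$-axiom system is the affine image of Shapley's system under $\Psi \mapsto (\Psi - \eps\ED)/(1-\eps)$. The price is that $\eps = 1$ must be handled separately.

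\textbf{Two minor points.}
\begin{itemize}
\item Your aside about $a = 1$ is muddled but harmless. Interchangeability within $A$ is vacuous when $|A| = 1$. For distinct $i,k \in A$ and $S$ containing neither, neither $S \cup \{i\}$ nor $S \cup \{k\}$ contains $A$, which is all you need.
\item The case $A = N$ needs no separate treatment. Your general computation with $n - a = 0$ already yields $x = 1/n$.
\end{itemize}
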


\begin{proof}
We first verify that $\Feps$ satisfies all four axioms.

\smallskip
\textsc{(Eff).} 
Using the efficiency of the Shapley value, $\sum_{i \in N} \Shi(v) = v(N)$:
\[
\sum_{i \in N} \Fepsi(v) = (1-\eps)\sum_{i \in N} \Shi(v) + \eps \sum_{i \in N} \frac{v(N)}{n} = (1-\eps)v(N) + \eps \cdot v(N) = v(N).
\]

\smallskip
\textsc{(Sym).} 
If $i, j$  are $v$-interchangeable, then by the symmetry of the Shapley value, 
we get $\Shi(v) = \Sh_j(v)$, and consequently $\Fepsi(v) = F^\eps_j(v)$.

\smallskip
\textsc{(Lin).} 
Both $v \mapsto \Shi(v)$ and $v \mapsto v(N)/n$  are linear in $v$,
 hence so is their convex combination.

\smallskip
\textsc{($\eps$-Dum).} 
If $i$ is a dummy  player, then $\Shi(v) = 0$, so $\Fepsi(v) = \eps \cdot v(N)/n$.

\medskip

We now show uniqueness. 
Suppose $\Psi$ satisfies all four axioms. 
We treat the cases $\eps \neq 1$ and $\eps = 1$ separately.

\smallskip
\emph{Case $\eps \neq 1$.} 
Define $G \colon \GN \to \RN$ by
\[
G_i(v) := \frac{1}{1-\eps} \Psi_i(v) - \frac{\eps}{1-\eps} \cdot \frac{v(N)}{n}, \qquad i \in N.
\]
We show  that $G$  satisfies (Eff), (Sym), (Lin) and (Dum), so that by Shapley's theorem $G = \Sh$, 
whence rearranging gives $\Psi = \Feps$.
For (Eff), using the efficiency of $\Psi$:
\[
\sum_{i \in N} G_i(v) = \frac{1}{1-\eps} v(N) - \frac{\eps}{1-\eps} \cdot n \cdot \frac{v(N)}{n} = \frac{v(N) - \eps \cdot v(N)}{1-\eps} = v(N).
\]
For (Sym), this is immediate from the  symmetry of $\Psi$ and the constant term $v(N)/n$ being identical for all $i$.
For (Lin), $G$ is a linear combination of linear maps, hence linear.
For (Dum), if $i$ is a dummy, then $\Psi_i(v) = \eps \cdot v(N)/n$ by ($\eps$-Dum), 
so
\[
G_i(v) = \frac{1}{1-\eps} \cdot \eps \cdot \frac{v(N)}{n} - \frac{\eps}{1-\eps} \cdot \frac{v(N)}{n} = 0.
\]

\smallskip
\emph{Case $\eps = 1$.} 
Suppose $\Psi$ satisfies $(1$-Dum$)$ and the other three axioms. 
Define $T \colon \GN \to \RN$ by $T_i(v) := \Psi_i(v) - v(N)/n$.
 Then $T$ is linear (sum of linear maps), and we shall show that $T \equiv 0$.
Since $\{u_A \mid \emptyset \neq A \subseteq N\}$ is a basis of $\GN$ and $T$ is linear, 
it suffices to prove $T(u_A) = 0$ for every nonempty $A \subseteq N$.
Fix such an $A$. 
In the game $u_A$, every player $i \in N \setminus A$   is a dummy. 
By (1-Dum) applied to $\Psi$, 
we have $\Psi_i(u_A) = u_A(N)/n = 1/n$, so $T_i(u_A) = 0$ for $i \in N \setminus A$.
For any $i, j \in A$, players $i$ and $j$ are $u_A$-interchangeable
 (since adding either to a coalition $S \not\supseteq A \setminus \{i,j\}$ 
 has the same effect on $u_A$). 
 By (Sym), $T_i(u_A) = T_j(u_A)$.
Finally, summing $T$ over $N$ and using (Eff) for $\Psi$:
\[
\sum_{i \in N} T_i(u_A) = \sum_{i \in N} \Psi_i(u_A) - n \cdot \frac{u_A(N)}{n} = u_A(N) - u_A(N) = 0.
\]
Since $T_i(u_A) = 0$ for $i \notin A$ and $T_i(u_A)$ is constant for $i \in A$ with the total summing to zero, 
we conclude $T_i(u_A) = 0$ for all $i \in N$. 
Thus, $T \equiv 0$ and $\Psi = F^1$.
\end{proof}

\section{The Space of Value Maps and an Inner Product}\label{sec-value-space}

In this section, we lift the perspective from individual solutions to the entire collection of value maps.

\begin{definition}\label{def-V}
The \emph{space of value maps} is
\[
\mV = \{\Psi \mid \Psi \colon \GN \to \RN\},
\]
equipped with the pointwise vector-space operations, given by 
 $(\Psi + \Phi)(v) = \Psi(v) + \Phi(v)$ and $(\alpha \Psi)(v) = \alpha \cdot \Psi(v)$ 
 for $\Psi, \Phi \in \mV$, $v \in \GN$ 
 and $\alpha \in \R$.
\end{definition}

The space $\mV$ is infinite-dimensional. 
However, in cooperative game theory, virtually every solution under study is, in fact, linear in $v$: 
this is the case for the Shapley value, the Banzhaf value, the equal-division solution, 
the equal-surplus-division solution, the solidarity value, weighted Shapley values, and so on. 
We therefore restrict attention to the subspace of linear value maps:
\[
\mL = \{\Psi \in \mV \mid \Psi \text{ is linear in } v\}.
\]
By basic linear algebra, $\dim(\mL) = \dim(\GN) \cdot \dim(\RN) = (2^n - 1) \cdot n$, 
so $\mL$ is finite-dimensional.

The egalitarian Shapley line $\{\Feps \mid \eps \in \R\}$ lies entirely within $\mL$, 
since both $\Sh$ and $\ED$ are linear.

\subsection{The Harsanyi Inner Product}

To define a canonical inner product on $\mL$, we equip $\GN$ itself with a canonical inner product induced by the Harsanyi-dividend decomposition.

\begin{definition}\label{def-harsanyi-IP}
The \emph{Harsanyi inner product} on $\GN$ is
\[
\langle v, w \rangle_H = \sum_{\emptyset \neq A \subseteq N} h_v(A) \cdot h_w(A),
\]
where $h_v(A)$ and $h_w(A)$ are the Harsanyi dividends of $v$ and $w$, respectively.
\end{definition}

By construction, the unanimity basis $\{u_A \mid \emptyset \neq A \subseteq N\}$ is orthonormal with respect to $\langle \cdot, \cdot \rangle_H$. 
We call a basis $B$ of $\GN$ \emph{$H$-orthonormal} if it is orthonormal with respect to $\langle \cdot, \cdot \rangle_H$.

\subsection{The Induced Inner Product on \texorpdfstring{$\mL$}{L}}

We define an inner product on $\mL$ by summing the standard inner product on $\RN$ over the unanimity basis:

\begin{definition}\label{def-IP-on-L}
For $\Psi, \Phi \in \mL$, let
\[
\langle \Psi, \Phi \rangle_{\mL} = \sum_{\emptyset \neq A \subseteq N} \langle \Psi(u_A), \Phi(u_A) \rangle_{\RN},
\]
where $\langle \cdot, \cdot \rangle_{\RN}$ is the standard Euclidean inner product on $\RN$.
\end{definition}

\begin{lemma}\label{lem-IP-valid}
$\langle \cdot, \cdot \rangle_{\mL}$ is an inner product on $\mL$.
\end{lemma}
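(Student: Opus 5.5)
We verify the three defining properties of an inner product directly. The structural fact doing all the work is that a linear map on $\GN$ is determined by its values on a basis. Since $\{u_A \mid \emptyset \neq A \subseteq N\}$ is a basis, the map
\[
\Psi \longmapsto \big(\Psi(u_A)\big)_{\emptyset \neq A \subseteq N} \in (\RN)^{2^n-1}
\]
is a linear isomorphism from $\mL$ onto $(\RN)^{2^n-1}$. Under this identification, $\langle \cdot,\cdot\rangle_{\mL}$ is the pullback of the standard Euclidean inner product on $(\RN)^{2^n-1} \cong \R^{(2^n-1)n}$. The pullback of an inner product along an injective linear map is again an inner product, so the lemma follows. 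For transparency, we also check the axioms one by one.

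\textbf{Bilinearity.} Fix $\Phi$. For each $A$, the map $\Psi \mapsto \langle \Psi(u_A), \Phi(u_A)\rangle_{\RN}$ is linear in $\Psi$. This holds because the vector-space operations on $\mL$ are pointwise, so $(\alpha\Psi+\beta\Psi')(u_A) = \alpha\Psi(u_A)+\beta\Psi'(u_A)$, and the Euclidean inner product is bilinear. A finite sum of linear functionals is linear, which gives linearity in the first argument.

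\textbf{Symmetry.} This follows termwise from the symmetry of $\langle\cdot,\cdot\rangle_{\RN}$.

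\textbf{Positive definiteness.} First, $\langle \Psi,\Psi\rangle_{\mL} = \sum_A \|\Psi(u_A)\|^2 \ge 0$. If this sum is zero, then every term vanishes, so $\Psi(u_A)=0$ for all nonempty $A$. For an arbitrary $v\in\GN$, the Harsanyi decomposition $v=\sum_A h_v(A)\,u_A$ and the linearity of $\Psi$ give
\[
\Psi(v)=\sum_A h_v(A)\,\Psi(u_A)=0.
\]
Hence $\Psi=0$ in $\mL$.

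The only step with any content is positive definiteness. Its key point is that the unanimity games span $\GN$ and that $\Psi$ is linear. Without linearity the form would be degenerate on $\mV$, since any map vanishing on the unanimity games would have norm zero; this is exactly why the inner product is defined on $\mL$ and not on $\mV$. I expect no real obstacle beyond stating this clearly.
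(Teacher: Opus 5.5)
Your proof is correct and follows essentially the same route as the paper. Bilinearity and symmetry are inherited termwise from the Euclidean inner product, and positive definiteness follows because $\Psi(u_A)=0$ for all nonempty $A$, together with linearity and the fact that the unanimity games span $\mathcal{G}^N$, forces $\Psi \equiv 0$; your pullback-along-an-isomorphism framing is a compact restatement of this same argument.
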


\begin{proof}
Bilinearity and symmetry follow directly from the corresponding properties of $\langle \cdot, \cdot \rangle_{\RN}$.
For positive-definiteness: $\langle \Psi, \Psi \rangle_{\mL} = \sum_A \|\Psi(u_A)\|^2 \geq 0$, with equality 
if and only if 
$\Psi(u_A) = 0$ for all $A$. 
Since $\Psi$ is linear and $\{u_A\}$ spans $\GN$, this forces $\Psi \equiv 0$.
\end{proof}

\section{Basis Independence}\label{sec-basis-indep}

The inner product introduced in Definition~\ref{def-IP-on-L} is built using the unanimity basis. 
A natural question is whether this dependence is essential, or whether the same inner product would arise from any other natural basis of $\GN$. 
The following theorem provides the answer to this question: 
the inner product is invariant under any change to a basis that is orthonormal with respect to the Harsanyi inner product on $\GN$. 
In this sense, $\langle \cdot, \cdot \rangle_{\mL}$ is a canonical structure on $\mL$, 
induced intrinsically by the Harsanyi inner product, rather than an artifact of basis choice.

In practice, every computation in the rest of this paper will use the unanimity basis, since it is the most convenient. 
The basis-independence theorem below should therefore be seen as a conceptual consistency check, rather than as a computational tool.

\begin{theorem}[Basis Independence]\label{thm-basis-indep}
Let $B = \{v_1, \ldots, v_d\}$, with $d = 2^n - 1$, be any $H$-orthonormal basis of $\GN$. 
For all $\Psi, \Phi \in \mL$,
\[
\sum_{k=1}^d \langle \Psi(v_k), \Phi(v_k) \rangle_{\RN} = \langle \Psi, \Phi \rangle_{\mL}.
\]
\end{theorem}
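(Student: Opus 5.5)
The plan is to reduce the claim to the orthogonality of a change-of-basis matrix. The key observation is that both sides are traces of the same operator, taken in two orthonormal bases.

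\emph{Step 1: express the new basis in the unanimity basis.} Write each $v_k = \sum_{A} h_{v_k}(A)\, u_A$, and let $Q$ be the $d \times d$ matrix with entries $Q_{kA} = h_{v_k}(A)$. Since $B$ is $H$-orthonormal and $\langle \cdot,\cdot\rangle_H$ is the Euclidean product of Harsanyi-dividend vectors, we get
\[
\sum_A Q_{kA} Q_{lA} = \langle v_k, v_l \rangle_H = \delta_{kl},
\]
that is, $QQ^{\top} = I$. Because $Q$ is square, this gives $Q^{\top}Q = I$ as well, i.e.\ $\sum_k Q_{kA}Q_{kA'} = \delta_{AA'}$. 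This is the only point that uses finite dimension and squareness, so it is the small step that needs care: row-orthonormality alone would not suffice.

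\emph{Step 2: expand by linearity.} By linearity of $\Psi$ and $\Phi$, we have $\Psi(v_k) = \sum_A Q_{kA}\Psi(u_A)$, and similarly for $\Phi$. Bilinearity of $\langle\cdot,\cdot\rangle_{\RN}$ then gives
\[
\sum_k \langle \Psi(v_k), \Phi(v_k)\rangle_{\RN} = \sum_{A,A'} \Big(\sum_k Q_{kA}Q_{kA'}\Big) \langle \Psi(u_A), \Phi(u_{A'})\rangle_{\RN}.
\]

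\emph{Step 3: collapse the double sum.} Substituting $\sum_k Q_{kA}Q_{kA'} = \delta_{AA'}$ from Step~1 leaves only the diagonal terms. The sum becomes $\sum_A \langle \Psi(u_A),\Phi(u_A)\rangle_{\RN}$, which is $\langle\Psi,\Phi\rangle_{\mL}$ by Definition~\ref{def-IP-on-L}.

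Equivalently, and as an optional reformulation, one can view $\Psi$ and $\Phi$ as linear maps from $(\GN,\langle\cdot,\cdot\rangle_H)$ to $\RN$. Then both sides equal the trace of $\Phi^{*}\Psi$, which is independent of the orthonormal basis used to compute it. There is no real obstacle beyond noting that $H$-orthonormality translates into orthogonality of $Q$, including the column relations used in Step~3.
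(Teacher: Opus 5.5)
Your proof is correct and matches the paper's argument essentially step for step. The paper likewise writes $v_k=\sum_A M_{kA}u_A$ and derives $MM^{\top}=I$ from $H$-orthonormality, upgrades this to $M^{\top}M=I$ by squareness, and collapses the double sum using linearity of $\Psi$ and $\Phi$; your trace-of-$\Phi^{*}\Psi$ remark is a valid conceptual restatement but not needed.
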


\begin{proof}
Express each $v_k$ in the unanimity basis:
 $v_k := \sum_{A} M_{kA} \cdot u_A$, where $A$ ranges over nonempty subsets of $N$. 
 The matrix $M = (M_{kA})$ is orthogonal. 
  Indeed, since both $B$ and the unanimity basis are $H$-orthonormal,
\[
\delta_{k\ell} = \langle v_k, v_\ell \rangle_H = \sum_{A,B} M_{kA} M_{\ell B} \langle u_A, u_B \rangle_H = \sum_A M_{kA} M_{\ell A},
\]
so $M M^T = I$,  hence  also $M^T M = I$, i.e., $\sum_k M_{kA} M_{kB} = \delta_{AB}$.
Since $\Psi, \Phi \in \mL$ are linear, $\Psi(v_k) = \sum_A M_{kA} \Psi(u_A)$ and $\Phi(v_k) = \sum_B M_{kB} \Phi(u_B)$. 
Therefore,
\begin{align*}
\sum_{k=1}^d \langle \Psi(v_k), \Phi(v_k) \rangle_{\RN}
 &= \sum_{k=1}^d \left\langle \sum_A M_{kA} \Psi(u_A), \sum_B M_{kB} \Phi(u_B) \right\rangle_{\RN} \\
&= \sum_{A,B} \left( \sum_{k=1}^d M_{kA} M_{kB} \right) \langle \Psi(u_A), \Phi(u_B) \rangle_{\RN} \\
&= \sum_{A,B} \delta_{AB} \langle \Psi(u_A), \Phi(u_B) \rangle_{\RN} \\
&= \sum_A \langle \Psi(u_A), \Phi(u_A) \rangle_{\RN} \\
&= \langle \Psi, \Phi \rangle_{\mL}. \qedhere
\end{align*}
  \end{proof}

\begin{remark}
The proof depends crucially on the linearity of $\Psi$ and $\Phi$:
 it is the linearity that allows the matrix $M$ to be pulled outside the value map. 
 For nonlinear value maps, the analogous statement fails, 
 and the inner product genuinely depends on the choice of basis.
\end{remark}

\section{Best Approximation by an $\varepsilon$-Shapley Value}\label{sec-best-approx}

In this section, we use the inner-product structure on $\mL$ to find, 
for any linear value map $\Psi \in \mL$, the $\eps$-Shapley value that is closest to $\Psi$.

\begin{definition}
Let $\Psi \in \mL$. 
The \emph{optimal $\eps$ for $\Psi$} is
\[
\eps^*(\Psi) = \arg\min_{\eps \in \R} \|\Feps - \Psi\|_{\mL},
\]
where $\|\cdot\|_{\mL}$ is the norm induced by $\langle \cdot, \cdot \rangle_{\mL}$.
\end{definition}

\begin{theorem}\label{thm-optimal-eps}
For every $\Psi \in \mL$, the optimal $\eps$ exists, is unique, and is given by
\[
\eps^*(\Psi) = \frac{\langle \ED - \Sh, \Psi - \Sh \rangle_{\mL}}{\langle \ED - \Sh, \ED - \Sh \rangle_{\mL}}.
\]
\end{theorem}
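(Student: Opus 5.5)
The plan is to reduce the statement to one-dimensional least squares in the inner-product space $(\mL, \langle\cdot,\cdot\rangle_{\mL})$ supplied by Lemma~\ref{lem-IP-valid}. Write $D := \ED - \Sh$ and $\Delta := \Psi - \Sh$. Since $\Feps = \Sh + \eps D$, we have $\Feps - \Psi = \eps D - \Delta$. So we are minimizing, over $\eps \in \R$, the squared norm
\[
f(\eps) := \|\eps D - \Delta\|_{\mL}^2 = \eps^2 \langle D, D\rangle_{\mL} - 2\eps \langle D, \Delta\rangle_{\mL} + \langle \Delta, \Delta\rangle_{\mL}.
\]
Minimizing the norm is the same as minimizing its square, because the square root is strictly increasing on $[0,\infty)$.

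The key step is to check that $D \neq 0$ in $\mL$, so that $\langle D, D\rangle_{\mL} > 0$ and $f$ is a strictly convex quadratic. By positive-definiteness, it suffices to exhibit one unanimity game on which $\ED$ and $\Sh$ differ. Take $A = \{i\}$ a singleton. Then $\Sh_i(u_{\{i\}}) = 1$, while $\ED_i(u_{\{i\}}) = u_{\{i\}}(N)/n = 1/n$. These differ because $n \geq 2$. This is the only point where any hypothesis beyond the inner-product structure is used, and it is where I expect the argument could go wrong if overlooked: for $n = 1$ the line would degenerate. Here the standing assumption $n \geq 2$ handles it.

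Given $\langle D,D\rangle_{\mL} > 0$, $f$ is a quadratic with positive leading coefficient. It therefore has a unique global minimizer, found by setting
\[
f'(\eps) = 2\eps\langle D,D\rangle_{\mL} - 2\langle D,\Delta\rangle_{\mL} = 0,
\]
which gives $\eps^* = \langle D,\Delta\rangle_{\mL}/\langle D,D\rangle_{\mL}$. This is exactly the claimed formula.

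Alternatively, one can phrase this as orthogonal projection. The minimizer is characterized by $\Feps - \Psi \perp D$. Uniqueness also follows from the Pythagorean identity
\[
f(\eps) = f(\eps^*) + (\eps - \eps^*)^2 \|D\|_{\mL}^2,
\]
which I would record, since the next section on the Pythagorean decomposition will likely reuse it.
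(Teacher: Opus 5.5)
Your proposal is correct and follows essentially the same route as the paper: it writes $\Feps = \Sh + \eps(\ED - \Sh)$, minimizes the squared norm as a one-dimensional projection of $\Psi - \Sh$ onto the line spanned by $\ED - \Sh$, and uses $\ED \neq \Sh$ for $n \geq 2$ to keep the denominator nonzero. Your singleton-unanimity check that $\ED \neq \Sh$ and your strict-convexity argument for uniqueness make explicit what the paper only asserts.
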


\begin{proof}
We can write $\Feps = \Sh + \eps(\ED - \Sh)$. 
Minimizing $\|\Sh + \eps(\ED - \Sh) - \Psi\|^2_{\mL}$ over $\eps$ is the projection of $\Psi - \Sh$ onto the line spanned by $\ED - \Sh$. 
Setting the derivative with respect to $\eps$ to zero,
\[
\frac{d}{d\eps} \|\Sh + \eps(\ED - \Sh) - \Psi\|^2_{\mL} = 2 \langle \ED - \Sh, \, \Sh + \eps(\ED - \Sh) - \Psi \rangle_{\mL} = 0,
\]
yields the stated formula. 
The denominator is nonzero, since $\ED \neq \Sh$ for $n \geq 2$.
\end{proof}

By Theorem~\ref{thm-basis-indep}, 
$\eps^*(\Psi)$ is well-defined and may be computed using any $H$-orthonormal basis. 
We compute it explicitly using the unanimity basis.

\subsection{Computing the Denominator}

For $|A| = a$, the vector $(\ED - \Sh)(u_A) \in \RN$ has components
\[
(\ED - \Sh)_i(u_A) = 
\begin{cases}
 \frac{1}{n} - \frac{1}{a} & \text{if } i \in A, \\ \frac{1}{n} & \text{if } i \notin A. 
 \end{cases}
\]
Therefore,
\begin{align*}
\|(\ED - \Sh)(u_A)\|^2 &= a \left(\frac{1}{n} - \frac{1}{a}\right)^2 + (n-a)\left(\frac{1}{n}\right)^2 \\
&= a \left( \frac{1}{n^2} - \frac{2}{na} + \frac{1}{a^2} \right) + \frac{n-a}{n^2} \\
&= \frac{a}{n^2} - \frac{2}{n} + \frac{1}{a} + \frac{n-a}{n^2} \\
&= \frac{1}{n} - \frac{2}{n} + \frac{1}{a} = \frac{1}{a} - \frac{1}{n}.
\end{align*}
Summing over all nonempty subsets,
  and using that there are $\binom{n}{a}$ subsets of size $a$,
\[
D_n := \langle \ED - \Sh, \ED - \Sh \rangle_{\mL} = \sum_{a=1}^n \binom{n}{a}\left(\frac{1}{a} - \frac{1}{n}\right) = H_n - \frac{2^n - 1}{n},
\]
where $H_n := \sum_{a=1}^n \binom{n}{a}/a$.

\begin{lemma}\label{lem-Dn-values}
For small $n$, the constant $D_n$ takes the values
\[
D_2 = 1, \qquad D_3 = \tfrac{5}{2}, \qquad D_4 = \tfrac{29}{6}, \qquad D_5 = \tfrac{103}{12}.
\]
\end{lemma}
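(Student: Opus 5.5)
The plan is to evaluate the closed form $D_n = H_n - (2^n-1)/n$, with $H_n = \sum_{a=1}^n \binom{n}{a}/a$, for $n = 2,3,4,5$. This formula was derived just before the lemma by summing $\|(\ED-\Sh)(u_A)\|^2 = 1/a - 1/n$ over all nonempty $A$. No further theory is needed. The only care required is exact rational arithmetic. As a cross-check, I will also evaluate the equivalent form $D_n = \sum_{a=1}^{n-1}\binom{n}{a}(1/a - 1/n)$; the $a=n$ term vanishes.

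First I compute $H_n$ in each case:
\[
H_2 = 2 + \tfrac12 = \tfrac52, \qquad H_3 = 3 + \tfrac32 + \tfrac13 = \tfrac{29}{6}, \qquad H_4 = 4 + 3 + \tfrac43 + \tfrac14 = \tfrac{103}{12},
\]
and $H_5 = 5 + 5 + \tfrac{10}{3} + \tfrac54 + \tfrac15 = \tfrac{887}{60}$.

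Next I subtract $(2^n-1)/n$:
\[
D_2 = \tfrac52 - \tfrac32 = 1, \qquad D_3 = \tfrac{29}{6} - \tfrac{7}{3} = \tfrac{15}{6} = \tfrac52, \qquad D_4 = \tfrac{103}{12} - \tfrac{15}{4} = \tfrac{58}{12} = \tfrac{29}{6}.
\]
For $n=5$: $D_5 = \tfrac{887}{60} - \tfrac{31}{5} = \tfrac{887 - 372}{60} = \tfrac{515}{60} = \tfrac{103}{12}$.

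These match the claimed values. They also reveal the pattern $D_{n+1} = H_n$ for the listed cases. I would note this pattern as a consistency check but not rely on it.

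There is no real obstacle here; the only risk is an arithmetic slip in $H_5$. To guard against it, I will verify $D_5$ independently through the truncated sum:
\[
D_5 = 5\cdot\tfrac45 + 10\cdot\tfrac{3}{10} + 10\cdot\tfrac{2}{15} + 5\cdot\tfrac{1}{20} = 4 + 3 + \tfrac43 + \tfrac14 = \tfrac{103}{12}.
\]
This agrees with the first computation.
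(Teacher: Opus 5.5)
Your proposal is correct and follows the paper's own route of evaluating $D_n = H_n - (2^n-1)/n$ directly; the paper writes out only $n=2,3$ and says the rest ``follow analogously,'' whereas you carry out $n=4,5$ explicitly, and your arithmetic (including $H_5 = 887/60$ and the truncated-sum cross-check) is right. The pattern $D_{n+1} = H_n$ you noticed holds for every $n$: comparing the identity $D_n = \sum_{j=1}^{n-1} 2^j/j + 1/n - \sum_{j=1}^{n} 1/j$ from the proof of Lemma~\ref{lem-Dn-asymptotic} with Lemma~\ref{lem-Hn-formula} applied to $H_{n-1}$ gives $D_n = H_{n-1}$.
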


\begin{proof}
Direct computation. 
For $n = 2$: $H_2 = 2 + 1/2 = 5/2$, so $D_2 = 5/2 - 3/2 = 1$. 
For $n = 3$: $H_3 = 3 + 3/2 + 1/3 = 29/6$, so $D_3 = 29/6 - 7/3 = 5/2$. 
The remaining values follow analogously.
\end{proof}

\subsection{Closed-Form Computations of $\eps^*$}

We now compute $\eps^*(\Psi)$ for several natural choices of $\Psi$.

\begin{proposition}\label{prop-trivial-cases}
$\eps^*(\Sh) = 0$ and $\eps^*(\ED) = 1$.
\end{proposition}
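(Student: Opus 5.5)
The plan is to substitute directly into the closed-form expression of Theorem~\ref{thm-optimal-eps},
\[
\eps^*(\Psi) = \frac{\langle \ED - \Sh, \Psi - \Sh \rangle_{\mL}}{\langle \ED - \Sh, \ED - \Sh \rangle_{\mL}},
\]
which holds for every $\Psi \in \mL$. Both $\Sh$ and $\ED$ lie in $\mL$, since they are linear, so the theorem applies to each. Existence and uniqueness of the minimizer are already part of that theorem, so no separate argument is needed for them.

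For $\Psi = \Sh$, the second argument of the numerator is $\Sh - \Sh = 0$. By bilinearity the numerator vanishes. The denominator is $D_n > 0$: it is nonzero because $\ED \neq \Sh$ for $n \geq 2$, as already noted in the proof of Theorem~\ref{thm-optimal-eps}, and its positivity also follows from the computation $D_n = \sum_{a=1}^n \binom{n}{a}(1/a - 1/n)$, where the terms with $a < n$ are positive and the term with $a = n$ is zero. Hence $\eps^*(\Sh) = 0$.

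For $\Psi = \ED$, the numerator becomes $\langle \ED - \Sh, \ED - \Sh \rangle_{\mL} = D_n$, which equals the denominator. Hence $\eps^*(\ED) = 1$.

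A more conceptual check gives the same result. Since $F^0 = \Sh$ and $F^1 = \ED$, the distance $\|\Feps - \Psi\|_{\mL}$ attains the value $0$ at $\eps = 0$ and at $\eps = 1$ respectively. By uniqueness of the minimizer, these are the optimal parameters.

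There is no real obstacle here. The only point needing care is that the denominator is nonzero, and this was already settled earlier in the paper.
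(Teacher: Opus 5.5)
Your proof is correct and follows the paper's own argument: substitute into the formula of Theorem~\ref{thm-optimal-eps}. The numerator vanishes for $\Psi = \Sh$ and equals the denominator $D_n > 0$ for $\Psi = \ED$. Your extra check, via $F^0 = \Sh$, $F^1 = \ED$ and uniqueness of the minimizer, is also valid.
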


\begin{proof}
For $\Psi = \Sh$, the numerator is $\langle \ED - \Sh, 0 \rangle_{\mL} = 0$. 
For $\Psi = \ED$, the numerator equals the denominator. \qedhere
\end{proof}

This is the expected sanity check: $F^0 = \Sh$ projects to itself, 
and $F^1 = \ED$ projects to itself.

\subsubsection*{The Banzhaf Value}

The \emph{Banzhaf value}~\cite{banzhaf1965weighted} $\Bz$ is given by
\[
\Bzi(v) := \frac{1}{2^{n-1}} \sum_{C \subseteq N \setminus \{i\}} (v(C \cup \{i\}) - v(C)).
\]
On unanimity games, 
we have
\[
\Bzi(u_A) = 
\begin{cases}
 \frac{1}{2^{a-1}} & \text{if } i \in A, \\ 0 & \text{if } i \notin A,
 \end{cases}
\qquad a = |A|.
\]

\begin{proposition}\label{prop-eps-banzhaf}
\[
\eps^*(\Bz) = 1 + \frac{2 - 2 \cdot (3/2)^{n-1}}{D_n}.
\]
\end{proposition}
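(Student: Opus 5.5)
The plan is to apply Theorem~\ref{thm-optimal-eps} directly, computing everything in the unanimity basis. The denominator is already known to be $D_n$. The only real work is the numerator $\langle \ED - \Sh, \Bz - \Sh\rangle_{\mL}$. To match the shape of the claimed formula $1 + (\cdots)/D_n$, I will split
\[
\Bz - \Sh = (\Bz - \ED) + (\ED - \Sh).
\]
Then the numerator equals $D_n + \langle \ED - \Sh, \Bz - \ED\rangle_{\mL}$, and it suffices to show
\[
\langle \ED - \Sh, \Bz - \ED\rangle_{\mL} = 2 - 2\cdot(3/2)^{n-1}.
\]

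First I compute the per-coalition contribution for a fixed $A$ with $|A| = a$.

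\begin{itemize}
\item The components of $(\ED-\Sh)(u_A)$ are $\tfrac1n - \tfrac1a$ on $A$ and $\tfrac1n$ off $A$, as computed for the denominator.
\item The components of $(\Bz - \ED)(u_A)$ are $2^{-(a-1)} - \tfrac1n$ on $A$ and $-\tfrac1n$ off $A$.
\end{itemize}

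Their Euclidean inner product is
\[
a\Bigl(\tfrac1n-\tfrac1a\Bigr)\Bigl(\tfrac{1}{2^{a-1}}-\tfrac1n\Bigr) - \tfrac{n-a}{n^2}.
\]
Expanding, the terms $-\tfrac{a}{n^2} + \tfrac1n - \tfrac{n-a}{n^2}$ cancel. This leaves the clean expression $\tfrac{a-n}{n\,2^{a-1}}$, which vanishes for $a = n$, as it should.

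Next I sum over all nonempty $A$, grouping by size:
\[
\langle \ED - \Sh, \Bz - \ED\rangle_{\mL} = \frac{2}{n}\sum_{a=1}^n \binom{n}{a}(a-n)\,2^{-a}.
\]
This evaluates via the binomial identities, both taken at $x = 1/2$:
\[
\sum_a \binom{n}{a} a x^a = n x(1+x)^{n-1}, \qquad \sum_{a\ge 1}\binom{n}{a}x^a = (1+x)^n - 1.
\]
These give $\tfrac n2 (3/2)^{n-1}$ and $(3/2)^n - 1$ respectively. Hence the sum equals
\[
\tfrac{2}{n}\Bigl[\tfrac n2(3/2)^{n-1} - n\bigl((3/2)^n - 1\bigr)\Bigr] = (3/2)^{n-1} - 2(3/2)^n + 2 = 2 - 2(3/2)^{n-1}.
\]
Dividing the full numerator by $D_n$ then yields the stated formula.

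The main obstacle is simply bookkeeping in the per-coalition expansion: the cancellation of the $1/n$ and $1/n^2$ terms must come out exactly. As a sanity check I will verify the result at $n = 2$, where $\Bz = \Sh$. There the formula gives $\eps^* = 1 + (2-3)/1 = 0$, as required.
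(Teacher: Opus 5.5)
Your proof is correct and follows essentially the same route as the paper: apply Theorem~\ref{thm-optimal-eps}, compute the numerator coalition by coalition in the unanimity basis, and sum using $\sum_a \binom{n}{a} a x^a = nx(1+x)^{n-1}$ and $\sum_{a \geq 1}\binom{n}{a}x^a = (1+x)^n - 1$ at $x = 1/2$. The only difference is bookkeeping. You split $\Bz - \Sh = (\Bz - \ED) + (\ED - \Sh)$ up front, so $D_n$ appears immediately and the per-coalition remainder collapses to $(a-n)/(n\,2^{a-1})$; the paper instead expands $\langle \ED - \Sh, \Bz - \Sh\rangle_{\mL}$ into four sums and only at the end recognizes $H_n - (2^n-1)/n = D_n$.
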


\begin{proof}
We compute the numerator $\langle \ED - \Sh, \Bz - \Sh \rangle_{\mL}$. 
 For fixed $A$ with $|A| = a$,
\begin{align*}
(\Bz - \Sh)_i(u_A) &= 
\begin{cases}
 \frac{1}{2^{a-1}} - \frac{1}{a} & \text{if } i \in A, \\ 0 & \text{if } i \notin A.  
 \end{cases}
\end{align*}
Therefore,
\[
\langle (\ED - \Sh)(u_A), (\Bz - \Sh)(u_A) \rangle = a \cdot \left(\frac{1}{n} - \frac{1}{a}\right)\left(\frac{1}{2^{a-1}} - \frac{1}{a}\right).
\]
Expanding, 
we get
\[
a\left(\frac{1}{n} - \frac{1}{a}\right)\left(\frac{1}{2^{a-1}} - \frac{1}{a}\right) = \frac{a}{n \cdot 2^{a-1}} - \frac{1}{n} - \frac{1}{2^{a-1}} + \frac{1}{a}.
\]
Summing $\binom{n}{a}$ times this over $a = 1, \ldots, n$, 
we obtain four sums:
\[
S_1 = \frac{1}{n}\sum_{a=1}^n \binom{n}{a}\frac{a}{2^{a-1}}, \quad
S_2 = \frac{2^n - 1}{n}, \quad
S_3 = \sum_{a=1}^n \binom{n}{a}\frac{1}{2^{a-1}}, \quad
S_4 = H_n.
\]
For $S_1$, 
using $a \binom{n}{a} = n \binom{n-1}{a-1}$,
\[
\sum_{a=1}^n \binom{n}{a}\frac{a}{2^a} = \frac{n}{2} \sum_{k=0}^{n-1} \binom{n-1}{k}\frac{1}{2^k} = \frac{n}{2}\left(\frac{3}{2}\right)^{n-1},
\]
so $S_1 = (2/n) \cdot (n/2)(3/2)^{n-1} = (3/2)^{n-1}$.
For $S_3$,
\[
S_3 = 2 \sum_{a=1}^n \binom{n}{a}\frac{1}{2^a} = 2\left( \left(\frac{3}{2}\right)^n - 1 \right).
\]
Combining,
\[
\langle \ED - \Sh, \Bz - \Sh \rangle_{\mL} = S_1 - S_2 - S_3 + S_4 = \left(\frac{3}{2}\right)^{n-1} - \frac{2^n - 1}{n} - 2\left(\frac{3}{2}\right)^n + 2 + H_n.
\]
Since $\left(\frac{3}{2}\right)^{n-1} - 2 \cdot \left(\frac{3}{2}\right)^n = -2 \cdot \left(\frac{3}{2}\right)^{n-1}$, and $H_n - (2^n-1)/n = D_n$,
 we obtain 
\[
\langle \ED - \Sh, \Bz - \Sh \rangle_{\mL} = D_n + 2 - 2\left(\frac{3}{2}\right)^{n-1}.
\]
Dividing by $D_n$ gives the stated formula.
\end{proof}

\begin{corollary}\label{cor-banzhaf-values}
For small $n$, the optimal $\eps^*(\Bz)$ takes the following values:
\[
\eps^*(\Bz)\Big|_{n=2} = 0, \qquad \eps^*(\Bz)\Big|_{n=3} = 0, \qquad \eps^*(\Bz)\Big|_{n=4} = \tfrac{1}{58}, \qquad \eps^*(\Bz)\Big|_{n=5} = \tfrac{11}{206}.
\]
\end{corollary}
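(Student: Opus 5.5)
The plan is to substitute $n = 2, 3, 4, 5$ directly into the closed form of Proposition~\ref{prop-eps-banzhaf},
\[
\eps^*(\Bz) = 1 + \frac{2 - 2\,(3/2)^{n-1}}{D_n},
\]
taking $D_n$ from Lemma~\ref{lem-Dn-values}. No new structural input is needed. The only work is exact rational arithmetic, plus a check that $D_4$ and $D_5$ (stated there as ``analogously'') are correct.

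First I would tabulate the numerator $2 - 2(3/2)^{n-1}$:
\begin{itemize}
\item for $n=2$ it equals $2 - 3 = -1$;
\item for $n=3$ it equals $2 - 9/2 = -5/2$;
\item for $n=4$ it equals $2 - 27/4 = -19/4$;
\item for $n=5$ it equals $2 - 81/8 = -65/8$.
\end{itemize}
Next I would confirm the denominators via $D_n = H_n - (2^n-1)/n$.
\begin{itemize}
\item For $n=4$: $H_4 = 4 + 3 + 4/3 + 1/4 = 103/12$, so $D_4 = 103/12 - 15/4 = 58/12 = 29/6$.
\item For $n=5$: $H_5 = 5 + 5 + 10/3 + 5/4 + 1/5 = primarily$ computed over the common denominator $60$ as $(600 + 200 + 75 + 12)/60 = 887/60$, so $D_5 = 887/60 - 31/5 = 515/60 = 103/12$.
\end{itemize}
Both values agree with Lemma~\ref{lem-Dn-values}.

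Finally I would form the ratios:
\begin{itemize}
\item $n=2$: $1 + (-1)/1 = 0$.
\item $n=3$: $1 + (-5/2)/(5/2) = 0$.
\item $n=4$: $1 - (19/4)(6/29) = 1 - 57/58 = 1/58$.
\item $n=5$: $1 - (65/8)(12/103) = 1 - 195/206 = 11/206$.
\end{itemize}
These are exactly the four claimed values.

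There is no real obstacle here. The only place where an error could slip in is the arithmetic for $D_5$, whose derivation was left implicit earlier. As an independent sanity check for $n=2$, I would note that $\Bz = \Sh$ on two-player games, which forces $\eps^*(\Bz) = 0$ and agrees with the computation.
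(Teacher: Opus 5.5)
Your proposal is correct and follows the paper's proof: substitute $n=2,3,4,5$ into the closed form of Proposition~\ref{prop-eps-banzhaf} using the values $D_n$ from Lemma~\ref{lem-Dn-values}, and your arithmetic is right. Your explicit check of $D_4 = 29/6$ and $D_5 = 103/12$, which the paper leaves as ``analogously,'' is a small bonus; just delete the stray word ``primarily'' in your $H_5$ line.
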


\begin{proof}
For $n = 2$: $D_2 = 1$ and $2 - 2(3/2) = -1$, giving $\eps^* = 1 - 1 = 0$.
For $n = 3$: $D_3 = 5/2$ and $2 - 2(3/2)^2 = -5/2$, giving $\eps^* = 1 - 1 = 0$.
For $n = 4$: $D_4 = 29/6$ and $2 - 2(3/2)^3 
= -19/4$, 
giving $\eps^* = 1 - (19/4)/(29/6) = 1 - 57/58 = 1/58$.
For $n = 5$: $D_5 = 103/12$ and furthermore $2 - 2(3/2)^4 = 2 - 81/8 = -65/8$, giving $\eps^* = 1 - (65/8)/(103/12) = 1 - 195/206 = 11/206$. \qedhere
\end{proof}

\begin{remark}
For $n = 2$ and $n = 3$, the projection of the Banzhaf value onto the egalitarian Shapley line in fact lands exactly on the Shapley value. 
For $n = 2$, this is unsurprising since $\Bz \equiv \Sh$ on two-player games. 
For $n = 3$, the coincidence is more delicate: the Banzhaf value differs from the Shapley value, but the difference is $\langle \cdot, \cdot \rangle_{\mL}$-orthogonal to the line $\{\Feps\}$. 
From $n = 4$ onward, $\eps^*(\Bz)$ becomes positive but remains small, reflecting that the Banzhaf value is structurally close to, but slightly more egalitarian than, the Shapley value.
\end{remark}

\subsubsection*{The Equal-Surplus-Division Value}

The \emph{equal-surplus-division (ESD) value}~\cite{driessen1991} is given by
\[
\ESDi(v) := v(\{i\}) + \frac{1}{n}\left( v(N) - \sum_{j \in N} v(\{j\}) \right).
\]

\begin{proposition}\label{prop-eps-ESD}
\[
\eps^*(\ESD) = 1 - \frac{n-1}{D_n}.
\]
\end{proposition}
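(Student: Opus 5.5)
The plan is to apply Theorem~\ref{thm-optimal-eps} directly and compute the numerator $\langle \ED - \Sh, \ESD - \Sh\rangle_{\mL}$ coalition by coalition in the unanimity basis, exactly as in the Banzhaf computation. The decisive first step is to evaluate $\ESD$ on unanimity games, split by the size $a = |A|$.

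\emph{Case $a \geq 2$.} Every singleton worth vanishes, $u_A(\{i\}) = 0$, while $u_A(N) = 1$. Hence $\ESDi(u_A) = 1/n$ for all $i$, so $\ESD(u_A) = \ED(u_A)$.

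\emph{Case $a = 1$, $A = \{j\}$.} Here $u_A(\{i\}) = \delta_{ij}$, so $\sum_{k} u_A(\{k\}) = 1 = u_A(N)$ and the surplus term vanishes. This gives $\ESDi(u_A) = \delta_{ij} = \Shi(u_{\{j\}})$, so $\ESD(u_A) = \Sh(u_A)$.

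Consequently $(\ESD - \Sh)(u_A)$ equals $(\ED - \Sh)(u_A)$ when $|A| \geq 2$ and equals $0$ when $|A| = 1$. The numerator is therefore the denominator sum with the $a=1$ stratum removed. Using the already computed norm $\|(\ED-\Sh)(u_A)\|^2 = 1/a - 1/n$, we get
\[
\langle \ED - \Sh, \ESD - \Sh\rangle_{\mL} = \sum_{a=2}^n \binom{n}{a}\Big(\frac1a - \frac1n\Big) = D_n - n\Big(1 - \frac1n\Big) = D_n - (n-1).
\]
Dividing by $D_n$ yields $\eps^*(\ESD) = 1 - (n-1)/D_n$.

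There is no real obstacle here. The only point needing care is the singleton case: one must check that the surplus term vanishes there, so that $\ESD$ reproduces $\Sh$ rather than $\ED$ on that stratum.
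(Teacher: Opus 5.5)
Your proof is correct and follows the paper's argument essentially verbatim: you evaluate $\ESD$ on unanimity games to show that $\ESD - \Sh$ vanishes on singletons and coincides with $\ED - \Sh$ whenever $|A| \geq 2$, so the numerator in Theorem~\ref{thm-optimal-eps} is $D_n - (n-1)$. Your check that the surplus term vanishes in the singleton case is the same verification the paper performs.
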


\begin{proof}
For the singleton unanimity game $u_{\{i_0\}}$ we have $u_{\{i_0\}}(\{i\}) = \delta_{i,i_0}$ and $u_{\{i_0\}}(N) = 1$, so
\[
\ESDi(u_{\{i_0\}}) = \delta_{i,i_0} + \frac{1}{n}\left(1 - 1\right) = \delta_{i,i_0} = \Shi(u_{\{i_0\}}).
\]
Hence, 
$(\ESD - \Sh)(u_{\{i_0\}}) = 0$.
On $u_A$ for $|A| \geq 2$, we have $u_A(\{i\}) = 0$ for all $i$, and $u_A(N) = 1$, and 
therefore it holds that $\ESDi(u_A) = 1/n = \EDi(u_A)$. 
Thus, $(\ESD - \Sh)(u_A) = (\ED - \Sh)(u_A)$ for $|A| \geq 2$.
It follows that
\[
\langle \ED - \Sh, \ESD - \Sh \rangle_{\mL} = \sum_{|A| \geq 2} \|(\ED - \Sh)(u_A)\|^2 = D_n - \sum_{a=1}^1 \binom{n}{a}\left(\frac{1}{a} - \frac{1}{n}\right) = D_n - (n - 1).
\]
Dividing by $D_n$ gives the result.
\end{proof}

\begin{corollary}\label{cor-ESD-values}
For small $n$, we have the following:
\[
\eps^*(\ESD)\Big|_{n=2} = 0, \quad \eps^*(\ESD)\Big|_{n=3} = \tfrac{1}{5}, \quad \eps^*(\ESD)\Big|_{n=4} = \tfrac{11}{29}, \quad \eps^*(\ESD)\Big|_{n=5} = \tfrac{55}{103}.
\]
Moreover, $\eps^*(\ESD) \to 1$ as $n \to \infty$.
\end{corollary}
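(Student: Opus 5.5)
The plan is to substitute the values of $D_n$ from Lemma~\ref{lem-Dn-values} into the closed form of Proposition~\ref{prop-eps-ESD}, $\eps^*(\ESD) = 1 - (n-1)/D_n$, and then to handle the limit through the growth of $D_n$.

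For $n = 2$ we have $D_2 = 1$, which gives $1 - 1 = 0$. For $n = 3$, $1 - 2/(5/2) = 1 - 4/5 = 1/5$. For $n = 4$, $1 - 3 \cdot 6/29 = 1 - 18/29 = 11/29$. For $n = 5$, $1 - 4 \cdot 12/103 = 1 - 48/103 = 55/103$. Each of these is a one-line arithmetic step, so the small-$n$ cases are immediate.

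For the limit, it suffices to show $(n-1)/D_n \to 0$, i.e.\ that $D_n$ grows superlinearly. Since $D_n = \sum_{a=1}^n \binom{n}{a}\bigl(\frac{1}{a} - \frac{1}{n}\bigr)$ and every summand is nonnegative, we may discard all terms except a single convenient one. The stratum $a = 2$ already contributes $\binom{n}{2}\bigl(\frac12 - \frac1n\bigr) = \frac{n(n-1)}{2}\cdot\frac{n-2}{2n} = \frac{(n-1)(n-2)}{4}$. Hence $D_n \geq (n-1)(n-2)/4$ for $n \geq 2$, so for $n \geq 3$
\[
0 \leq \frac{n-1}{D_n} \leq \frac{4}{n-2} \to 0,
\]
which yields $\eps^*(\ESD) \to 1$. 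In fact $D_n$ grows exponentially, since the middle strata contribute on the order of $2^n/n$, but the quadratic bound is all we need.

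The only step requiring any care is this lower bound on $D_n$, and it is resolved by the nonnegativity of $1/a - 1/n$ for $a \leq n$. The remaining steps are routine arithmetic.
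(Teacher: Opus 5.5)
Your proof is correct, and your arithmetic checks out. The small-$n$ values are obtained exactly as in the paper: substitute $D_2,\dots,D_5$ from Lemma~\ref{lem-Dn-values} into $\eps^*(\ESD) = 1 - (n-1)/D_n$.

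The limit is where your route differs. The paper only gestures at the growth of $D_n$ at this point (``$D_n$ grows roughly as $H_n$, while the numerator $n-1$ grows linearly''). The precise statement $D_n = (2^n/n)(1+O(1/n))$ is deferred to Lemma~\ref{lem-Dn-asymptotic}.

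You instead use the nonnegativity of $1/a - 1/n$ to discard every stratum except $a=2$. This gives the explicit bound $D_n \geq (n-1)(n-2)/4$, and hence $0 \le 1-\eps^*(\ESD) \le 4/(n-2)$ for $n \ge 3$.

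Your argument is fully rigorous and self-contained at this stage of the paper. Its cost is that it yields only an $O(1/n)$ rate. The paper's sharper asymptotic for $D_n$ gives the exponential rate $1-\eps^*(\ESD) = O(n^2/2^n)$ recorded in Theorem~\ref{thm-asym-ESD}.
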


\begin{proof}
For each $n$, substitute $D_n$ from Lemma~\ref{lem-Dn-values}. 
The limiting behavior follows from the fact that $D_n$ grows roughly as $H_n$, while the numerator $n - 1$ grows linearly: 
the singleton terms become a vanishing fraction of $D_n$.
\end{proof}

\subsection{Summary Table}

\begin{center}
\renewcommand{\arraystretch}{1.4}
\begin{tabular}{lcccc}
\hline
Target $\Psi$ & $n=2$ & $n=3$ & $n=4$ & $n=5$ \\
\hline
$\Sh$ (Shapley)               & $0$ & $0$ & $0$ & $0$ \\
$\ED$ (equal division)        & $1$ & $1$ & $1$ & $1$ \\
$\Bz$ (Banzhaf)               & $0$ & $0$ & $1/58$ & $11/206$ \\
$\ESD$ (equal-surplus div.)   & $0$ & $1/5$ & $11/29$ & $55/103$ \\
\hline
\end{tabular}
\end{center}

\section{Pythagorean Decomposition and Goodness of Fit}\label{sec-pythagorean}

The optimal projection $F^{\eps^*(\Psi)}$ produced by Theorem~\ref{thm-optimal-eps} suggests at this point the following natural decomposition of any linear value map:
into a component along the egalitarian  Shapley line and an orthogonal residual. 
This residual measures the part of $\Psi$ that lies \emph{outside} the marginalism--egalitarianism axis, 
and provides a quantitative goodness-of-fit  measure analogous to the coefficient of determination in linear regression.

\begin{figure}[h]
\centering
\begin{tikzpicture}[scale=1.15]

\draw[thick,gray] (-1.5,-0.6) -- (8.5,2.4);

\node[gray] at (7.6,2.6) {$\{F^\varepsilon \mid \varepsilon \in \mathbb{R}\}$};

\filldraw[black] (0.5,0.0) circle (1.6pt);
\node[left] at (0.60,0.15) {$\mathrm{Sh} = F^0$};

\filldraw[black] (5.5,1.5) circle (1.6pt);
\node[below right] at (5.40,1.55) {$\mathrm{ED} = F^1$};

\filldraw[black] (3.5,0.9) circle (1.6pt);
\node[below right, font=\small] at (3.55,0.95) {$F^{\varepsilon^*(\Psi)}$};

\filldraw[black] (2.9,2.9) circle (1.8pt);
\node[above] at (2.9,2.95) {$\Psi$};

\draw[dashed,gray] (0.5,0.0) -- (2.9,2.9);
\node[gray,left, font=\small] at (1.4,1.5) {$\Psi - \mathrm{Sh}$};

\draw[thick,->,>=latex] (0.5,0.0) -- (3.5,0.9);

\draw[thick,->,>=latex] (3.5,0.9) -- (2.9,2.9);

\draw[thick] 
  (3.26,0.828) -- (3.188,1.068) -- (3.428,1.14);

\node[below, font=\small] at (1.7,-0.1) {$\varepsilon^*(\Psi) \cdot (\mathrm{ED} - \mathrm{Sh})$};

\node[right,font=\small] at (3.25,1.9) {$\Phi_\perp(\Psi)$};

\end{tikzpicture}
\caption{The Pythagorean decomposition $\Psi - \mathrm{Sh} = \varepsilon^*(\Psi) \cdot (\mathrm{ED} - \mathrm{Sh}) + \Phi_\perp(\Psi)$. 
The optimal $F^{\varepsilon^*(\Psi)}$ is the orthogonal projection of $\Psi$ onto the affine line of egalitarian Shapley values. 
The residual $\Phi_\perp(\Psi)$ is orthogonal to the line.}
\label{fig-pythagorean}
\end{figure}

\begin{definition}\label{def-residual}
For $\Psi \in \mL$, the \emph{egalitarian Shapley residual of $\Psi$} is
\[
\Phi_\perp(\Psi) := \Psi - F^{\eps^*(\Psi)}.
\]
\end{definition}

\begin{lemma}\label{lem-orthogonality}
For all $\Psi \in \mL$,
\[
\langle \Phi_\perp(\Psi), \ED - \Sh \rangle_{\mL} = 0.
\]
\end{lemma}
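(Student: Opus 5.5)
The plan is to unwind the definitions and use the explicit formula for $\eps^*(\Psi)$ from Theorem~\ref{thm-optimal-eps}. Write $E := \ED - \Sh$ and $D_n = \langle E, E\rangle_{\mL}$, which is nonzero by the same theorem. Since $\Feps = \Sh + \eps E$ for every $\eps \in \R$, the residual can be rewritten as
\[
\Phi_\perp(\Psi) = \Psi - \Sh - \eps^*(\Psi)\, E .
\]

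Next I will take the inner product with $E$ and use bilinearity of $\langle\cdot,\cdot\rangle_{\mL}$ (Lemma~\ref{lem-IP-valid}):
\[
\langle \Phi_\perp(\Psi), E\rangle_{\mL} = \langle \Psi - \Sh, E\rangle_{\mL} - \eps^*(\Psi)\,\langle E, E\rangle_{\mL}.
\]
Substituting $\eps^*(\Psi) = \langle E, \Psi - \Sh\rangle_{\mL}/\langle E,E\rangle_{\mL}$ makes the second term equal to $\langle E, \Psi-\Sh\rangle_{\mL}$. By symmetry of the inner product it cancels the first term, so the expression is $0$.

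Alternatively, the identity is exactly the first-order condition already written in the proof of Theorem~\ref{thm-optimal-eps}: the derivative $2\langle E, \Sh + \eps E - \Psi\rangle_{\mL}$ vanishes at $\eps = \eps^*(\Psi)$. That derivative is $-2\langle E, \Phi_\perp(\Psi)\rangle_{\mL}$, which gives the same conclusion. There is no real obstacle here. The only points to check are that $\Psi \in \mL$, so that all terms lie in the inner-product space, and that $D_n \neq 0$, which guarantees that $\eps^*$ is well defined.
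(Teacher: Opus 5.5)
Your proposal is correct and follows essentially the same route as the paper: rewrite the residual as $\Psi - \Sh - \eps^*(\Psi)(\ED - \Sh)$, expand the inner product with $\ED - \Sh$ by bilinearity, and cancel using the closed form $\eps^*(\Psi) = \langle \ED - \Sh, \Psi - \Sh\rangle_{\mL}/D_n$ from Theorem~\ref{thm-optimal-eps}. Your alternative remark, that the identity is just the first-order condition from that theorem's proof evaluated at $\eps^*(\Psi)$, is also valid.
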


\begin{proof}
Write $\Psi - \Sh = \eps^*(\Psi)(\ED - \Sh) + \Phi_\perp(\Psi)$. 
Then,
\begin{align*}
\langle \Phi_\perp(\Psi), \ED - \Sh \rangle_{\mL}
&= \langle \Psi - \Sh - \eps^*(\Psi)(\ED - \Sh), \ED - \Sh \rangle_{\mL} \\
&= \langle \Psi - \Sh, \ED - \Sh \rangle_{\mL} - \eps^*(\Psi) \cdot D_n.
\end{align*}
By Theorem~\ref{thm-optimal-eps}, $\eps^*(\Psi) = \langle \Psi - \Sh, \ED - \Sh \rangle_{\mL} / D_n$, 
so the right-hand side vanishes.
\end{proof}

\begin{theorem}[Pythagorean Decomposition]\label{thm-pythagorean}
For every $\Psi \in \mL$,
\[
\|\Psi - \Sh\|^2_{\mL} = (\eps^*(\Psi))^2 \cdot D_n + \|\Phi_\perp(\Psi)\|^2_{\mL}.
\]
\end{theorem}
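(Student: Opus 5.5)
The plan is to expand a squared norm directly, using the decomposition already written down in the proof of Lemma~\ref{lem-orthogonality}. By Definition~\ref{def-residual} and the identity $F^{\eps} = \Sh + \eps(\ED - \Sh)$ from the proof of Theorem~\ref{thm-optimal-eps}, we have
\[
\Psi - \Sh = \eps^*(\Psi)\,(\ED - \Sh) + \Phi_\perp(\Psi),
\]
an identity in $\mL$. I will take the squared $\|\cdot\|_{\mL}$-norm of both sides.

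Expanding by bilinearity and symmetry of $\langle \cdot,\cdot\rangle_{\mL}$ (Lemma~\ref{lem-IP-valid}) gives three terms: $(\eps^*(\Psi))^2 \langle \ED - \Sh, \ED - \Sh\rangle_{\mL}$, a cross term $2\eps^*(\Psi)\langle \Phi_\perp(\Psi), \ED - \Sh\rangle_{\mL}$, and $\|\Phi_\perp(\Psi)\|_{\mL}^2$.

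The first coefficient is exactly $D_n$ by its definition in the denominator computation. The cross term vanishes by Lemma~\ref{lem-orthogonality}. What remains is the claimed identity.

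There is no real obstacle here. The only point needing care is that $\eps^*(\Psi)$ is well defined, which Theorem~\ref{thm-optimal-eps} guarantees since $D_n \neq 0$. Beyond that, one only has to note that $D_n$ denotes the same quantity $\langle \ED-\Sh,\ED-\Sh\rangle_{\mL}$, so that no basis issue arises; by Theorem~\ref{thm-basis-indep} it could be computed in any $H$-orthonormal basis anyway.
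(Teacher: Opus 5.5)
Your proposal is correct and matches the paper's proof. The paper also writes $\Psi - \Sh = \eps^*(\Psi)(\ED - \Sh) + \Phi_\perp(\Psi)$, uses Lemma~\ref{lem-orthogonality} to see that the two summands are orthogonal, and applies Pythagoras. Your bilinear expansion with the vanishing cross term is that same step written out explicitly.
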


\begin{proof}
By Lemma~\ref{lem-orthogonality},
 the two summands in $\Psi - \Sh = \eps^*(\Psi)(\ED - \Sh) + \Phi_\perp(\Psi)$ are orthogonal in $\langle \cdot, \cdot \rangle_{\mL}$. 
The Pythagoras theorem then implies that
\[
\|\Psi - \Sh\|^2_{\mL} = \|\eps^*(\Psi)(\ED - \Sh)\|^2_{\mL} + \|\Phi_\perp(\Psi)\|^2_{\mL} = (\eps^*(\Psi))^2 \cdot D_n + \|\Phi_\perp(\Psi)\|^2_{\mL}. \qedhere
\]
\end{proof}

\begin{definition}\label{def-Rsquared}
For $\Psi \in \mL$ with $\Psi \neq \Sh$,
 the \emph{egalitarian Shapley fit} of $\Psi$ is
\[
R^2(\Psi) := \frac{(\eps^*(\Psi))^2 \cdot D_n}{\|\Psi - \Sh\|^2_{\mL}}.
\]
We set $R^2(\Sh) := 1$ by convention.
\end{definition}

By Theorem~\ref{thm-pythagorean}, $R^2(\Psi) \in [0, 1]$ for every $\Psi \in \mL$, 
with $R^2(\Psi) = 1$ if and only if $\Psi$ lies on the egalitarian Shapley line $\{\Feps \mid  \eps \in \R\}$.

The interpretation of $R^2(\Psi) $ parallels the coefficient of determination from linear regression.
 Imagine the squared distance $\|\Psi - \Sh\|^2_{\mL}$ as a fixed budget describing how far $\Psi$ deviates from the Shapley value.
  The Pythagorean identity (Theorem~\ref{thm-pythagorean}) splits this budget into two orthogonal parts: 
  $(\eps^*(\Psi))^2 \cdot D_n$, the squared length of the projection along the line, and $\|\Phi_\perp(\Psi)\|^2_{\mL}$, the squared length of the residual perpendicular to it. 
  The quantity $R^2(\Psi)$ is the ratio of the first part to the total; 
  in other words, the fraction of $\Psi$'s deviation from $\Sh$ that can be \emph{explained} by movement along the egalitarian Shapley axis. 
  The complement $1 - R^2(\Psi)$ is the fraction left unexplained, lying in the residual.
When $R^2(\Psi)$ is close to $1$, almost all of $\Psi - \Sh$ points along the line $\{\Feps\}$, 
and $\Psi$ can be very well approximated by some egalitarian Shapley value. 
When $R^2(\Psi)$ is close to $0$, $\Psi - \Sh$ is essentially perpendicular to the line: 
$\Psi$ differs from $\Sh$ in directions that the egalitarian Shapley family cannot capture, 
and projecting $\Psi$ onto the line lands almost exactly back at $\Sh$ itself. 
Intermediate values of $R^2$ measure intermediate degrees of alignment.

\subsection{Norms of $\Psi - \Sh$ for Natural Targets}

We compute $\|\Psi - \Sh\|^2_{\mL}$ for the Banzhaf and equal-surplus-division values.

\begin{lemma}\label{lem-norm-Bz}
For $n \geq 2$,
\[
\|\Bz - \Sh\|^2_{\mL} = \sum_{a=1}^n \binom{n}{a} \cdot a \left(\frac{1}{2^{a-1}} - \frac{1}{a}\right)^2.
\]
\end{lemma}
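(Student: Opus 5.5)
This is a direct computation from Definition~\ref{def-IP-on-L}. We have
\[
\|\Bz - \Sh\|^2_{\mL} = \sum_{\emptyset \neq A \subseteq N} \|(\Bz - \Sh)(u_A)\|^2_{\RN},
\]
using linearity of $\Bz$ and $\Sh$, so that $(\Bz - \Sh)(u_A) = \Bz(u_A) - \Sh(u_A)$. By Theorem~\ref{thm-basis-indep} the unanimity basis is a legitimate choice, though this is not even needed, since the definition itself uses that basis.

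First, fix a nonempty $A$ with $|A| = a$. The proof of Proposition~\ref{prop-eps-banzhaf} already gives the componentwise form: $(\Bz - \Sh)_i(u_A) = \frac{1}{2^{a-1}} - \frac{1}{a}$ for $i \in A$, and $0$ for $i \notin A$. This rests on the stated values $\Bzi(u_A) = 1/2^{a-1}$ and $\Shi(u_A) = 1/a$ for $i \in A$, with both equal to zero off $A$. If one wants to verify the Banzhaf value on $u_A$: for $i \in A$, the marginal contribution of $i$ to $C \subseteq N\setminus\{i\}$ is $1$ exactly when $A \setminus \{i\} \subseteq C$. There are $2^{n-1-(a-1)} = 2^{n-a}$ such $C$, which gives $2^{n-a}/2^{n-1} = 1/2^{a-1}$. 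For $i \notin A$, the player $i$ is a dummy and the value is $0$.

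Second, take the squared Euclidean norm. Only the $a$ coordinates in $A$ are nonzero, and they are all equal, so
\[
\|(\Bz - \Sh)(u_A)\|^2 = a\left(\frac{1}{2^{a-1}} - \frac{1}{a}\right)^2.
\]
This depends only on $a$. Grouping the $\binom{n}{a}$ coalitions of each size $a = 1, \ldots, n$ gives the stated formula.

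There is no real obstacle here. The only point needing care is confirming the Banzhaf value on unanimity games, handled by the counting above. The terms with $a = 1$ and $a = 2$ vanish, which is consistent with $\Bz = \Sh$ at $n = 2$.
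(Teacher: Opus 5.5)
Your proposal is correct and follows essentially the same route as the paper: evaluate $(\Bz - \Sh)(u_A)$ componentwise, take the squared Euclidean norm $a\left(\frac{1}{2^{a-1}} - \frac{1}{a}\right)^2$, and sum over the $\binom{n}{a}$ coalitions of each size. Your added counting check of $\Bzi(u_A) = 1/2^{a-1}$ via the $2^{n-a}$ coalitions containing $A \setminus \{i\}$ is a correct verification of a value the paper simply states.
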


\begin{proof}
For $|A| = a$, the vector $(\Bz - \Sh)(u_A)$ has components $1/2^{a-1} - 1/a$ for $i \in A$ and $0$ for $i \notin A$. 
  Hence,
\[
\|(\Bz - \Sh)(u_A)\|^2 = a \left(\frac{1}{2^{a-1}} - \frac{1}{a}\right)^2.
\]
Summing over all nonempty $A \subseteq N$ with multiplicity $\binom{n}{a}$ gives the formula.
\end{proof}

\begin{lemma}\label{lem-norm-ESD}
For $n \geq 2$, $\|\ESD - \Sh\|^2_{\mL} = D_n - (n-1)$.
\end{lemma}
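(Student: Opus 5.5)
The identity follows by reusing the stratum-wise computation already carried out in the proof of Proposition~\ref{prop-eps-ESD}. That proof showed two facts about $\ESD - \Sh$ on unanimity games. First, for every singleton $A = \{i_0\}$ we have $(\ESD - \Sh)(u_{\{i_0\}}) = 0$, because $\ESD$ and $\Sh$ both return the unit vector $e_{i_0}$. Second, for every $A$ with $|A| \geq 2$ we have $(\ESD - \Sh)(u_A) = (\ED - \Sh)(u_A)$, because $\ESD(u_A) = \ED(u_A) = (1/n,\ldots,1/n)$. I will take both facts as given and evaluate the norm directly from Definition~\ref{def-IP-on-L}.

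By Definition~\ref{def-IP-on-L},
\[
\|\ESD - \Sh\|^2_{\mL} = \sum_{\emptyset \neq A \subseteq N} \|(\ESD - \Sh)(u_A)\|^2 .
\]
Split this sum into the singleton terms and the terms with $|A| \geq 2$. The singleton terms vanish by the first fact. By the second fact, each remaining term equals $\|(\ED - \Sh)(u_A)\|^2$. The denominator computation preceding Lemma~\ref{lem-Dn-values} gives this as $1/a - 1/n$ for $|A| = a$. Hence
\[
\|\ESD - \Sh\|^2_{\mL} = \sum_{a=2}^n \binom{n}{a}\left(\frac{1}{a} - \frac{1}{n}\right) = D_n - \binom{n}{1}\left(1 - \frac{1}{n}\right) = D_n - (n-1).
\]

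There is no real obstacle; the only care needed is the bookkeeping of the removed $a = 1$ stratum, whose contribution is $n(1 - 1/n) = n - 1$. As a byproduct, the norm coincides with the numerator $\langle \ED - \Sh, \ESD - \Sh\rangle_{\mL}$ computed in Proposition~\ref{prop-eps-ESD}. This is expected: on the strata with $a \geq 2$ the vector $\ESD - \Sh$ equals $\ED - \Sh$, and on the singleton stratum it is zero.
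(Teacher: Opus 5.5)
Your proposal is correct and follows essentially the same argument as the paper. Both reuse the two facts from the proof of Proposition~\ref{prop-eps-ESD}: $(\ESD-\Sh)(u_A)=0$ on singletons and $(\ESD-\Sh)(u_A)=(\ED-\Sh)(u_A)$ for $|A|\geq 2$. Both then subtract the singleton stratum's contribution $n(1-1/n)=n-1$ from $D_n$.
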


\begin{proof}
By the proof of Proposition~\ref{prop-eps-ESD},  $(\ESD - \Sh)(u_A) = 0$ for $|A| = 1$ 
whereas we have $(\ESD - \Sh)(u_A) = (\ED - \Sh)(u_A)$ for $|A| \geq 2$. 
Thus,
 $\|\ESD - \Sh\|^2_{\mL} = \sum_{|A| \geq 2} \|(\ED - \Sh)(u_A)\|^2$, so
 that 
 $\|\ESD - \Sh\|^2_{\mL} = D_n - \sum_{a=1}^1 \binom{n}{a}(1/a - 1/n) = D_n - (n-1)$.
\end{proof}

\subsection{Numerical Results for $n = 4$}

We illustrate the framework by computing all relevant quantities for $n = 4$.

\begin{proposition}\label{prop-numerics-n4}
For $n = 4$, we have:
\begin{enumerate}[label=(\roman*)]
\item $\|\Bz - \Sh\|^2_{\mL} =  \frac{7}{48}$, 
$\|\Phi_\perp(\Bz)\|^2_{\mL} = \frac{67}{464}$, and  $R^2(\Bz) = \frac{2}{203} \approx 0.99\%$.
\item $\|\ESD - \Sh\|^2_{\mL} = \frac{11}{6} $, 
 $\|\Phi_\perp(\ESD)\|^2_{\mL} = \frac{33}{29}$, and $R^2(\ESD) = \frac{11}{29} \approx 37.9\%$.
\end{enumerate}
\end{proposition}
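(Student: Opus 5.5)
The plan is to read every quantity off results already established and then do the arithmetic. The Pythagorean decomposition (Theorem~\ref{thm-pythagorean}) gives the residual norm as $\|\Phi_\perp(\Psi)\|^2_{\mL} = \|\Psi-\Sh\|^2_{\mL} - (\eps^*(\Psi))^2 D_n$. Definition~\ref{def-Rsquared} gives $R^2(\Psi)$ as the ratio of $(\eps^*(\Psi))^2 D_n$ to $\|\Psi-\Sh\|^2_{\mL}$. So for each target I only need three numbers at $n=4$:
\begin{itemize}
\item $D_4 = 29/6$, from Lemma~\ref{lem-Dn-values};
\item $\eps^*(\Psi)$, from Corollaries~\ref{cor-banzhaf-values} and~\ref{cor-ESD-values}, namely $1/58$ for $\Bz$ and $11/29$ for $\ESD$;
\item $\|\Psi-\Sh\|^2_{\mL}$, from Lemmas~\ref{lem-norm-Bz} and~\ref{lem-norm-ESD}.
\end{itemize}

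For (i), I evaluate the sum in Lemma~\ref{lem-norm-Bz} at $n=4$ term by term. The $a=1$ and $a=2$ terms vanish, since $1/2^{a-1}=1/a$ there. This is the structural reason $R^2(\Bz)$ is small. The $a=3$ term is $4\cdot 3\cdot(1/4-1/3)^2 = 1/12$, and the $a=4$ term is $4\cdot(1/8-1/4)^2 = 1/16$, so the total is $7/48$. The explained part is $(1/58)^2\cdot 29/6 = 1/696$. Subtracting over the common denominator $1392$ gives $203/1392 - 2/1392 = 67/464$. Dividing gives $R^2(\Bz) = (1/696)/(7/48) = 2/203$.

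For (ii), Lemma~\ref{lem-norm-ESD} gives $\|\ESD-\Sh\|^2_{\mL} = 29/6 - 3 = 11/6$ directly. The explained part is $(11/29)^2\cdot 29/6 = 121/174$. Hence the residual is $319/174 - 121/174 = 198/174 = 33/29$, and $R^2(\ESD) = (121/174)/(11/6) = 11/29$.

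It is worth recording a check on (ii). Here $\langle \ED-\Sh,\ESD-\Sh\rangle_{\mL}$ equals $\|\ESD-\Sh\|^2_{\mL}$, so $R^2(\ESD)$ coincides with $\eps^*(\ESD)$. Indeed, $R^2 = \eps^{*2}D_n/\|\ESD-\Sh\|^2 = \eps^*\cdot\langle\ED-\Sh,\ESD-\Sh\rangle/\|\ESD-\Sh\|^2 = \eps^*$. This explains why both equal $11/29$.

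There is no real obstacle. The only point needing care is exact rational arithmetic, especially the common denominators $696$, $1392$ and $174$. I would double-check these by confirming that explained part plus residual reproduces the stated total norm in each case.
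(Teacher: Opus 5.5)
Your proposal is correct and follows the paper's proof essentially line by line. You evaluate Lemma~\ref{lem-norm-Bz} term by term, take $\eps^*$ and $D_4$ from the earlier corollaries and Lemma~\ref{lem-Dn-values}, and obtain the residual and $R^2$ from Theorem~\ref{thm-pythagorean} and Definition~\ref{def-Rsquared}, and all of your rational arithmetic checks out. Your side remark is also sound: $R^2(\ESD)=\eps^*(\ESD)$ because $\langle \ED-\Sh,\ESD-\Sh\rangle_{\mL}=\|\ESD-\Sh\|^2_{\mL}$, which the paper only recovers later as $R^2(\ESD)=1-w_1$ in Example~\ref{ex-ESD-stratified}.
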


\begin{proof}
\emph{Part (i).} 
By Lemma~\ref{lem-norm-Bz}, $\|\Bz - \Sh\|^2_{\mL}$ is the sum, 
for $a = 1, \ldots, 4$, of $\binom{4}{a} \cdot a \cdot (1/2^{a-1} - 1/a)^2$. 
The terms are:
\begin{itemize}[leftmargin=2em]
\item $a = 1$: $1 \cdot 1 \cdot (1 - 1)^2 = 0$.
\item $a = 2$: $6 \cdot 2 \cdot (1/2 - 1/2)^2 = 0$.
\item $a = 3$: $4 \cdot 3 \cdot (1/4 - 1/3)^2 = 12 \cdot (1/12)^2 = 12/144 = 1/12$.
\item $a = 4$: $1 \cdot 4 \cdot (1/8 - 1/4)^2 = 4 \cdot (1/8)^2 = 4/64 = 1/16$.
\end{itemize}
Total: $1/12 + 1/16 = 4/48 + 3/48 = 7/48$.
By Corollary~\ref{cor-banzhaf-values}, $\eps^*(\Bz)|_{n=4} = 1/58$. 
By Lemma~\ref{lem-Dn-values}, $D_4 = 29/6$. 
The projection norm is therefore 
$(\eps^*(\Bz))^2 \cdot D_4 
= (1/58)^2 \cdot (29/6) 
= 1/(116 \cdot 6) = 1/696$.
By the Pythagorean identity (Theorem~\ref{thm-pythagorean}), 
$\|\Phi_\perp(\Bz)\|^2_{\mL} = \|\Bz - \Sh\|^2_{\mL} - (\eps^*(\Bz))^2 \cdot D_4$, 
which equals 
$7/48 - 1/696 = 67/464$. 
Finally, $R^2(\Bz) 
= (1/696) / (7/48) 
= 2/203$.

\medskip
\emph{Part (ii).} 
By Lemma~\ref{lem-norm-ESD}, 
we have $\|\ESD - \Sh\|^2_{\mL} = D_4 - 3 = 29/6 - 18/6 = 11/6$.
By Corollary~\ref{cor-ESD-values}, $\eps^*(\ESD)|_{n=4} = 11/29$. 
The projection norm is therefore 
equal to 
$(\eps^*(\ESD))^2 \cdot D_4 
= (121/841) \cdot (29/6) 
= 121/174$.
By the Pythagorean identity (Theorem~\ref{thm-pythagorean}) 
we obtain that  
$\|\Phi_\perp(\ESD)\|^2_{\mL} = \|\ESD - \Sh\|^2_{\mL} - (\eps^*(\ESD))^2 \cdot D_4 
= 11/6 - 121/174 = 33/29$. 
Finally, 
$R^2(\ESD) 
= (121/174)/(11/6) 
 = 11/29$.
\end{proof}

\begin{remark}\label{rem-Bz-orthogonal}
The result $R^2(\Bz) \approx 0.99\%$ for $n = 4$ has a striking interpretation: 
the Banzhaf value is \emph{nearly orthogonal} to the egalitarian Shapley axis. 
Although $\Bz \neq \Sh$, the difference $\Bz - \Sh$ lies almost entirely in directions orthogonal to $\ED - \Sh$ in $\mL$. 
The Banzhaf value therefore captures a kind of fairness consideration that is essentially independent of the marginalism--egalitarianism trade-off; 
it lives, geometrically, on a different axis altogether. 
This contrasts sharply with the equal-surplus-division value, where $R^2(\ESD) \approx 37.9\%$
 indicates substantial alignment with the egalitarian Shapley line.
\end{remark}

\begin{remark}\label{rem-explicit-residual}
The residual $\Phi_\perp(\Bz)$ admits an explicit description on unanimity games. 
Direct computation from $\Phi_\perp(\Bz) = \Bz - F^{\eps^*(\Bz)}$ yields, 
for each $\emptyset \neq A \subseteq N$ with $a = |A|$ and $i \in N$,
\[
\Phi_\perp(\Bz)_i(u_A) = 
\begin{cases}
\left(\dfrac{1}{2^{a-1}} - \dfrac{1}{a}\right) - \eps^*(\Bz) \left(\dfrac{1}{n} - \dfrac{1}{a}\right) & \text{if } i \in A, \\[0.4em]
- \eps^*(\Bz) \cdot \dfrac{1}{n} & \text{if } i \notin A.
\end{cases}
\]
While this gives a closed-form description of $\Phi_\perp(\Bz)$, 
the residual does not coincide with any standard named value map in the cooperative-game-theory literature: 
it is a generic element of the orthogonal complement of $\ED - \Sh$ in $\mL$, 
simply equal by construction to $\Bz - F^{\eps^*(\Bz)}$. 
The structural information about $\Phi_\perp(\Bz)$ is therefore captured by its norm $\|\Phi_\perp(\Bz)\|_{\mL}^2$ 
and the goodness-of-fit $R^2(\Bz)$, rather than by recognizing it as some other known object.
\end{remark}

\subsection{The Solidarity Value}\label{subsec-solidarity}

To complete the empirical picture, we compute the projection and goodness-of-fit 
also for the \emph{solidarity value} of Nowak and Radzik~\cite{nowak1994solidarity}, defined by
\[
\So_i(v) := \sum_{S \subseteq N,\, i \in S} \frac{(n - |S|)! \, (|S| - 1)!}{n!} \cdot A^v(S),
\]
where $A^v(S) := \frac{1}{|S|} \sum_{j \in S} (v(S) - v(S \setminus \{j\}))$ is the \emph{average marginal contribution} over members of $S$. 
The intuition is that when player $i$ joins a coalition, $i$ is credited with the average marginal 
contribution over all coalition members rather than $i$'s own marginal contribution alone. 
This is a structural form of solidarity. 
The proximity of $\So$ to other natural values has been studied informally by Radzik~\cite{radzik2013}.

\begin{lemma}\label{lem-So-unanimity}
For $\emptyset \neq A \subseteq N$, with $a = |A|$,
\[
A^{u_A}(S) =
 \begin{cases}
  \dfrac{a}{|S|} & \text{if } A \subseteq S, \\ 0 & \text{otherwise.} 
  \end{cases}
\]
\end{lemma}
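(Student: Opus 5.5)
The plan is to compute $A^{u_A}(S)$ directly from its definition, splitting into two cases according to whether $A \subseteq S$. The only ingredient is the defining property of the unanimity game: $u_A(C) = 1$ if $A \subseteq C$ and $0$ otherwise. We apply it to $C = S$ and to $C = S \setminus \{j\}$ for each $j \in S$.

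\emph{Case $A \not\subseteq S$.} Here $u_A(S) = 0$. For every $j \in S$, the set $S \setminus \{j\}$ is contained in $S$, so it also fails to contain $A$, and hence $u_A(S \setminus \{j\}) = 0$. Every summand $u_A(S) - u_A(S \setminus \{j\})$ therefore vanishes, and $A^{u_A}(S) = 0$.

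\emph{Case $A \subseteq S$.} Now $u_A(S) = 1$, and we split the members $j \in S$ into two groups.
\begin{itemize}
\item If $j \in A$, then $A \not\subseteq S \setminus \{j\}$, so $u_A(S \setminus \{j\}) = 0$ and the marginal contribution equals $1$.
\item If $j \in S \setminus A$, then $A \subseteq S \setminus \{j\}$, so $u_A(S \setminus \{j\}) = 1$ and the marginal contribution equals $0$.
\end{itemize}
Since $A \subseteq S$, exactly $a$ members of $S$ contribute $1$. The sum of marginal contributions is therefore $a$, and dividing by $|S|$ gives $A^{u_A}(S) = a/|S|$.

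There is no genuine obstacle. The only point requiring care is to check that $S \neq \emptyset$ whenever the average is formed, so that the division by $|S|$ is meaningful. This holds because the solidarity value's defining sum only ranges over coalitions $S$ containing $i$; and in the second case $S \supseteq A \neq \emptyset$ in any event.
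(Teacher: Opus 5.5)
Your proposal is correct and follows essentially the same route as the paper's proof: the same case split on whether $A \subseteq S$, with all marginal contributions vanishing in the first case and exactly the $a$ members of $A$ contributing $1$ in the second. Your extra remark that $S \neq \emptyset$ whenever the average is formed is a small point the paper leaves implicit.
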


\begin{proof}
If $A \not\subseteq S$, then $u_A(S) = 0$, 
and for each $j \in S$ also $A \not\subseteq S \setminus \{j\}$, so $u_A(S \setminus \{j\}) = 0$. 
Thus, $A^{u_A}(S) = 0$.
If $A \subseteq S$, then $u_A(S) = 1$. 
For $j \in S \setminus A$: $A \subseteq S \setminus \{j\}$, so $u_A(S \setminus \{j\}) = 1$, 
and the marginal contribution is $0$. 
For $j \in A$: $A \not\subseteq S \setminus \{j\}$, so $u_A(S \setminus \{j\}) = 0$, 
and the marginal contribution is $1$. 
Summing over $j \in S$ gives $|A| = a$, 
and dividing by $|S|$ gives $A^{u_A}(S) = a/|S|$.
\end{proof}

We now compute $\So(u_A)$ explicitly for $n = 4$ and each subset size $a \in \{1, 2, 3, 4\}$.

\begin{lemma}\label{lem-So-n4}
For $n = 4$ and any nonempty $A \subseteq N$ with $a = |A|$,
\[
(\So - \Sh)(u_A)_i = \begin{cases}
0 & \text{if } a = 4, \\
-1/16 \text{ for } i \in A, \quad 3/16 \text{ for } i \notin A & \text{if } a = 3, \\
-13/72 \text{ for } i \in A, \quad 13/72 \text{ for } i \notin A & \text{if } a = 2, \\
-23/48 \text{ for } i \in A, \quad 23/144 \text{ for } i \notin A & \text{if } a = 1.
\end{cases}
\]
\end{lemma}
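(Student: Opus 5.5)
The plan is to evaluate $\So(u_A)$ directly from the definition, using Lemma~\ref{lem-So-unanimity}, and then subtract $\Sh(u_A)$. By Lemma~\ref{lem-So-unanimity}, only coalitions $S \supseteq A$ contribute to $\So_i(u_A)$, and each contributes $a/|S|$ times the Shapley-type weight. So for $s = |S|$ I would write
\[
\So_i(u_A) = \sum_{s} c_i(s)\,\frac{(n-s)!\,(s-1)!}{n!}\cdot\frac{a}{s},
\]
where $c_i(s)$ counts the coalitions $S$ of size $s$ with $S \supseteq A$ and $i \in S$. If $i \in A$, then $c_i(s) = \binom{n-a}{s-a}$ for $s = a, \ldots, n$. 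If $i \notin A$, then $S \supseteq A \cup \{i\}$, so $c_i(s) = \binom{n-a-1}{s-a-1}$ for $s = a+1, \ldots, n$.

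I would compute the in-$A$ value for $n = 4$ and each $a \in \{1,2,3,4\}$; these are finite sums of at most four terms. For example, for $a = 1$ the terms are $\tfrac14 + 3\cdot\tfrac1{24} + 3\cdot\tfrac1{36} + \tfrac1{16} = \tfrac{25}{48}$. Subtracting $\Sh_i(u_A) = 1/a$ then gives $-23/48$. For $a = 4$ only $S = N$ contributes, which gives $\tfrac{3!}{4!}\cdot 1 = \tfrac14 = \Sh_i(u_N)$, and hence $0$. The cases $a = 2$ and $a = 3$ are analogous and should give $\tfrac{5}{18}$ and $\tfrac{13}{48}$, i.e.\ differences $-\tfrac{13}{72}$ and $-\tfrac1{16}$.

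For the players outside $A$ I would avoid a second sum and use the fact that $\So$ is efficient and symmetric. By symmetry all $i \notin A$ receive the same payoff, as do all $i \in A$. Hence
\[
\So_i(u_A) = \frac{1 - a\,\So_{j}(u_A)}{n-a}, \qquad i \notin A,\ j \in A,
\]
and since $\Sh_i(u_A) = 0$ for $i \notin A$, this value is already the difference. For $a = 1$ it gives $(1 - 25/48)/3 = 23/144$, and similarly $3/16$ and $13/72$ for $a = 3, 2$.

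The main risk is bookkeeping rather than any conceptual step: getting the counts $c_i(s)$ and the weights $(n-s)!(s-1)!/n!$ right. As a safeguard I would also compute one outside entry directly from the $c_i(s)$ formula, and check that it agrees with the value obtained from efficiency. This cross-check also guards against relying on the efficiency of $\So$ without a proof in the paper. If that efficiency is not taken as known, the direct outside sums are equally short and can be used instead.
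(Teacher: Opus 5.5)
Your method is the paper's: evaluate $\So_i(u_A)$ from the definition via Lemma~\ref{lem-So-unanimity}, using the counts $\binom{n-a}{s-a}$ for $i\in A$ and $\binom{n-a-1}{s-a-1}$ for $i\notin A$, then subtract $\Sh_i(u_A)$. The only difference is that you get the entries for $i\notin A$ from efficiency and symmetry. The paper instead sums them directly and uses efficiency only as a final consistency check. Your shortcut is legitimate, since efficiency of $\So$ follows in one line from that of $\Sh$: $\sum_i \So_i(v)=\sum_S \frac{(n-|S|)!(|S|-1)!}{n!}\sum_{j\in S}\bigl(v(S)-v(S\setminus\{j\})\bigr)=\sum_j \Sh_j(v)=v(N)$.

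However, one of your intermediate values is wrong. Because the lemma is purely computational, this is the substantive issue. For $a=2$ and $i\in A$ the contributions are:
\begin{itemize}
\item $\frac{1}{12}$ at $s=2$;
\item $2\cdot\frac{1}{12}\cdot\frac{2}{3}=\frac19$ at $s=3$;
\item $\frac14\cdot\frac24=\frac18$ at $s=4$.
\end{itemize}
So $\So_i(u_A)=\frac{23}{72}$, not $\frac{5}{18}=\frac{20}{72}$.

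With your value you would get $\frac{5}{18}-\frac12=-\frac29$ inside $A$. Your efficiency formula would then give $\frac{1-2\cdot 5/18}{2}=\frac29$ outside $A$. So the claimed $\mp\frac{13}{72}$ do not follow from what you wrote. With the corrected value $\frac{23}{72}$ everything is consistent: $\frac{23}{72}-\frac12=-\frac{13}{72}$ and $\frac{1-46/72}{2}=\frac{13}{72}$.

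Your other values are correct: $\frac{25}{48}$, $\frac{13}{48}$ and $\frac14$ inside $A$, and $\frac{23}{144}$ and $\frac{3}{16}$ outside. Once the $a=2$ sum is fixed, the proposal establishes the lemma.
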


\begin{proof}
We compute $\So_i(u_A)$ from the definition by splitting into the cases $i \in A$ and $i \notin A$.

\smallskip
\emph{Case $a = 4$ ($A = N$).} 
Only the term $S = N$ contributes (Lemma~\ref{lem-So-unanimity}), 
giving   the equation $\So_i(u_N) = (0! \cdot 3!/4!) \cdot 1 = 1/4 = \Sh_i(u_N)$ for each $i$.

\smallskip
\emph{Case $a = 3$.} 
For $i \in A$: contributing $S$ have $|S| \in \{3, 4\}$, with multiplicity $\binom{1}{s-3}$. 
The contributions are $(1!\cdot 2!/4!)\cdot 1 = 1/12$ at $s=3$ and $(0!\cdot 3!/4!)\cdot (3/4) = 3/16$ at $s=4$, 
summing to $1/12 + 3/16 = 13/48$. 
Subtracting $\Sh_i(u_A) = 1/3$ gives $-1/16$. 
For $i \notin A$: only $S = N$ contributes ($1/4 \cdot 3/4 = 3/16$), 
and $\Sh_i(u_A) = 0$.

\smallskip
\emph{Case $a = 2$.} 
For $i \in A$: contributing $S$ have $|S| \in \{2, 3, 4\}$ with multiplicities $\binom{2}{s-2}$. 
Their contributions: $(2!\cdot 1!/4!)\cdot 1 = 1/12$ at $s=2$, 
$2 \cdot (1!\cdot 2!/4!) \cdot (2/3) = 1/9$ at $s=3$, 
and finally $(0!\cdot 3!/4!) \cdot (1/2) = 1/8$ at $s=4$.  
Total: $1/12 + 1/9 + 1/8 = 23/72$. 
Subtracting $\Sh_i(u_A) = 1/2$ gives $-13/72$. 
For $i \notin A$: contributions $1/18$ at $s=3$ and $1/8$ at $s=4$, summing to $13/72$.

\smallskip
\emph{Case $a = 1$.} 
For $i \in A$: contributions $1/4$ at $s=1$, $1/8$ at $s=2$, $1/12$ at $s=3$, $1/16$ at $s=4$, totalling $25/48$. 
Subtracting $\Sh_i(u_A) = 1$ gives $-23/48$. 
For $i \notin A$: contributions $1/24$ at $s=2$, $1/18$ at $s=3$, $1/16$ at $s=4$, totalling $23/144$.

In every case the components sum to zero, confirming consistency with efficiency.
\end{proof}

\begin{proposition}\label{prop-So-projection}
For $n = 4$, the solidarity value satisfies
\[
\eps^*(\So) = \frac{39}{58}, \qquad \|\So - \Sh\|^2_{\mL} = \frac{79}{36}, \qquad \|\Phi_\perp(\So)\|^2_{\mL} = \frac{19}{2088}, \qquad R^2(\So) = \frac{4563}{4582} \approx 99.6\%.
\]
\end{proposition}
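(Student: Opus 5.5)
The plan is to compute everything stratum by stratum in the unanimity basis. This is legitimate by Theorem~\ref{thm-basis-indep}. The inputs are the explicit vectors $(\So-\Sh)(u_A)$ from Lemma~\ref{lem-So-n4} and the vectors $(\ED-\Sh)(u_A)$. For $n=4$ the latter have entries $\tfrac14-\tfrac1a$ on $A$ and $\tfrac14$ off $A$. Both vectors depend only on $a=|A|$. So each sum over $A$ reduces to $\binom{4}{a}$ times a single per-subset quantity, and only $a=1,2,3$ contribute, since $(\So-\Sh)(u_N)=0$.

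\emph{Step 1: the numerator of $\eps^*$.} I compute $\langle (\ED-\Sh)(u_A),(\So-\Sh)(u_A)\rangle_{\RN}$ for each $a$:
\begin{itemize}[leftmargin=2em]
\item $a=1$: $\tfrac34\cdot\tfrac{23}{48}+3\cdot\tfrac14\cdot\tfrac{23}{144}=\tfrac{23}{48}$;
\item $a=2$: $2\cdot\tfrac14\cdot\tfrac{13}{72}+2\cdot\tfrac14\cdot\tfrac{13}{72}=\tfrac{13}{72}$;
\item $a=3$: $3\cdot\tfrac1{12}\cdot\tfrac1{16}+\tfrac14\cdot\tfrac3{16}=\tfrac1{16}$.
\end{itemize}
Weighting by the multiplicities $4,6,4$ gives $\tfrac{23}{12}+\tfrac{13}{12}+\tfrac{3}{12}=\tfrac{13}{4}$. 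Theorem~\ref{thm-optimal-eps} with $D_4=29/6$ from Lemma~\ref{lem-Dn-values} then yields $\eps^*(\So)=\tfrac{13}{4}\cdot\tfrac{6}{29}=\tfrac{39}{58}$.

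\emph{Step 2: the norm.} In the same way I compute $\|(\So-\Sh)(u_A)\|^2$:
\begin{itemize}[leftmargin=2em]
\item $a=1$: $(\tfrac{23}{48})^2+3(\tfrac{23}{144})^2=\tfrac{529}{1728}$;
\item $a=2$: $4(\tfrac{13}{72})^2=\tfrac{169}{1296}$;
\item $a=3$: $3(\tfrac1{16})^2+(\tfrac3{16})^2=\tfrac{3}{64}$.
\end{itemize}
Weighting by $4,6,4$ gives $\tfrac{529}{432}+\tfrac{169}{216}+\tfrac{3}{16}=\tfrac{79}{36}$.

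\emph{Step 3: residual and fit.} The projection length is $(\eps^*)^2D_4=\tfrac{1521}{3364}\cdot\tfrac{29}{6}=\tfrac{1521}{696}$. The Pythagorean decomposition (Theorem~\ref{thm-pythagorean}) gives
\[
\|\Phi_\perp(\So)\|^2_{\mL}=\tfrac{79}{36}-\tfrac{1521}{696}=\tfrac{4582-4563}{2088}=\tfrac{19}{2088}.
\]
Definition~\ref{def-Rsquared} gives $R^2(\So)=\tfrac{1521}{696}\cdot\tfrac{36}{79}=\tfrac{4563}{4582}$.

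There is no conceptual obstacle here. The only risk is arithmetic, in two places. First, the multiplicities must be handled correctly, both the $\binom4a$ subsets and the in-$A$ versus out-of-$A$ coordinate counts $a$ and $4-a$. Second, the fractions need a common denominator, and $2088=36\cdot58$ is the natural one. As a cross-check I will confirm that the residual equals the difference of two nearly equal quantities, $4582$ versus $4563$ over $2088$. That near-cancellation is exactly what makes $R^2$ so close to $1$.
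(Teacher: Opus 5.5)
Your proposal is correct and follows the paper's proof essentially line by line: the same stratum-by-stratum pairing of the vectors from Lemma~\ref{lem-So-n4} with $(\ED-\Sh)(u_A)$, weighted by $\binom{4}{a}$, gives $13/4$ and $79/36$, followed by the Pythagorean identity for the residual and the ratio for $R^2$ (your $1521/696$ is the paper's $507/232$). The only superfluous element is the appeal to Theorem~\ref{thm-basis-indep}, since $\langle\cdot,\cdot\rangle_{\mL}$ is defined directly in the unanimity basis.
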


\begin{proof}
We compute $\langle \ED - \Sh, \So - \Sh \rangle_{\mL}$ as a sum over subset sizes, 
using the inner product on $\R^4$ between the components in Lemma~\ref{lem-So-n4} 
and the components $(\ED - \Sh)_i(u_A) = 1/4 - 1/a$ for $i \in A$, $1/4$ for $i \notin A$.

\medskip
\emph{Size $a = 4$:} contribution $0$ (since $(\So - \Sh)(u_N) = 0$).

\emph{Size $a = 3$:} per subset, $3 \cdot (-1/12)(-1/16) + 1 \cdot (1/4)(3/16) = 3/192 + 3/64 = 12/192 = 1/16$. 
Times $\binom{4}{3} = 4$: total $1/4$.

\emph{Size $a = 2$:} per subset, $2 \cdot (-1/4)(-13/72) + 2 \cdot (1/4)(13/72) = 26/288 + 26/288 = 13/72$. 
Times $\binom{4}{2} = 6$: total $13/12$.

\emph{Size $a = 1$:} per subset, $1 \cdot (-3/4)(-23/48) + 3 \cdot (1/4)(23/144) =  207/576 + 69/576 = 23/48$. 
Times $\binom{4}{1} = 4$: total $23/12$.

\medskip
Summing, we get $\langle \ED - \Sh, \So - \Sh \rangle_{\mL} = 0 + 1/4 + 13/12 + 23/12 = 39/12 = 13/4$. 
Therefore,
\[
\eps^*(\So) = \frac{13/4}{D_4} = \frac{13/4}{29/6} = \frac{13 \cdot 6}{4 \cdot 29} = \frac{39}{58}.
\]

\medskip
For $\|\So - \Sh\|^2_{\mL}$, 
summing $\binom{n}{a} \cdot \|(\So - \Sh)(u_A)\|^2$ over $a$:
\begin{align*}
a = 3: \quad & \binom{4}{3} \cdot [3 \cdot (1/16)^2 + (3/16)^2] = 4 \cdot 12/256 = 3/16, \\
a = 2: \quad & \binom{4}{2} \cdot 4 \cdot (13/72)^2 = 6 \cdot 676/5184 = 169/216, \\
a = 1: \quad & \binom{4}{1} \cdot [(23/48)^2 + 3 \cdot (23/144)^2] = 4 \cdot 529/1728 = 529/432.
\end{align*}
Total: $3/16 + 169/216 + 529/432 = 81/432 + 338/432 + 529/432 = 948/432 = 79/36$.

\medskip
The projection norm is 
$(\eps^*(\So))^2 \cdot D_4 = (39/58)^2 \cdot (29/6) = 507/232$. 
 By the Pythagorean identity (Theorem~\ref{thm-pythagorean}),
\[
\|\Phi_\perp(\So)\|^2_{\mL} = \|\So - \Sh\|^2_{\mL} - (\eps^*(\So))^2 \cdot D_4 = \frac{79}{36} - \frac{507}{232}.
\]
We obtain that 
$\|\Phi_\perp(\So)\|^2_{\mL} = 19/2088$.
Finally, 
$R^2(\So) 
= (507/232) / (79/36) 
= 4563/4582$.
\end{proof}

\begin{remark}\label{rem-So-aligned}
The result $R^2(\So) \approx 99.6\%$ for $n = 4$ is the most striking finding of this section. 
The solidarity value lies almost entirely on the egalitarian Shapley axis: 
only about $0.4\%$ of its squared distance from the Shapley value is in the orthogonal residual. 
Geometrically, this means that for $n = 4$ the solidarity value is essentially indistinguishable from a member of the egalitarian Shapley family 
(specifically, very close to $F^{39/58}$)
 with $\eps^* = 39/58 \approx 0.67$ placing it well past halfway from $\Sh$ towards $\ED$. 
 This is interpretable in retrospect: 
 the Nowak--Radzik construction replaces individual marginal contributions by their averages over coalition members, 
 which is structurally an egalitarian move, 
 and our framework reveals that this structural egalitarianism manifests almost purely along the marginalism--egalitarianism axis.
\end{remark}

\subsection{A Geometric Classification}\label{subsec-classification}

Combining the four targets considered, 
we obtain a clean geometric classification of the standard alternatives to the Shapley value at $n = 4$:
\begin{itemize}[leftmargin=2em]
\item The \emph{Banzhaf value} is \emph{nearly orthogonal} to the egalitarian Shapley axis ($R^2 \approx 1\%$):   
   indeed its difference from $\Sh$ lies almost entirely in directions independent of the marginalism--egalitarianism trade-off.
\item The \emph{equal-surplus-division value} is \emph{moderately aligned} with the axis ($R^2 \approx 38\%$):
 indeed its difference from $\Sh$ has substantial components both along and orthogonal to the line $\{\Feps\}$.
\item The \emph{solidarity value} is \emph{almost entirely aligned} with the axis ($R^2 \approx 99.6\%$):
 it is, geometrically, an egalitarian Shapley value in disguise.
\end{itemize}

\begin{center}
\renewcommand{\arraystretch}{1.4}
\begin{tabular}{lccccc}
\hline
$\Psi$ ($n=4$) & $\eps^*(\Psi)$ & $\|\Psi-\Sh\|^2_{\mL}$ & $(\eps^*(\Psi))^2 D_n$ & $\|\Phi_\perp(\Psi)\|^2_{\mL}$ & $R^2(\Psi)$ \\
\hline
$\Sh$    & $0$     & $0$    & $0$       & $0$        & $1$ (convention) \\
$\ED$    & $1$     & $29/6$ & $29/6$    & $0$        & $1$ \\
$\Bz$    & $1/58$  & $7/48$ & $1/696$   & $67/464$   & $2/203 \approx 0.99\%$ \\
$\ESD$   & $11/29$ & $11/6$ & $121/174$ & $33/29$    & $11/29 \approx 37.9\%$ \\
$\So$    & $39/58$ & $79/36$ & $507/232$ & $19/2088$  & $4563/4582 \approx 99.6\%$ \\
\hline
\end{tabular}
\end{center}

\medskip

The classification  suggests that the space of  fairness concepts embodied by 
linear value maps decomposes into at least two genuinely independent dimensions: 
a marginalism--egalitarianism dimension, captured by the line $\{\Feps\}$, 
and at least one further dimension orthogonal to it. 
Banzhaf-style power-index reasoning lives primarily on this orthogonal dimension. 
Solidarity-style averaging, despite its different motivation,  lives almost entirely along the marginalism--egalitarianism axis. 
The equal-surplus-division  value sits as a proper blend of the two. 
This is, to our knowledge, the first quantitative articulation of the well-known but informal sense
 (also discussed in~\cite{kamijokongo2012,casajus2014}) that the Banzhaf value ``feels different'' from 
 the Shapley family in a way that the equal-surplus-division and solidarity values do not.

\section{Spectral Decomposition of the Egalitarian Shapley Flat}\label{sec-spectral}

The Pythagorean decomposition of Section~\ref{sec-pythagorean} treats each linear value map $\Psi$ as a single point in $\mL$ and projects it onto the one-dimensional egalitarian Shapley line. 
We now refine this picture by exploiting an orthogonal stratification of $\mL$ by coalition size. 
 The main result of this section is a structural one: every efficient, symmetric, linear value map admits a canonical 
 decomposition into $n - 1$ \emph{stratified epsilons},   one per coalition size, 
 and the entire space of such value maps is parametrized linearly by $\R^{n-1}$. 
 The classical egalitarian Shapley line $\{\Feps \mid \eps \in \R\}$ becomes the diagonal $\{(\eps, \ldots, \eps)\}$ in this $\R^{n-1}$. 
  The goodness-of-fit measure $R^2$ then admits   a closed-form interpretation as one minus the relative weighted variance of the stratified epsilons.

\subsection{Size Strata of \texorpdfstring{$\mL$}{L}}

For each $a \in \{1, \ldots, n\}$, write
\[
\GN_a := \mathrm{span}\{u_A \mid |A| = a\} \subset \GN,
\]
so that $\GN = \bigoplus_{a=1}^n \GN_a$ orthogonally with respect to the Harsanyi inner product, and 
furthermore  $\dim \GN_a = \binom{n}{a}$.

\begin{definition}\label{def-stratum}
The \emph{size-$a$ stratum} of $\mL$ is
\[
\mL_a := \{\Psi \in \mL \mid \Psi(u_B) = 0 \text{ for all } B \subseteq N \text{ with } |B| \neq a\}.
\]
Equivalently, $\mL_a$ is the subspace of linear maps that act trivially outside $\GN_a$.  
The \emph{stratum projection} is the linear map $\Psi \mapsto \Psi^{(a)} \in \mL_a$ defined on the unanimity basis 
(for every nonempty $A \subseteq N$) by
\[
\Psi^{(a)}(u_A) =
 \begin{cases}
 \Psi(u_A) & \text{if } |A| = a, \\ 0 & \text{if } |A| \neq a. 
\end{cases}
\]
\end{definition}

\begin{lemma}\label{lem-stratum-orthogonal}
$\mL = \bigoplus_{a=1}^n \mL_a$ orthogonally with respect to $\langle \cdot, \cdot \rangle_\mL$. 
 For all $\Psi, \Phi \in \mL$,
\[
\langle \Psi, \Phi \rangle_\mL = \sum_{a=1}^n \langle \Psi^{(a)}, \Phi^{(a)} \rangle_\mL.
\]
\end{lemma}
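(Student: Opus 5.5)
The plan is to work entirely in the unanimity basis, using Definition~\ref{def-IP-on-L} directly. The proof has three parts: the decomposition $\Psi = \sum_a \Psi^{(a)}$, the pairwise orthogonality of the strata, and the resulting inner-product identity.

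\emph{Step 1: decomposition.} For any $\Psi \in \mL$, I will show $\Psi = \sum_{a=1}^n \Psi^{(a)}$. Both sides are linear, so it suffices to compare them on each $u_A$. Fix $A$ with $|A| = b$. Then $\Psi^{(a)}(u_A)$ vanishes unless $a = b$, in which case it equals $\Psi(u_A)$. Hence the sum evaluates to $\Psi(u_A)$, and $\mL = \sum_a \mL_a$.

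Each $\Psi^{(a)}$ is well defined as a linear map, since a linear map is determined freely by its values on a basis. It lies in $\mL_a$ by construction. Conversely, if $\Psi \in \mL_a$, then $\Psi^{(a)} = \Psi$ and $\Psi^{(b)} = 0$ for $b \neq a$.

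\emph{Step 2: orthogonality.} Let $\Psi \in \mL_a$ and $\Phi \in \mL_b$ with $a \neq b$. For every nonempty $A$, at least one of $\Psi(u_A)$ and $\Phi(u_A)$ is zero, because $|A|$ cannot equal both $a$ and $b$. So every summand in Definition~\ref{def-IP-on-L} vanishes, and $\langle \Psi, \Phi \rangle_\mL = 0$.

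Orthogonal subspaces intersect trivially, since a vector orthogonal to itself is zero by Lemma~\ref{lem-IP-valid}. Together with Step 1, this makes the sum direct and orthogonal.

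\emph{Step 3: the identity.} Expand $\langle \Psi, \Phi \rangle_\mL = \sum_{a,b} \langle \Psi^{(a)}, \Phi^{(b)} \rangle_\mL$ by bilinearity. By Step 2, the cross terms with $a \neq b$ drop out, which leaves exactly the stated formula.

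Alternatively, one can group the sum over $A$ in Definition~\ref{def-IP-on-L} by $|A| = a$ and observe that the $a$-th group is $\langle \Psi^{(a)}, \Phi^{(a)} \rangle_\mL$.

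There is no real obstacle. The only point needing care is that each stratum projection is a genuine linear map, and this follows from the freedom to prescribe a linear map on a basis.
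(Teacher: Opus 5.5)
Your proof is correct and follows the paper's argument. The strata are orthogonal because every summand $\langle \Psi(u_A), \Phi(u_A)\rangle_{\RN}$ vanishes when the sizes differ, and the decomposition $\Psi = \sum_a \Psi^{(a)}$ is checked on the unanimity basis; your Step 3 just makes explicit the bilinearity expansion the paper leaves implicit.

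One small precision: for $n$ summands, pairwise trivial intersection alone does not make a sum direct. Mutual orthogonality does: pair a vanishing sum $\sum_a x_a = 0$ with each $x_b$. Alternatively, apply the linear stratum projections, which you already showed are the identity on $\mL_a$ and zero on $\mL_b$ for $b \neq a$.
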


\begin{proof}
By Definition~\ref{def-IP-on-L}, $\langle \Psi, \Phi \rangle_\mL = \sum_A \langle \Psi(u_A), \Phi(u_A) \rangle_\RN$, 
summed over nonempty $A \subseteq N$. 
If $\Psi \in \mL_a$ and $\Phi \in \mL_b$ with $a \neq b$, every term in this sum is zero, 
since each $A$ has a definite size and either $\Psi(u_A) = 0$ or $\Phi(u_A) = 0$. 
  Hence the strata are mutually orthogonal.
  Every $\Psi \in \mL$ decomposes as $\Psi = \sum_a \Psi^{(a)}$ since $\{u_A\}$ 
  spans $\GN$ and $\Psi$ is determined by its values on this basis.
\end{proof}

\subsection{The Symmetric Strata and the Efficient Direction}\label{subsec-sym-strata}

We restrict now to the subspace $\mL^{\mathrm{S}} \subseteq \mL$ of symmetric linear value maps,
  i.e., those satisfying axiom (Sym) of Section~\ref{sec-coop-games}, 
   and consider the symmetric strata $\mL_a^{\mathrm{S}} := \mL_a \cap \mL^{\mathrm{S}}$.

\begin{lemma}\label{lem-sym-stratum-dim}
For $\Psi \in \mL^{\mathrm{S}}$ and any $A \subseteq N$ with $|A| = a$, 
the vector $\Psi(u_A) \in \RN$ has equal entries on $A$ and equal entries on $N \setminus A$. 
  Consequently, $\Psi^{(a)}$ is determined by two scalars
\[
\alpha_a(\Psi) := \Psi_i(u_A) \text{ for any } i \in A, \qquad \beta_a(\Psi) := \Psi_i(u_A) \text{ for any } i \in N \setminus A,
\]
when $a < n$, and by a single scalar $\alpha_n(\Psi) = \Psi_i(u_N)$ when $a = n$. 
Therefore, 
\[
\dim \mL_a^{\mathrm{S}} =
  \begin{cases}
   2 & \text{if } 1 \leq a \leq n - 1, \\ 1 & \text{if } a = n.
 \end{cases}
\]
\end{lemma}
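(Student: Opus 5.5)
The plan is to prove the three assertions in turn: the within-$A$ equalities from (Sym) applied to the single game $u_A$, the independence of the two scalars from the particular $A$ of size $a$, and the dimension count, via an explicit basis of $\mL_a^{\mathrm{S}}$.

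\emph{Equal entries within $A$ and within $N\setminus A$.} Fix $A$ with $|A|=a$. As already observed in the proof of Theorem~\ref{thm-eps-shapley}, any two $i,j\in A$ are $u_A$-interchangeable. Hence (Sym) gives $\Psi_i(u_A)=\Psi_j(u_A)$. Likewise, any two $i,j\in N\setminus A$ are dummies in $u_A$. For $S$ containing neither of them, $u_A(S\cup\{i\})=u_A(S)=u_A(S\cup\{j\})$, so they are interchangeable and $\Psi_i(u_A)=\Psi_j(u_A)$. For $a=n$ the set $N\setminus A$ is empty, so only one scalar survives.

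\emph{Independence from the choice of $A$.} It remains to show that the common values, call them $\alpha_A$ and $\beta_A$, depend only on $a=|A|$. Any two $a$-sets are joined by a chain of single swaps $B=(A\setminus\{i\})\cup\{j\}$, so it suffices to treat one swap. Using linearity, I would apply (Sym) to the auxiliary game $u_A+u_B$. There $i$ and $j$ are interchangeable, because $A\setminus\{i\}=B\setminus\{j\}$. This yields $\alpha_A+\beta_B=\beta_A+\alpha_B$.

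This is the step I expect to be the main obstacle: one such relation does not by itself force $\alpha_A=\alpha_B$. I would first look for further test games in $\mathrm{span}\{u_A,u_B,u_{A\cup\{j\}},\ldots\}$ that produce a second independent relation. Failing that, I would read (Sym) in the anonymity sense, $\Psi_{\pi i}(\pi v)=\Psi_i(v)$, as the stratified results evidently intend. Under anonymity, with $\pi$ the transposition of $i$ and $j$, we have $\pi u_A=u_B$. Together with the first step this gives $\alpha_A=\alpha_B$ and $\beta_A=\beta_B$ directly. I would state this reading explicitly at this point of the proof.

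\emph{Dimension count.} $\Psi^{(a)}$ is determined by its values on $\{u_A : |A|=a\}$, and these are now fixed by $(\alpha_a,\beta_a)$, or by $\alpha_n$ alone when $a=n$. So $\dim\mL_a^{\mathrm{S}}\le 2$ for $a<n$ and $\le 1$ for $a=n$. For equality I would exhibit two maps in $\mL_a^{\mathrm{S}}$. The first is $\Sh^{(a)}$, for which $(\alpha_a,\beta_a)=(1/a,0)$. The second is $\ED^{(a)}$, for which $(\alpha_a,\beta_a)=(1/n,1/n)$. These pairs are linearly independent whenever $a<n$. Both maps are symmetric: they act on $u_A$ through permutation-invariant formulas in $A$ and are extended linearly, so they commute with relabelling of players and hence respect interchangeability in every game. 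For $a=n$, the map $\Sh^{(n)}$ alone is nonzero and spans.
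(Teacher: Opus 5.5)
Your proof is correct and follows the paper's route. The paper also gets the block-constancy from $u_A$-interchangeability, and it handles different $A$ of the same size by asserting $\Psi_{\pi(i)}(u_{\pi(A)})=\Psi_i(u_A)$. That assertion is anonymity, not the equal-treatment axiom (Sym) of Section~\ref{sec-coop-games}, so you are right to adopt that reading explicitly; your witnesses $\Sh^{(a)}$, $\ED^{(a)}$ also fill in the lower bound that the paper calls immediate.

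Skip the search for further test games, because none exist. For $\tau=(i\,j)$ the $\tau$-invariant games are spanned by the $u_C$ with $\tau C=C$ and the sums $u_C+u_{\tau C}$, so $\alpha_A-\beta_A=\alpha_B-\beta_B$ is all that literal (Sym) gives. For example, $\Psi(v)=v(\{1\})\cdot(1,\ldots,1)$ lies in $\mL_1$, satisfies (Sym), and is not anonymous. Under the literal reading, $\dim\mL_a^{\mathrm{S}}=\binom{n}{a}+1$ for $1\le a\le n-1$ and the lemma fails, so the anonymity reading is required, not a fallback.
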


\begin{proof}
For $A \subseteq N$ with $|A| = a$, any two players in $A$ are $u_A$-interchangeable, 
as are any two players in $N \setminus A$, 
since $u_A(S \cup \{i\})$ depends on $S$ and $A$ only through $|S \cap A|$ and the membership of $i$ in $A$. 
 The symmetry axiom (Sym) then forces $\Psi_i(u_A)$ to be constant on $A$ and constant on $N \setminus A$. 
  For different $A$ of the same size, transposing within $S_n$ gives $\Psi_{\pi(i)}(u_{\pi(A)}) = \Psi_i(u_A)$ 
  for any permutation $\pi$, so $\alpha_a$ and $\beta_a$ depend only on $a$, not on the choice of $A$.
    The dimension count follows immediately.
\end{proof}

The inner product on stratum $\mL_a^{\mathrm{S}}$ admits a clean expression in $(\alpha_a, \beta_a)$ coordinates:
\begin{equation}\label{eq-stratum-IP}
\langle \Psi^{(a)}, \Phi^{(a)} \rangle_\mL = \binom{n}{a} \cdot \big[a \, \alpha_a(\Psi) \alpha_a(\Phi) + (n-a) \, \beta_a(\Psi) \beta_a(\Phi)\big],
\end{equation}
for $a < n$, and $\langle \Psi^{(n)}, \Phi^{(n)} \rangle_\mL = n \, \alpha_n(\Psi) \alpha_n(\Phi)$.

We now identify two natural orthogonal directions inside each stratum.

\begin{definition}\label{def-eff-unif}
For $a < n$, the \emph{efficient subspace} and the \emph{uniform subspace} of $\mL_a^{\mathrm{S}}$ are
\begin{align*}
\mL_a^{\mathrm{eff}} &:= \{\Psi^{(a)} \in \mL_a^{\mathrm{S}} \mid a\, \alpha_a + (n-a)\, \beta_a = 0\}, \\
\mL_a^{\mathrm{unif}} &:= \{\Psi^{(a)} \in \mL_a^{\mathrm{S}} \mid \alpha_a = \beta_a\}.
\end{align*}
\end{definition}
The constraint $a\, \alpha_a + (n-a)\, \beta_a = 0$ characterizes the maps with vanishing total worth on $u_A$, 
 hence the term ``efficient'': differences of efficient value maps lie there.
   The constraint $\alpha_a = \beta_a$ characterizes maps that pay every player the same on $u_A$,
    regardless of membership in $A$,
    and is the direction along which the equal-division solution moves.

\begin{lemma}\label{lem-eff-unif-orthogonal}
For $a < n$, the subspaces $\mL_a^{\mathrm{eff}}$ and $\mL_a^{\mathrm{unif}}$
  are each one-dimensional and mutually orthogonal in $\langle \cdot, \cdot \rangle_\mL$, 
   and
\[
\mL_a^{\mathrm{S}} = \mL_a^{\mathrm{eff}} \oplus \mL_a^{\mathrm{unif}}.
\]
\end{lemma}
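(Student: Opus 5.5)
The plan is to work entirely in the $(\alpha_a,\beta_a)$ coordinates of Lemma~\ref{lem-sym-stratum-dim}. For $a<n$ these give a linear isomorphism $\mL_a^{\mathrm{S}}\cong\R^2$. Under this isomorphism, equation~\eqref{eq-stratum-IP} becomes the weighted Euclidean inner product
\[
\big((\alpha,\beta),(\alpha',\beta')\big)\longmapsto \tbinom{n}{a}\big[a\,\alpha\alpha'+(n-a)\,\beta\beta'\big].
\]
Both $\mL_a^{\mathrm{eff}}$ and $\mL_a^{\mathrm{unif}}$ are kernels of a single nonzero linear functional on this $\R^2$: the first is cut out by $a\alpha+(n-a)\beta$, the second by $\alpha-\beta$. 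Each functional is nonzero because $a\geq 1$ and $n-a\geq 1$. Hence each subspace is one-dimensional.

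Next I will write down explicit generators. For $\mL_a^{\mathrm{eff}}$ I take $e_a:=(n-a,\,-a)$, which is exactly a rescaling of the coordinates of $(\ED-\Sh)^{(a)}$, namely $(1/n-1/a,\,1/n)$. For $\mL_a^{\mathrm{unif}}$ I take $f_a:=(1,1)$, whose coordinates are proportional to those of $\ED^{(a)}$. Substituting into~\eqref{eq-stratum-IP} gives
\[
\langle e_a,f_a\rangle=\tbinom{n}{a}\big[a(n-a)-(n-a)a\big]=0,
\]
so the two lines are orthogonal.

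Finally, two orthogonal nonzero vectors are linearly independent. They therefore span the two-dimensional space $\mL_a^{\mathrm{S}}$, which gives the direct sum $\mL_a^{\mathrm{S}}=\mL_a^{\mathrm{eff}}\oplus\mL_a^{\mathrm{unif}}$ (and in fact an orthogonal one). If an explicit splitting is wanted, it comes from solving $(\alpha,\beta)=s\,e_a+t\,f_a$, whose determinant $-a-(n-a)=-n$ is nonzero.

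There is no real obstacle here. The one point that needs care is that orthogonality has to be checked in the weighted inner product~\eqref{eq-stratum-IP}, not the naive Euclidean one in $(\alpha,\beta)$, since $(n-a,-a)$ and $(1,1)$ are not Euclidean-orthogonal unless $a=n/2$. Using the weights $a$ and $n-a$ is precisely what makes the cancellation work.
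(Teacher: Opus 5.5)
Your proof is correct and follows essentially the same route as the paper: each subspace is the kernel of a single nonzero linear constraint on the two-dimensional $(\alpha_a,\beta_a)$-coordinates of $\mL_a^{\mathrm{S}}$, orthogonality is read off from the weighted form \eqref{eq-stratum-IP}, and the two lines span because they are independent. The only cosmetic differences are these: you verify orthogonality on the explicit generators $(n-a,-a)$ and $(1,1)$ and deduce independence from orthogonality, whereas the paper substitutes general elements into \eqref{eq-stratum-IP} and argues directly that the two lines intersect trivially.
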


\begin{proof}
Each subspace is cut out by a single linear constraint in the two-dimensional space $\mL_a^{\mathrm{S}}$, 
 hence has dimension one.
   For orthogonality, take $\Psi^{(a)} \in \mL_a^{\mathrm{eff}}$ and $\Phi^{(a)} \in \mL_a^{\mathrm{unif}}$.
    By~\eqref{eq-stratum-IP},
\[
\langle \Psi^{(a)}, \Phi^{(a)} \rangle_\mL = \binom{n}{a} \big[a \, \alpha_a(\Psi) \alpha_a(\Phi) + (n-a)\, \beta_a(\Psi) \beta_a(\Phi)\big].
\]
Since $\Phi^{(a)} \in \mL_a^{\mathrm{unif}}$, it holds that $\alpha_a(\Phi) = \beta_a(\Phi)$ (call this common value $c$); since $\Psi^{(a)} \in \mL_a^{\mathrm{eff}}$, $a\, \alpha_a(\Psi) + (n-a)\, \beta_a(\Psi) = 0$.
 Substituting,
\[
\langle \Psi^{(a)}, \Phi^{(a)} \rangle_\mL = \binom{n}{a} \cdot c \cdot \big[a\, \alpha_a(\Psi) + (n-a)\, \beta_a(\Psi)\big] = 0.
\]
The two one-dimensional subspaces span the two-dimensional $\mL_a^{\mathrm{S}}$ since they are not equal: 
 the line $\alpha = \beta$ does not satisfy $a \alpha + (n-a) \beta = 0$ except at the origin,
  so the intersection is trivial.
\end{proof}

\begin{remark}\label{rem-Sh-ED-coords}
The Shapley value and equal-division solution have stratum coordinates
\[
\alpha_a(\Sh) = \tfrac{1}{a}, \quad \beta_a(\Sh) = 0, \qquad \alpha_a(\ED) = \beta_a(\ED) = \tfrac{1}{n}.
\]
Hence $\ED^{(a)}$ lies entirely in $\mL_a^{\mathrm{unif}}$, while $(\ED - \Sh)^{(a)}$ lies entirely in $\mL_a^{\mathrm{eff}}$ 
(as one verifies readily: $a(\frac{1}{n}-\frac{1}{a}) + (n-a)\frac{1}{n} = \frac{a}{n} - 1 + \frac{n-a}{n} = 0$). 
The egalitarian Shapley line $\{\Feps\}$ moves $\Sh$ along $\mL_a^{\mathrm{eff}}$ 
at every stratum simultaneously and by the same multiple $\eps$.
\end{remark}

\subsection{Stratified Epsilons for Efficient Symmetric Linear Value Maps}\label{subsec-stratified-eps}

Throughout this subsection, let 
$\mL^{\mathrm{ESL}} := \mL^{\mathrm{S}} \cap \{\text{efficient}\} \cap \{\text{linear}\}$ 
denote the space of efficient symmetric linear value maps. 
(Linearity is implicit in $\mL$;  
  we include it in the notation for clarity.)

\begin{lemma}\label{lem-Eff-implies-eff-stratum}
If $\Psi \in \mL^{\mathrm{ESL}}$, then $(\Psi - \Sh)^{(a)} \in \mL_a^{\mathrm{eff}}$ for every $a < n$.
\end{lemma}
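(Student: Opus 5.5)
The plan is to show directly that $(\Psi - \Sh)^{(a)}$ lies in $\mL_a^{\mathrm{S}}$ and satisfies the defining constraint $a\,\alpha_a + (n-a)\,\beta_a = 0$ of $\mL_a^{\mathrm{eff}}$ from Definition~\ref{def-eff-unif}.

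The first step is membership in $\mL_a^{\mathrm{S}}$. Both $\Psi$ and $\Sh$ are symmetric and linear, so $\Psi - \Sh \in \mL^{\mathrm{S}}$. I would then check that the stratum projection preserves symmetry. Lemma~\ref{lem-sym-stratum-dim} describes symmetric maps via their values on unanimity games: the vector is constant on $A$ and constant on $N\setminus A$, with values depending only on $|A|$. Zeroing out the values on all $u_B$ with $|B|\neq a$ keeps this structure. The resulting map is therefore described by the pair $\big(\alpha_a(\Psi) - \tfrac1a,\ \beta_a(\Psi) - 0\big)$, using Remark~\ref{rem-Sh-ED-coords} for the coordinates of $\Sh$.

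A subtle point is that Definition~\ref{def-eff-unif} phrases $\mL_a^{\mathrm{eff}}$ in the $(\alpha_a,\beta_a)$ coordinates. I would therefore read $\mL_a^{\mathrm{S}}$ as the set of elements of $\mL_a$ with this coordinate structure, consistent with the dimension count in Lemma~\ref{lem-sym-stratum-dim}. I expect this identification to be the only mildly delicate step: whether the projection of a symmetric map is literally symmetric in the (Sym) sense is a side issue. The coordinate description is what the definition uses.

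The main step is the efficiency constraint. Fix $A$ with $|A| = a < n$. Efficiency of $\Psi$ gives
\[
\sum_{i\in N}\Psi_i(u_A) = u_A(N) = 1,
\]
that is, $a\,\alpha_a(\Psi) + (n-a)\,\beta_a(\Psi) = 1$. The same holds for $\Sh$, since $a\cdot\tfrac1a + (n-a)\cdot 0 = 1$. Subtracting these gives
\[
a\,\alpha_a(\Psi-\Sh) + (n-a)\,\beta_a(\Psi-\Sh) = 0,
\]
because $\alpha_a$ and $\beta_a$ are linear in the map. This is exactly the condition defining $\mL_a^{\mathrm{eff}}$, which completes the proof.
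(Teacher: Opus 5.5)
Your proof is correct and follows the paper's argument: efficiency of $\Psi$ and of $\Sh$ on each $u_A$ gives $\sum_{i}(\Psi-\Sh)_i(u_A)=0$, which in stratum coordinates is exactly $a\,\alpha_a+(n-a)\,\beta_a=0$. You hedge on the membership $(\Psi-\Sh)^{(a)}\in\mL_a^{\mathrm{S}}$, which the paper leaves implicit, but it holds in the literal (Sym) sense: $(\Psi-\Sh)^{(a)}(v)=(\Psi-\Sh)(v_a)$ with $v_a:=\sum_{|A|=a}h_v(A)\,u_A$, and any two players interchangeable in $v$ remain interchangeable in $v_a$ (the dividends are invariant under swapping them, and the swap preserves coalition size).
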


\begin{proof}
By efficiency of both $\Psi$ and $\Sh$, 
$\sum_{i \in N} \Psi_i(u_A) = u_A(N) = \sum_{i \in N} \Sh_i(u_A)$
  for every $A$, so $\sum_i (\Psi - \Sh)_i(u_A) = 0$.
   For $|A| = a < n$, this reads $a (\alpha_a(\Psi) - \alpha_a(\Sh)) + (n-a)(\beta_a(\Psi) - \beta_a(\Sh)) = 0$,
    which is exactly the efficient-direction condition.
\end{proof}

Since $\mL_a^{\mathrm{eff}}$ is one-dimensional and contains the nonzero vector $(\ED - \Sh)^{(a)}$ 
(which is nonzero for $a < n$ because $1/a \neq 1/n$),
 there is a unique scalar $\eps_a(\Psi) \in \R$ satisfying
\begin{equation}\label{eq-eps-a-defn}
(\Psi - \Sh)^{(a)} = \eps_a(\Psi) \cdot (\ED - \Sh)^{(a)} \quad \text{in } \mL_a^{\mathrm{eff}}.
\end{equation}
For $a = n$, any efficient symmetric value map satisfies $\alpha_n = 1/n$, 
 so $\Psi^{(n)} = \Sh^{(n)} = \ED^{(n)}$ and the analogous coordinate is degenerate; 
 we set $\eps_n(\Psi) := 1$ by convention.

\begin{theorem}[Stratified Egalitarian Shapley Flat]\label{thm-stratified-flat}
The map
\[
\Phi_{\mathrm{strat}} \colon \mL^{\mathrm{ESL}} \longrightarrow \R^{n-1}, \qquad \Psi \longmapsto (\eps_1(\Psi), \ldots, \eps_{n-1}(\Psi)),
\]
is a linear isomorphism. 
Equivalently, every efficient symmetric linear value map $\Psi$ admits a unique stratum-by-stratum decomposition
\[
\Psi^{(a)} = (1 - \eps_a(\Psi)) \cdot \Sh^{(a)} + \eps_a(\Psi) \cdot \ED^{(a)} \qquad (1 \leq a \leq n - 1),
\]
with $\Psi^{(n)} = \Sh^{(n)} = \ED^{(n)}$ forced. 
The classical egalitarian Shapley value $\Feps$ corresponds to the diagonal $\eps_1(\Feps) = \cdots = \eps_{n-1}(\Feps) = \eps$.
\end{theorem}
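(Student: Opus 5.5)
The proof has four parts: linearity, injectivity, surjectivity, and the equivalent decomposition.

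\textbf{Linearity.} The stratum projection $\Psi \mapsto \Psi^{(a)}$ is linear. The map $\Psi \mapsto \Psi - \Sh$ is affine, but on $\mL^{\mathrm{ESL}}$ it sends $\Psi$ into the one-dimensional space $\mL_a^{\mathrm{eff}}$ by Lemma~\ref{lem-Eff-implies-eff-stratum}. There $\eps_a(\Psi)$ is the coordinate with respect to the fixed nonzero vector $(\ED-\Sh)^{(a)}$. Concretely, comparing $\alpha$-coordinates gives
\[
\eps_a(\Psi) = \frac{\alpha_a(\Psi) - 1/a}{1/n - 1/a}.
\]
One caveat: $\mL^{\mathrm{ESL}}$ is an affine subspace, not a linear one, because efficiency is $\sum_i\Psi_i(v)=v(N)$ and $0$ is not efficient. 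So "linear isomorphism" should be read as an affine bijection, which becomes linear once $\Sh$ is taken as origin. Equivalently, $\Psi - \Sh \mapsto (\eps_a)$ is linear on the vector space of efficient-difference symmetric maps. I will state this reading explicitly and prove it in that form.

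\textbf{Injectivity.} Suppose $\eps_a(\Psi)=\eps_a(\Phi)$ for all $a\le n-1$. Then $(\Psi-\Phi)^{(a)}=0$ for $a<n$ by \eqref{eq-eps-a-defn}. For $a=n$, efficiency and symmetry force $\alpha_n=1/n$ for both maps, so $(\Psi-\Phi)^{(n)}=0$ too. By Lemma~\ref{lem-stratum-orthogonal}, $\Psi=\sum_a\Psi^{(a)}$, hence $\Psi=\Phi$.

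\textbf{Surjectivity.} This is the main step. Given $(\eps_1,\dots,\eps_{n-1})\in\R^{n-1}$, define $\Psi$ on the unanimity basis by
\[
\Psi(u_A) := (1-\eps_{|A|})\Sh(u_A) + \eps_{|A|}\ED(u_A)
\]
for $|A|<n$, and $\Psi(u_N) := \Sh(u_N)$. Then extend $\Psi$ linearly. Three things must be checked.
\begin{itemize}
\item \emph{Efficiency.} Each $\Psi(u_A)$ is an affine combination of two vectors whose entries sum to $u_A(N)=1$, so its entries also sum to $1$. Efficiency then extends to all games by linearity.
\item \emph{Symmetry.} This is the genuinely delicate part, since (Sym) refers to interchangeable players in arbitrary games $v$, not only in unanimity games. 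I would avoid a direct check by writing
\[
\Psi = \Sh + \sum_{a<n}\eps_a(\ED-\Sh)^{(a)}
\]
and showing that each $(\ED-\Sh)^{(a)}$ satisfies (Sym). Using Harsanyi dividends,
\[
(\ED-\Sh)^{(a)}_i(v) = \sum_{|A|=a} h_v(A)\bigl(1/n - \mathbf{1}_{i\in A}/a\bigr).
\]
Suppose $i,j$ are $v$-interchangeable. Then the transposition $\tau=(ij)$ fixes $v$, so $h_v(\tau A)=h_v(A)$ for every $A$. Reindexing the sum by $A\mapsto\tau A$ preserves $|A|$ and swaps $\mathbf{1}_{i\in A}$ with $\mathbf{1}_{j\in A}$. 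Hence $(\ED-\Sh)^{(a)}_i(v)=(\ED-\Sh)^{(a)}_j(v)$. The one fact that needs care is that $v$-interchangeability is equivalent to $v\circ\tau=v$ (on coalitions containing both or neither, and on those containing exactly one of them). Granting it, symmetry of $\Psi$ follows.
\item \emph{Coordinates.} By construction $\eps_a(\Psi)=\eps_a$ for each $a<n$.
\end{itemize}

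\textbf{The decomposition and the diagonal.} The displayed formula $\Psi^{(a)} = (1-\eps_a)\Sh^{(a)} + \eps_a\ED^{(a)}$ is just \eqref{eq-eps-a-defn} rearranged. For $\Feps = \Sh + \eps(\ED-\Sh)$, taking the stratum-$a$ component gives $(\Feps-\Sh)^{(a)}=\eps(\ED-\Sh)^{(a)}$. Uniqueness of the coordinate in the one-dimensional space $\mL_a^{\mathrm{eff}}$ then gives $\eps_a(\Feps)=\eps$ for every $a$, so the egalitarian Shapley line is the diagonal.
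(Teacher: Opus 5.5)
Your proposal is correct and follows the same route as the paper's proof. That route is: linearity from the one-dimensionality of $\mL_a^{\mathrm{eff}}$, injectivity by stratum-wise vanishing plus the forced top stratum, surjectivity by prescribing $\Psi$ on the unanimity basis, and the diagonal identification of $\Feps$. Two of your additions tighten it. First, $\mL^{\mathrm{ESL}}$ is an affine rather than linear subspace, so the map is an affine bijection that is linear only with $\Sh$ as origin; this is the reading implicit in the paper's ``$\eps_a(\Psi)=0 \Rightarrow \Psi=\Sh$'' injectivity step. Second, your Harsanyi-dividend/transposition check of (Sym) supplies the converse direction, which the paper's appeal to Lemma~\ref{lem-sym-stratum-dim} does not establish.
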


\begin{proof}
Linearity of $\Phi_{\mathrm{strat}}$: each $\eps_a(\Psi)$ is a linear functional of $\Psi$ by Lemma~\ref{lem-Eff-implies-eff-stratum} 
and the one-dimensionality of $\mL_a^{\mathrm{eff}}$ (the coefficient on $(\ED - \Sh)^{(a)}$ is linear in the coordinates of $(\Psi - \Sh)^{(a)}$).

Injectivity: if $\eps_a(\Psi) = 0$ for all $a < n$, then $(\Psi - \Sh)^{(a)} = 0$ for $a = 1, \ldots, n-1$, 
and $(\Psi - \Sh)^{(n)} = 0$ since $\Psi$ and $\Sh$ are both efficient and symmetric. 
By Lemma~\ref{lem-stratum-orthogonal}, 
we have $\Psi - \Sh = \sum_a (\Psi - \Sh)^{(a)} = 0$, i.e., $\Psi = \Sh$.

Surjectivity: given $(\eps_1, \ldots, \eps_{n-1}) \in \R^{n-1}$, define $\Psi$ by setting 
$\Psi^{(a)} := (1 - \eps_a) \Sh^{(a)} + \eps_a \ED^{(a)}$ for $a = 1, \ldots, n-1$ 
and $\Psi^{(n)} := \Sh^{(n)}$, and extending by linearity. 
On each unanimity game $u_A$ with $|A| = a < n$, the construction gives $\Psi(u_A) = (1 - \eps_a) \Sh(u_A) + \eps_a \ED(u_A)$, 
whose entries sum to $(1 - \eps_a) \cdot 1 + \eps_a \cdot 1 = 1 = u_A(N)$. 
On $u_N$, $\Psi(u_N) = \Sh(u_N) = (1/n, \ldots, 1/n)$, also summing to $1$. 
Hence $\Psi$ is efficient on every unanimity game, and by linearity efficient on every game in $\GN$. 
Symmetry of each $\Psi^{(a)}$ (Lemma~\ref{lem-sym-stratum-dim}) extends to symmetry of $\Psi$. 
By construction, $\eps_a(\Psi) = \eps_a$ for each $a$.

The classical egalitarian Shapley value $\Feps = (1-\eps)\Sh + \eps \ED$ has $(\Feps)^{(a)} = (1-\eps)\Sh^{(a)} + \eps \ED^{(a)}$ for every $a$, so $\eps_a(\Feps) = \eps$ for all $a$.
\end{proof}

Theorem~\ref{thm-stratified-flat} reframes the geometry of the entire paper. 
The space of efficient symmetric linear value maps is not an opaque high-dimensional object:
 it is canonically $\R^{n-1}$, and every such value map is a \emph{stratified egalitarian Shapley value}, 
 with one parameter $\eps_a$ per coalition size.
  The classical one-parameter family is exactly the diagonal slice.

\subsection{The Pythagorean Decomposition Revisited}\label{subsec-pythag-revisited}

We now express the global quantities $\eps^*(\Psi)$ and $R^2(\Psi)$ from 
Sections~\ref{sec-best-approx} and~\ref{sec-pythagorean} in terms of the stratified epsilons.

\begin{definition}\label{def-stratum-weights}
For $1 \leq a \leq n - 1$, the \emph{stratum weight} is
\[
w_a := \frac{\|(\ED - \Sh)^{(a)}\|^2_\mL}{D_n} = \frac{\binom{n}{a} \cdot (1/a - 1/n)}{D_n}.
\]
\end{definition}

By the formula for $D_n = \sum_{a=1}^{n-1} \binom{n}{a}(1/a - 1/n)$ from Section~\ref{sec-best-approx} (the $a = n$ term vanishes),
 the weights satisfy $\sum_{a=1}^{n-1} w_a = 1$, so $w$ defines a probability distribution on $\{1, \ldots, n-1\}$.

\begin{theorem}[Stratified Projection Formula]\label{thm-stratified-proj}
For every $\Psi \in \mL^{\mathrm{ESL}}$,
\[
\eps^*(\Psi) = \sum_{a=1}^{n-1} w_a \cdot \eps_a(\Psi) = \E_w[\eps_a(\Psi)],
\]
where $\E_w$ denotes expectation under the weights $\{w_a\}$.
\end{theorem}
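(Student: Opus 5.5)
The plan is to start from the global formula of Theorem~\ref{thm-optimal-eps},
\[
\eps^*(\Psi) = \frac{\langle \ED - \Sh, \Psi - \Sh \rangle_{\mL}}{D_n},
\]
and split the numerator by coalition size using the orthogonal stratification of Lemma~\ref{lem-stratum-orthogonal}. Applied to the pair $\ED - \Sh$ and $\Psi - \Sh$, that lemma gives
\[
\langle \ED - \Sh, \Psi - \Sh \rangle_{\mL} = \sum_{a=1}^{n} \langle (\ED - \Sh)^{(a)}, (\Psi - \Sh)^{(a)} \rangle_{\mL}.
\]
Here we use that the stratum projection is linear, so $(\Psi - \Sh)^{(a)} = \Psi^{(a)} - \Sh^{(a)}$.

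First I would dispose of the top stratum $a = n$. Remark~\ref{rem-Sh-ED-coords} gives $\alpha_n(\ED) = \alpha_n(\Sh) = 1/n$, so $(\ED - \Sh)^{(n)} = 0$ and that term vanishes. This holds without appealing to $\eps_n$ at all, which matters because the convention $\eps_n := 1$ must not enter the formula.

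For each $a \le n-1$, Lemma~\ref{lem-Eff-implies-eff-stratum} together with the defining relation \eqref{eq-eps-a-defn} lets me substitute $(\Psi - \Sh)^{(a)} = \eps_a(\Psi)\,(\ED - \Sh)^{(a)}$. The $a$-th summand then becomes $\eps_a(\Psi)\,\|(\ED - \Sh)^{(a)}\|^2_{\mL}$. By the norm computation in Section~\ref{sec-best-approx}, namely $\|(\ED - \Sh)(u_A)\|^2 = 1/a - 1/n$ summed over the $\binom{n}{a}$ sets of size $a$, this squared norm equals $\binom{n}{a}(1/a - 1/n) = w_a D_n$.

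Dividing by $D_n$ gives $\eps^*(\Psi) = \sum_{a=1}^{n-1} w_a \eps_a(\Psi)$. The identification with $\E_w$ is legitimate because $\sum_a w_a = 1$, as noted after Definition~\ref{def-stratum-weights}, and each $w_a \ge 0$ since $1/a \ge 1/n$.

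No step is a serious obstacle; the only point needing care is the $a = n$ stratum, handled above.
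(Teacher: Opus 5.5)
Your proposal is correct and follows the paper's proof essentially verbatim. You split $\langle \ED - \Sh, \Psi - \Sh\rangle_{\mL}$ by strata via Lemma~\ref{lem-stratum-orthogonal}, substitute \eqref{eq-eps-a-defn}, and divide by $D_n$, and your explicit check that $(\ED - \Sh)^{(n)} = 0$ makes precise a step the paper leaves implicit when it writes the sum only up to $n-1$.
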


\begin{proof}
By Lemma~\ref{lem-stratum-orthogonal} and \eqref{eq-eps-a-defn},
\[
\langle \ED - \Sh, \Psi - \Sh \rangle_\mL = \sum_{a=1}^{n-1} \langle (\ED - \Sh)^{(a)}, (\Psi - \Sh)^{(a)} \rangle_\mL = \sum_{a=1}^{n-1} \eps_a(\Psi) \cdot \|(\ED - \Sh)^{(a)}\|^2_\mL.
\]
By Theorem~\ref{thm-optimal-eps},
\[
\eps^*(\Psi) = \frac{\langle \ED - \Sh, \Psi - \Sh \rangle_\mL}{D_n} = \sum_{a=1}^{n-1} \eps_a(\Psi) \cdot \frac{\|(\ED - \Sh)^{(a)}\|^2_\mL}{D_n} = \sum_{a=1}^{n-1} w_a \cdot \eps_a(\Psi). \qedhere
\]
\end{proof}

\begin{remark}\label{rem-stratified-proj-general}
The proof of Theorem~\ref{thm-stratified-proj} does not use efficiency of $\Psi$. 
The same identity $\eps^*(\Psi) = \E_w[\eps_a(\Psi)]$ holds for every $\Psi \in \mL^{\mathrm{S}}$, with $\eps_a(\Psi)$ 
defined as the coefficient of $(\Psi - \Sh)^{(a)}$ along $(\ED - \Sh)^{(a)}$ in the orthogonal decomposition 
of $\mL_a^{\mathrm{S}}$ (see~\eqref{eq-eps-delta-decomp} below).
 This more general version is used implicitly in Section~\ref{subsec-Bz-stratified} when treating the Banzhaf value.
\end{remark}

\begin{theorem}[$R^2$ as One Minus Relative Variance]\label{thm-R2-variance}
For every $\Psi \in \mL^{\mathrm{ESL}}$ with $\Psi \neq \Sh$,
\[
R^2(\Psi) = \frac{\E_w[\eps_a(\Psi)]^2}{\E_w[\eps_a(\Psi)^2]} = 1 - \frac{\Var_w(\eps_a(\Psi))}{\E_w[\eps_a(\Psi)^2]},
\]
where $\Var_w(\eps_a) := \E_w[\eps_a^2] - \E_w[\eps_a]^2$.
 In particular, $R^2(\Psi) = 1$ if and only if $\eps_a(\Psi)$ is constant in $a$, i.e., 
 $\Psi$ lies on the egalitarian Shapley line $\{\Feps\}$.
\end{theorem}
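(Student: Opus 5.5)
The plan is to compute the numerator and the denominator of $R^2(\Psi)$ (Definition~\ref{def-Rsquared}) separately in stratified coordinates, and then compare them. The numerator is immediate from Theorem~\ref{thm-stratified-proj}:
\[
(\eps^*(\Psi))^2 D_n = D_n\,\E_w[\eps_a(\Psi)]^2 .
\]

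For the denominator $\|\Psi-\Sh\|^2_\mL$ I use the orthogonal splitting of Lemma~\ref{lem-stratum-orthogonal}, which gives
\[
\|\Psi-\Sh\|^2_\mL=\sum_{a=1}^{n}\|(\Psi-\Sh)^{(a)}\|^2_\mL .
\]
The $a=n$ term vanishes, because $\Psi^{(n)}=\Sh^{(n)}$ for efficient symmetric maps, as noted before Theorem~\ref{thm-stratified-flat}. For $a<n$, equation~\eqref{eq-eps-a-defn} gives $(\Psi-\Sh)^{(a)}=\eps_a(\Psi)(\ED-\Sh)^{(a)}$. Hence each term equals $\eps_a(\Psi)^2\|(\ED-\Sh)^{(a)}\|^2_\mL=\eps_a(\Psi)^2 w_a D_n$, using Definition~\ref{def-stratum-weights}. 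Summing over $a$ yields
\[
\|\Psi-\Sh\|^2_\mL=D_n\,\E_w[\eps_a(\Psi)^2].
\]

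This quantity is nonzero for $\Psi\neq\Sh$. Indeed, by Theorem~\ref{thm-stratified-flat} (injectivity), not all $\eps_a$ vanish. Every $w_a$ is strictly positive, since $1/a>1/n$ for $a<n$. Dividing the numerator by the denominator cancels $D_n$ and gives the first identity. The second identity is the definition of $\Var_w$, rewritten as $\E_w[\eps_a]^2=\E_w[\eps_a^2]-\Var_w(\eps_a)$.

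For the characterization of $R^2(\Psi)=1$, the formula shows this holds iff $\Var_w(\eps_a(\Psi))=0$. Because the $w_a$ are strictly positive, this is equivalent to $\eps_a(\Psi)$ being constant in $a$. If that constant is $\eps$, then $\Psi$ and $\Feps$ have the same image under the isomorphism of Theorem~\ref{thm-stratified-flat}, so $\Psi=\Feps$; the converse is the diagonal statement of that theorem.

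The only delicate points are the vanishing of the top stratum $a=n$ and the strict positivity of all the weights. The latter is what makes zero variance equivalent to constancy, so I will state it explicitly.
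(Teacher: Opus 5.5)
Your proof is correct and follows the paper's argument essentially verbatim: you get the numerator from Theorem~\ref{thm-stratified-proj}, the denominator $D_n\,\E_w[\eps_a(\Psi)^2]$ from Lemma~\ref{lem-stratum-orthogonal} together with \eqref{eq-eps-a-defn}, and the equality case from strict positivity of the $w_a$. You also make explicit three small points the paper leaves implicit: that the $a=n$ stratum drops out, that the denominator is nonzero for $\Psi\neq\Sh$, and that constant $\eps_a$ forces $\Psi=\Feps$ by injectivity of $\Phi_{\mathrm{strat}}$.
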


\begin{proof}
By Lemma~\ref{lem-stratum-orthogonal} and \eqref{eq-eps-a-defn},
\[
\|\Psi - \Sh\|^2_\mL = \sum_{a=1}^{n-1} \|(\Psi - \Sh)^{(a)}\|^2_\mL = \sum_{a=1}^{n-1} \eps_a(\Psi)^2 \cdot \|(\ED - \Sh)^{(a)}\|^2_\mL = D_n \cdot \E_w[\eps_a(\Psi)^2].
\]
By Theorem~\ref{thm-stratified-proj}, $(\eps^*(\Psi))^2 \cdot D_n = D_n \cdot \E_w[\eps_a(\Psi)]^2$. 
Therefore, we have
\[
R^2(\Psi) = \frac{(\eps^*(\Psi))^2 \cdot D_n}{\|\Psi - \Sh\|^2_\mL} = \frac{D_n \cdot \E_w[\eps_a]^2}{D_n \cdot \E_w[\eps_a^2]} = \frac{\E_w[\eps_a]^2}{\E_w[\eps_a^2]} = 1 - \frac{\Var_w(\eps_a)}{\E_w[\eps_a^2]}.
\]
The last claim follows because $\Var_w(\eps_a) = 0$
 if and only if 
 $\eps_a$ is $w$-almost-surely constant in $a$, which is 
 equivalent (since each $w_a > 0$) to $\eps_a$ being constant.
\end{proof}

Theorem~\ref{thm-R2-variance} is the structural analogue of the regression coefficient of determination: 
$R^2(\Psi)$ is the squared ratio of the weighted mean of the stratified epsilons to their weighted root-mean-square, 
and equals one exactly when the epsilons are constant across strata. 
It also gives a sharp diagnostic: $1 - R^2(\Psi)$ measures
  \emph{how far the stratified epsilons spread around their mean}, normalized by their second moment.

\subsection{Examples Recovered}\label{subsec-examples}

Theorem~\ref{thm-R2-variance} reproduces every numerical computation of Section~\ref{sec-pythagorean} as a single line of arithmetic. 
We illustrate.

\begin{example}[Equal-Surplus-Division Value]\label{ex-ESD-stratified}
On unanimity games, $\ESD$ agrees with $\Sh$ on singletons and with $\ED$ on larger coalitions: $\ESD(u_{\{i_0\}}) = e_{i_0} = \Sh(u_{\{i_0\}})$ gives $\eps_1(\ESD) = 0$, 
while $\ESD(u_A) = (1/n, \ldots, 1/n) = \ED(u_A)$ for $|A| \geq 2$ gives $\eps_a(\ESD) = 1$ for $a \geq 2$.
 In stratified coordinates,
\[
\Phi_{\mathrm{strat}}(\ESD) = (0, 1, 1, \ldots, 1) \in \R^{n-1}.
\]
By Theorem~\ref{thm-stratified-proj}, $\eps^*(\ESD) = \sum_{a \geq 2} w_a = 1 - w_1 = 1 - (n-1)/D_n$, 
recovering Proposition~\ref{prop-eps-ESD}.
 By Theorem~\ref{thm-R2-variance}, since $\eps_a^2 = \eps_a$ for these values,
\[
R^2(\ESD) = \frac{(1 - w_1)^2}{1 - w_1} = 1 - w_1 = \frac{D_n - (n-1)}{D_n},
\]
recovering Proposition~\ref{prop-numerics-n4} and the asymptotic statement of Theorem~\ref{thm-asym-ESD} as a one-line corollary.
\end{example}

\begin{example}[Solidarity Value]\label{ex-So-stratified}
The solidarity value of Nowak and Radzik is known to be efficient, symmetric, and linear, 
and hence lies in $\mL^{\mathrm{ESL}}$.
 Its stratified epsilons admit the closed form
\[
\eps_a(\So) = a \sum_{s=a+1}^{n} \frac{\binom{n-a-1}{s-a-1}}{s \cdot \binom{n-1}{s-1}} \qquad (1 \leq a \leq n-1).
\]
The proof of this in fact follows from the unanimity-game evaluation $\So_i(u_A) = a \sum_{s} \binom{\cdot}{\cdot} \cdot p(s, n)/s$
 (where the binomial coefficient depends on whether $i \in A$, and $p(s,n) = (n-s)!(s-1)!/n!$ is the Shapley weight), 
plus the fact that $\eps_a(\Psi) = n \cdot \beta_a(\Psi)$ for any $\Psi \in \mL^{\mathrm{ESL}}$ since efficiency forces 
$\eps_a$ to be the unique scalar making $(\Psi - \Sh)^{(a)} = \eps_a (\ED - \Sh)^{(a)}$, whose $\beta$-component is $\eps_a/n$.
 (The details are in Appendix~\ref{app-So-formula}.)

For $n = 4$,
\[
\Phi_{\mathrm{strat}}(\So) = \left(\tfrac{23}{36}, \tfrac{13}{18}, \tfrac{3}{4}\right),
\]
giving $\E_w[\eps_a] = 39/58 = \eps^*(\So)$ and $\Var_w(\eps_a) = 19/10092$.
From this it follows immediately that 
 $R^2(\So) = (39/58)^2 / (79/174) = 4563/4582$, recovering Proposition~\ref{prop-So-projection}.

For each fixed $a$, $\eps_a(\So) \to 1$ as $n \to \infty$ (Lemma~\ref{lem-So-eps-asymp}).
Combined with Theorem~\ref{thm-R2-variance}, this yields
\begin{equation}\label{eq-So-asymp}
R^2(\So) \to 1 \qquad \text{as } n \to \infty.
\end{equation}
The convergence is rapid. 
Indeed, 
numerical computation gives $1 - R^2(\So) \approx 4 \cdot 10^{-3}$ at $n = 4$, $4 \cdot 10^{-4}$ at $n = 12$, $9 \cdot 10^{-6}$ at $n = 20$. 
Asymptotically, $\Var_w(\eps_a(\So))$ vanishes because $\eps_a(\So) \to 1$ pointwise in $a$, while $\E_w[\eps_a^2] \to 1$ as well, 
so the relative variance vanishes and $R^2(\So) \to 1$ by Theorem~\ref{thm-R2-variance}.
\end{example}

\subsection{Extension to Nonefficient Targets: The Banzhaf Value}\label{subsec-Bz-stratified}

The Banzhaf value $\Bz$ is symmetric and linear but \emph{not} efficient, 
so it lies in $\mL^{\mathrm{S}}$ but not in $\mL^{\mathrm{ESL}}$. 
The stratified-flat parametrization extends to the larger space $\mL^{\mathrm{S}}$ by enlarging the codomain to capture the efficiency defect.

Fix $\Psi \in \mL^{\mathrm{S}}$. 
By Lemma~\ref{lem-eff-unif-orthogonal}, for each $a < n$, 
the difference $(\Psi - \Sh)^{(a)} \in \mL_a^{\mathrm{S}}$ decomposes uniquely as
\begin{equation}\label{eq-eps-delta-decomp}
(\Psi - \Sh)^{(a)} = \eps_a(\Psi) \cdot (\ED - \Sh)^{(a)} + \delta_a(\Psi) \cdot U^{(a)},
\end{equation}
where $U^{(a)}$ is the unit vector in $\mL_a^{\mathrm{unif}}$ given by $\alpha_a(U^{(a)}) = \beta_a(U^{(a)}) = 1$, and $\delta_a(\Psi) \in \R$ is the \emph{efficiency defect} of $\Psi$ at stratum $a$:
\[
\delta_a(\Psi) = \frac{a \, \alpha_a(\Psi) + (n-a)\, \beta_a(\Psi) - 1}{n} = \frac{1}{n}\left(\sum_{i \in N} \Psi_i(u_A) - 1\right).
\]
For an efficient $\Psi$, $\delta_a(\Psi) = 0$ for every $a$, 
recovering Theorem~\ref{thm-stratified-flat}.

\begin{proposition}[Banzhaf in Stratified Coordinates]\label{prop-Bz-stratified}
For every $a \in \{1, \ldots, n-1\}$,
\[
\eps_a(\Bz) = 1 - \frac{a}{2^{a-1}}, \qquad \delta_a(\Bz) = -\frac{1}{n}\left(1 - \frac{a}{2^{a-1}}\right) = -\frac{\eps_a(\Bz)}{n}.
\]
\end{proposition}

\begin{proof}
On unanimity games, $\alpha_a(\Bz) = 1/2^{a-1}$ and $\beta_a(\Bz) = 0$. 
Then $\eps_a(\Bz)$ is the coefficient of $(\Psi - \Sh)^{(a)}$ in the $(\ED - \Sh)^{(a)}$ direction; 
computing the inner product against $(\ED - \Sh)^{(a)}$ in stratum-$a$ coordinates,
\[
\eps_a(\Bz) = \frac{a \, (\alpha_a(\Bz) - \alpha_a(\Sh))(\alpha_a(\ED) - \alpha_a(\Sh)) + (n-a)\, (\beta_a(\Bz) - 0)(\beta_a(\ED))}{a(\alpha_a(\ED) - \alpha_a(\Sh))^2 + (n-a)\beta_a(\ED)^2}.
\]
Substituting $\alpha_a(\Bz) - \alpha_a(\Sh) = 1/2^{a-1} - 1/a$, $\beta_a(\Bz) = 0$, $\alpha_a(\ED) - \alpha_a(\Sh) = 1/n - 1/a$, $\beta_a(\ED) = 1/n$,
 and simplifying via the algebraic identity $(1/n - 1/a) = -(n-a)/(na)$,
\[
\eps_a(\Bz) = \frac{a \cdot (1/2^{a-1} - 1/a) \cdot (-(n-a)/(na))}{(n-a)/(na)} = -a \cdot (1/2^{a-1} - 1/a) = 1 - \frac{a}{2^{a-1}}.
\]
For $\delta_a(\Bz)$, $a \alpha_a(\Bz) + (n-a) \beta_a(\Bz) = a/2^{a-1}$, so $\delta_a(\Bz) = (a/2^{a-1} - 1)/n = -\eps_a(\Bz)/n$.
\end{proof}

\begin{theorem}[Generalized Pythagorean Decomposition]\label{thm-gen-pythag}
For every $\Psi \in \mL^{\mathrm{S}}$,
\[
\|\Psi - \Sh\|^2_\mL = \sum_{a=1}^{n-1} \binom{n}{a}\big(\tfrac{1}{a} - \tfrac{1}{n}\big) \eps_a(\Psi)^2 + \sum_{a=1}^{n-1} \binom{n}{a} \cdot n \cdot \delta_a(\Psi)^2 + \|(\Psi - \Sh)^{(n)}\|^2_\mL,
\]
and
\[
R^2(\Psi) = \frac{\big(\sum_a w_a \eps_a(\Psi)\big)^2 \cdot D_n}{\|\Psi - \Sh\|^2_\mL}.
\]
\end{theorem}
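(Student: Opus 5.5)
The plan is to decompose $\|\Psi - \Sh\|^2_\mL$ stratum by stratum, then use orthogonality inside each stratum. First, Lemma~\ref{lem-stratum-orthogonal} gives
\[
\|\Psi - \Sh\|^2_\mL = \sum_{a=1}^{n-1} \|(\Psi - \Sh)^{(a)}\|^2_\mL + \|(\Psi - \Sh)^{(n)}\|^2_\mL .
\]
The top stratum is kept as a separate term, since for nonefficient $\Psi$ it need not vanish.

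Fix $a < n$ and insert the decomposition~\eqref{eq-eps-delta-decomp}. Here $(\ED - \Sh)^{(a)} \in \mL_a^{\mathrm{eff}}$ by Remark~\ref{rem-Sh-ED-coords}, and $U^{(a)} \in \mL_a^{\mathrm{unif}}$ by definition. These subspaces are orthogonal by Lemma~\ref{lem-eff-unif-orthogonal}, so the cross term vanishes and
\[
\|(\Psi - \Sh)^{(a)}\|^2_\mL = \eps_a(\Psi)^2 \, \|(\ED - \Sh)^{(a)}\|^2_\mL + \delta_a(\Psi)^2 \, \|U^{(a)}\|^2_\mL .
\]
The two norms follow from~\eqref{eq-stratum-IP}. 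The first is $\binom{n}{a}(1/a - 1/n)$, as computed in Section~\ref{sec-best-approx}. The second is $\binom{n}{a}\,[a + (n-a)] = \binom{n}{a}\, n$. Summing over $a$ gives the first identity.

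Before this, I would confirm that the stated formula for $\delta_a(\Psi)$ really is the $U^{(a)}$-coefficient. Take the functional $\Theta \mapsto a\,\alpha_a(\Theta) + (n-a)\,\beta_a(\Theta)$. It kills $(\ED - \Sh)^{(a)}$ and sends $U^{(a)}$ to $n$. Applied to $(\Psi - \Sh)^{(a)}$, it gives $\sum_i \Psi_i(u_A) - 1$, because $\Sh$ is efficient. Hence $\delta_a = (\sum_i \Psi_i(u_A) - 1)/n$, matching the definition.

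For the second identity, $R^2(\Psi)$ is by Definition~\ref{def-Rsquared} equal to $(\eps^*(\Psi))^2 D_n / \|\Psi - \Sh\|^2_\mL$, so it suffices to show $\eps^*(\Psi) = \sum_a w_a \eps_a(\Psi)$ for all $\Psi \in \mL^{\mathrm{S}}$. This is the extension flagged in Remark~\ref{rem-stratified-proj-general}, and I would argue it directly via Theorem~\ref{thm-optimal-eps}. Expand $\langle \ED - \Sh, \Psi - \Sh\rangle_\mL$ over strata. The $a = n$ term vanishes because $(\ED - \Sh)^{(n)} = 0$. For $a < n$, the $U^{(a)}$ part is orthogonal to $(\ED - \Sh)^{(a)}$, so only $\eps_a(\Psi)\, \|(\ED - \Sh)^{(a)}\|^2_\mL$ survives. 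Dividing by $D_n$ gives the weighted mean. The identity is stated for $\Psi \ne \Sh$; when $\Psi = \Sh$ both sides follow the convention.

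There is no serious obstacle; the work is bookkeeping. The points to get right are that $U^{(a)}$ is not a unit vector in $\|\cdot\|_\mL$, so its squared norm $n\binom{n}{a}$ must be tracked, and that the $a = n$ stratum must be kept as its own term.
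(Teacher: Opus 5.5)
Your proposal is correct and follows essentially the same route as the paper: stratum-wise orthogonality (Lemma~\ref{lem-stratum-orthogonal}), then the orthogonal eff/unif split inside each stratum with norms $\binom{n}{a}(1/a-1/n)$ and $n\binom{n}{a}$, and finally the $R^2$ formula via the efficiency-free version of Theorem~\ref{thm-stratified-proj}. Your extra checks make explicit what the paper leaves implicit: that the stated $\delta_a$ really is the $U^{(a)}$-coefficient, and that the $a=n$ term drops out of $\langle \ED-\Sh,\Psi-\Sh\rangle_\mL$ because $(\ED-\Sh)^{(n)}=0$.
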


\begin{proof}
By Lemmas~\ref{lem-stratum-orthogonal} and~\ref{lem-eff-unif-orthogonal}, 
$\|\Psi - \Sh\|^2_\mL = \sum_a \|(\Psi - \Sh)^{(a)}\|^2_\mL$, and within each stratum $a < n$ the eff/unif 
decomposition contributes $\eps_a(\Psi)^2 \|(\ED - \Sh)^{(a)}\|^2_\mL + \delta_a(\Psi)^2 \|U^{(a)}\|^2_\mL$. 
The norms are $\|(\ED - \Sh)^{(a)}\|^2_\mL = \binom{n}{a}(1/a - 1/n)$ (Section~\ref{sec-best-approx}) 
and $\|U^{(a)}\|^2_\mL = \binom{n}{a} \cdot n$ (from \eqref{eq-stratum-IP} with $\alpha_a = \beta_a = 1$). 
Combining gives the squared-norm formula. 
The $R^2$ formula then follows from $\langle \Psi - \Sh, \ED - \Sh \rangle_\mL = D_n \sum_a w_a \eps_a(\Psi)$ (Theorem~\ref{thm-stratified-proj}, 
whose proof in fact  did not  use efficiency of $\Psi$).
\end{proof}

\begin{corollary}[Stratified Picture of $R^2(\Bz) \to 1/2$]\label{cor-Bz-asym-stratified}
As $n \to \infty$, $R^2(\Bz) \to 1/2$.
\end{corollary}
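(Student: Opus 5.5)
We will prove the limit directly from Theorem~\ref{thm-gen-pythag} combined with Proposition~\ref{prop-Bz-stratified}. First we compute the stratum-$n$ term: $\alpha_n(\Bz) = 1/2^{n-1}$ and $\alpha_n(\Sh) = 1/n$. Hence $\|(\Bz - \Sh)^{(n)}\|^2_\mL = n(1/2^{n-1} - 1/n)^2 = O(1/n)$, which is bounded.

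Next we substitute $\delta_a(\Bz) = -\eps_a(\Bz)/n$. The uniform contribution is then $\sum_{a<n} \binom{n}{a}\, n\, \eps_a^2/n^2 = \sum_{a<n}\binom{n}{a}\eps_a^2/n$, while the efficient contribution is $\sum_{a<n}\binom{n}{a}(1/a - 1/n)\eps_a^2$, where $\eps_a := \eps_a(\Bz) = 1 - a/2^{a-1}$. Set
\[
E_n := \sum_{a=1}^{n-1}\binom{n}{a}\Big(\frac1a-\frac1n\Big)\eps_a^2 = D_n\,\E_w[\eps_a^2], \qquad U_n := \frac1n\sum_{a=1}^{n-1}\binom{n}{a}\eps_a^2 .
\]
Theorem~\ref{thm-gen-pythag} then gives
\[
R^2(\Bz) = \frac{D_n\,\E_w[\eps_a]^2}{E_n + U_n + O(1/n)} .
\]

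It remains to show two asymptotic facts: (i) $E_n/U_n \to 1$, and (ii) $D_n\E_w[\eps_a]^2 / E_n \to 1$, i.e.\ the relative weighted variance of the $\eps_a$ vanishes. Together they give $R^2 \to E_n/(2E_n) = 1/2$.

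For (ii), note that the binomial mass $\binom{n}{a}/a$ concentrates around $a \approx n/2$, where $\eps_a = 1 - O(n2^{-n/2})$. Strata with $a \le n/4$ carry total weight exponentially small relative to $D_n \asymp 2^{n+1}/n$, and there $|\eps_a|$ is bounded since $a/2^{a-1} \le 1$. Hence $\E_w[\eps_a] \to 1$ and $\E_w[\eps_a^2] \to 1$. Alternatively, one can use the closed forms already obtained: by Proposition~\ref{prop-eps-banzhaf}, $\eps^*(\Bz) = 1 + (2 - 2(3/2)^{n-1})/D_n \to 1$, because $(3/2)^n = o(2^n/n)$.

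For (i) we compare the coefficients $\binom{n}{a}(1/a - 1/n)$ and $\binom{n}{a}/n$. Since $\eps_a^2 \to 1$ on the bulk, it suffices to show $D_n \sim \frac1n\sum_{a<n}\binom{n}{a} = (2^n - 2)/n$. We know $D_n = H_n - (2^n-1)/n$, so the claim reduces to $H_n \sim 2\cdot 2^n/n$. This is the standard asymptotic $\sum_a \binom{n}{a}/a \sim 2^{n+1}/n$, which follows from $\sum_a \binom{n}{a}/a = \sum_{k=1}^n (2^k - 1)/k$, or from Laplace concentration at $a = n/2$, where $1/a \approx 2/n$. This step, the precise asymptotic $H_n \sim 2^{n+1}/n$ together with the uniform control of the $\eps_a^2$ weighting, is the main obstacle. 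I would handle it with the identity $H_n = \sum_{k=1}^n (2^k-1)/k$ (proved by induction using $\binom{n}{a} = \binom{n-1}{a} + \binom{n-1}{a-1}$), followed by a Stolz--Ces\`aro argument.

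Once $D_n \sim 2^n/n$ and $\sum_{a<n}\binom{n}{a}\eps_a^2 = 2^n(1 + o(1))$, we get $E_n \sim U_n \sim 2^n/n$ and $D_n\E_w[\eps_a]^2 \sim 2^n/n$. Therefore $R^2(\Bz) \to 1/2$.
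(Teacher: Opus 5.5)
Your proposal is correct and follows essentially the same route as the paper: decompose via Theorem~\ref{thm-gen-pythag} with $\delta_a(\Bz) = -\eps_a(\Bz)/n$, use $\eps^*(\Bz)=\E_w[\eps_a(\Bz)]\to 1$ for the numerator, and reduce the denominator to $H_n \sim 2^{n+1}/n \sim 2D_n$. The differences are only technical. The paper merges your $E_n$ and $U_n$ into $\sum_{a<n}\binom{n}{a}\eps_a(\Bz)^2/a$ via $(1/a-1/n)+1/n=1/a$, and it bounds the gap to $H_n$ explicitly by $4[(3/2)^n-1]$ using $1-\eps_a^2\le 2a/2^{a-1}$, rather than by binomial concentration; it also takes $H_n\sim 2^{n+1}/n$ from Lemmas~\ref{lem-Hn-formula} and~\ref{lem-Dn-asymptotic}, where you use a Stolz--Ces\`aro argument.
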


\begin{proof}
By Theorem~\ref{thm-gen-pythag} and Proposition~\ref{prop-Bz-stratified},
 using $\delta_a(\Bz) = -\eps_a(\Bz)/n$ and the algebraic identity $(1/a - 1/n) + 1/n = 1/a$,
\begin{align}
\|\Bz - \Sh\|^2_\mL 
&= \sum_{a=1}^{n-1}\!\binom{n}{a}\!\left(\tfrac{1}{a} - \tfrac{1}{n}\right)\!\eps_a(\Bz)^2 + \sum_{a=1}^{n-1}\!\binom{n}{a} n \cdot \frac{\eps_a(\Bz)^2}{n^2} + \|(\Bz-\Sh)^{(n)}\|^2_\mL \notag \\
&= \sum_{a=1}^{n-1} \frac{\binom{n}{a}}{a}\, \eps_a(\Bz)^2 + \|(\Bz-\Sh)^{(n)}\|^2_\mL. \label{eq-Bz-norm-clean}
\end{align}
Since $1 - \eps_a(\Bz) = a/2^{a-1}$ (Proposition~\ref{prop-Bz-stratified}) and $1 - \eps_a^2 \leq 2(1 - \eps_a)$,
\[
0 \;\leq\; \sum_{a=1}^{n-1} \frac{\binom{n}{a}}{a}\bigl(1 - \eps_a(\Bz)^2\bigr) \;\leq\; 2 \sum_{a=1}^{n-1} \frac{\binom{n}{a}}{2^{a-1}} \;\leq\; 4\!\left[\!\left(\tfrac{3}{2}\right)^{\!n} - 1\right]\!.
\]
Combined with $\sum_{a=1}^{n-1} \binom{n}{a}/a = H_n - 1/n$ and $\|(\Bz-\Sh)^{(n)}\|^2_\mL = n(1/2^{n-1} - 1/n)^2 = O(1/n)$, equation~\eqref{eq-Bz-norm-clean} gives
\[
\|\Bz - \Sh\|^2_\mL = H_n - O\!\bigl((3/2)^n\bigr) = H_n\bigl(1 + o(1)\bigr).
\]
By Lemma~\ref{lem-Hn-formula}, $H_n = \Sigma_n + 2^n/n - H_n^{\mathrm{harm}}$ where $\Sigma_n := \sum_{j=1}^{n-1} 2^j/j$. 
The proof of Lemma~\ref{lem-Dn-asymptotic} establishes $\Sigma_n = (2^n/n)(1 + O(1/n))$, 
and $H_n^{\mathrm{harm}} = O(\log n)$ is dominated by $2^n/n$. 
Therefore,  $H_n = (2^{n+1}/n)(1 + O(1/n))$. 
Combined with $D_n = (2^n/n)(1 + O(1/n))$ from Lemma~\ref{lem-Dn-asymptotic}, 
$H_n/D_n \to 2$ and hence $\|\Bz - \Sh\|^2_\mL = 2 D_n(1 + o(1))$. 
By Lemma~\ref{lem-eps-Bz-asymptotic}, $\eps^*(\Bz) \to 1$, 
so it holds that  $(\eps^*(\Bz))^2 \cdot D_n = D_n(1 + o(1))$.
 Dividing,
\[
R^2(\Bz) = \frac{(\eps^*(\Bz))^2 \cdot D_n}{\|\Bz - \Sh\|^2_\mL} = \frac{D_n(1+o(1))}{2 D_n(1 + o(1))} \longrightarrow \tfrac{1}{2}. \qedhere
\]
\end{proof}

The asymptotic half-alignment of the Banzhaf value established in Corollary~\ref{cor-Bz-asym-stratified} now has a transparent structural cause: 
the identity $\delta_a(\Bz) = -\eps_a(\Bz)/n$ of Proposition~\ref{prop-Bz-stratified} locks the efficiency defect to the egalitarian-Shapley deviation in a $1{:}1$ ratio at every stratum. 
Summed across strata, this yields $\|\Bz - \Sh\|^2 \sim H_n \sim 2 D_n$ against a projection contribution of only $D_n$,
 leaving exactly half of the Banzhaf value's deviation from $\Sh$ permanently outside the egalitarian Shapley axis.

\subsection{The Three-Way Classification, Structurally}\label{subsec-classification-structural}

Combining Examples~\ref{ex-ESD-stratified}, \ref{ex-So-stratified},
 and Proposition~\ref{prop-Bz-stratified}, the three-way classification of Section~\ref{subsec-classification}
 acquires a sharp structural interpretation:

\begin{itemize}[leftmargin=2em]
\item \emph{Equal-surplus-division}: $\Phi_{\mathrm{strat}}(\ESD) = (0, 1, \ldots, 1)$.
  This is almost-diagonal, deviating from the egalitarian Shapley line only at the singleton stratum, and the deviation vanishes asymptotically as $w_1 \to 0$. 
  Hence, $R^2(\ESD) \to 1$.
\item \emph{Solidarity}: $\Phi_{\mathrm{strat}}(\So) = (\eps_1(\So), \ldots, \eps_{n-1}(\So))$ with $\eps_a(\So) \to 1$ for every fixed $a$, but the entries are \emph{distinct} at finite $n$. 
 The deviation $1 - R^2(\So)$ is the relative weighted variance of the $\eps_a(\So)$, 
 which is small but nonzero, approaching zero as $n \to \infty$.
\item \emph{Banzhaf}: the Banzhaf value lies in $\mL^{\mathrm{S}}$ but not in $\mL^{\mathrm{ESL}}$, since it fails efficiency. 
The efficiency defect satisfies $\delta_a(\Bz) = -\eps_a(\Bz)/n$ at every stratum, which forces $R^2(\Bz) \to 1/2$.
\end{itemize}

The classification is thus not a numerical coincidence at $n = 4$: 
each of $\ESD$, $\So$, $\Bz$ occupies a structurally distinct location in $\mL^{\mathrm{S}}$, distinguished by which type of deviation from $\Sh$ each one displays. 
The three types are: a single off-diagonal stratum coordinate ($\ESD$), a smooth profile of stratified epsilons ($\So$), or a nontrivial efficiency defect ($\Bz$).

\section{Asymptotic Behavior as $n \to \infty$}\label{sec-asymptotic}

The numerical results of Section~\ref{sec-pythagorean} were established at $n = 4$.
 We now investigate how the optimal parameter $\eps^*(\Psi)$ and the goodness-of-fit $R^2(\Psi)$ behave as the number of players grows.
  The main results of this section establish $R^2(\ESD) \to 1$ (perfect asymptotic alignment with the egalitarian Shapley line) and $R^2(\Bz) \to 1/2$ (a permanent half-and-half geometric split).
   The corresponding result for the solidarity value, $R^2(\So) \to 1$, was already obtained in Section~\ref{sec-spectral} via the spectral decomposition (Equation~\eqref{eq-So-asymp});
    together the three results form the asymptotic three-way classification.

\subsection{Asymptotic Estimate of $D_n$}

We start with the denominator $D_n = \langle \ED - \Sh, \ED - \Sh \rangle_{\mL}$ that appears throughout.

\begin{lemma}\label{lem-Hn-formula}
For all $n \geq 1$,
\[
H_n = \sum_{a=1}^n \frac{\binom{n}{a}}{a} = \sum_{j=1}^n \frac{2^j}{j} - \sum_{j=1}^n \frac{1}{j}.
\]
\end{lemma}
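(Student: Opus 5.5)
The identity can be proved by induction on $n$, driven by a first-order recurrence for $H_n$. Write $T_n := \sum_{j=1}^n 2^j/j - \sum_{j=1}^n 1/j = \sum_{j=1}^n (2^j - 1)/j$. Then $T_n - T_{n-1} = (2^n - 1)/n$ and $T_1 = 1$. Since $H_1 = \binom{1}{1}/1 = 1 = T_1$, it suffices to show that
\[
H_n - H_{n-1} = \frac{2^n - 1}{n} \qquad (n \geq 2).
\]

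To derive this recurrence, I would first use Pascal's rule $\binom{n}{a} = \binom{n-1}{a} + \binom{n-1}{a-1}$. This gives
\[
H_n = \sum_{a=1}^{n-1} \frac{\binom{n-1}{a}}{a} + \sum_{a=1}^{n} \frac{\binom{n-1}{a-1}}{a} = H_{n-1} + \sum_{a=1}^{n} \frac{\binom{n-1}{a-1}}{a}.
\]
In the first sum the $a = n$ term vanishes because $\binom{n-1}{n} = 0$. For the second sum I would use the absorption identity $\binom{n-1}{a-1}/a = \binom{n}{a}/n$, which is the same identity $a\binom{n}{a} = n\binom{n-1}{a-1}$ used in the proof of Proposition~\ref{prop-eps-banzhaf}. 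The second sum then becomes $\frac{1}{n}\sum_{a=1}^n \binom{n}{a} = (2^n - 1)/n$, which is exactly the required increment. Telescoping from $H_1 = 1$ completes the proof.

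A standby route, if a direct argument is preferred, is the integral representation $H_n = \int_0^1 \frac{(1+x)^n - 1}{x}\,dx$. One would substitute $y = 1 + x$, write $\frac{y^n - 1}{y - 1} = \sum_{k=0}^{n-1} y^k$, and integrate from $1$ to $2$. This gives $\sum_{k=1}^{n} (2^k - 1)/k$ directly. I do not expect any real obstacle. The only step needing care is bookkeeping of the boundary terms ($a = n$ in the first sum, $a = 1$ in the second) when splitting via Pascal's rule.
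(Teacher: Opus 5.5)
Your proof is correct, but your main route differs from the paper's. The paper's proof is exactly your standby argument: write $1/a = \int_0^1 t^{a-1}\,dt$, sum under the integral with the binomial theorem to get $\int_0^1 ((1+t)^n - 1)/t\,dt$, substitute $u = 1+t$, and expand the geometric sum. Your induction stays entirely within finite combinatorics. Pascal's rule splits off $H_{n-1}$, and the $a = n$ term correctly vanishes. Absorption then turns the remainder into $\frac{1}{n}\sum_{a=1}^n \binom{n}{a} = (2^n-1)/n$, which is exactly the increment of $\sum_{j\le n}(2^j-1)/j$, and the base case $H_1 = 1$ checks out.

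The integral route gives a one-shot derivation. It explains where each summand $(2^j - 1)/j = \int_1^2 u^{j-1}\,du$ comes from, and it produces a representation suited to analytic estimates. Your route avoids calculus and isolates the recurrence $H_n = H_{n-1} + (2^n-1)/n$ as a by-product. That recurrence is worth recording, because it gives at once $D_n = H_n - (2^n-1)/n = H_{n-1}$. This agrees with Lemma~\ref{lem-Dn-values} (for instance $D_2 = H_1 = 1$ and $D_5 = H_4 = 103/12$). It is also equivalent to the ``exact identity'' that the paper derives by hand in the proof of Lemma~\ref{lem-Dn-asymptotic}, without ever stating it in this compact form.
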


\begin{proof}
Using $\int_0^1 t^{a-1} \, dt = 1/a$,
\[
H_n = \sum_{a=1}^n \binom{n}{a} \int_0^1 t^{a-1} \, dt = \int_0^1 \frac{(1+t)^n - 1}{t} \, dt.
\]
Substituting $u = 1 + t$ and factoring $u^n - 1 = (u-1) \sum_{k=0}^{n-1} u^k$,
\[
\int_0^1 \frac{(1+t)^n - 1}{t} \, dt = \int_1^2 \sum_{k=0}^{n-1} u^k \, du = \sum_{k=0}^{n-1} \frac{2^{k+1} - 1}{k+1} = \sum_{j=1}^n \frac{2^j - 1}{j} = \sum_{j=1}^n \frac{2^j}{j} - \sum_{j=1}^n \frac{1}{j}. \qedhere
\]
\end{proof}

\begin{lemma}\label{lem-Dn-asymptotic}
As $n \to \infty$,
\[
D_n = \frac{2^n}{n} \cdot \left(1 + O\!\left(\frac{1}{n}\right)\right).
\]
\end{lemma}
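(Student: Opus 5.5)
The plan is to start from $D_n = H_n - (2^n-1)/n$ and substitute Lemma~\ref{lem-Hn-formula}. Writing $H_n^{\mathrm{harm}} := \sum_{j=1}^n 1/j$ and $\Sigma_n := \sum_{j=1}^{n-1} 2^j/j$, this gives
\[
D_n = \Sigma_n + \frac{2^n}{n} - H_n^{\mathrm{harm}} - \frac{2^n - 1}{n} = \Sigma_n - H_n^{\mathrm{harm}} + \frac{1}{n}.
\]
Since $H_n^{\mathrm{harm}} = O(\log n)$ is negligible against $2^n/n^2$, everything reduces to the estimate
\[
\Sigma_n = \frac{2^n}{n}\left(1 + O\!\left(\frac1n\right)\right).
\]
This is also the fact quoted in the proof of Corollary~\ref{cor-Bz-asym-stratified}.

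For $\Sigma_n$, we reverse the summation index, setting $j = n-k$ with $k = 1, \dots, n-1$. Factoring out $2^n/n$ gives
\[
\Sigma_n = \frac{2^n}{n} \sum_{k=1}^{n-1} 2^{-k}\,\frac{n}{n-k}.
\]
We then split the sum at $k \le n/2$.

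On the range $k \le n/2$ we use $\frac{n}{n-k} = 1 + \frac{k}{n-k} \le 1 + \frac{2k}{n}$. Summing against $2^{-k}$ then gives
\[
\sum_{k\le n/2} 2^{-k} + O\!\left(\frac{1}{n}\sum_k k2^{-k}\right) = 1 - O(2^{-n/2}) + O\!\left(\frac1n\right).
\]
On the tail $k > n/2$ we use $\frac{n}{n-k} \le n$, so the contribution is at most $n \sum_{k>n/2} 2^{-k} \le 2n\,2^{-n/2}$. This is $O(1/n)$. The lower bound needs no work, because $\frac{n}{n-k} \ge 1$ and the $2^{-k}$ sum is $1 - 2^{1-n}$.

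Combining these bounds gives $\Sigma_n = (2^n/n)(1 + O(1/n))$. Adding $-H_n^{\mathrm{harm}} + 1/n = O(\log n) = (2^n/n)\,O(1/n)$ then yields the lemma.

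The main obstacle is controlling the tail $k$ close to $n$, where the factor $n/(n-k)$ blows up. The split at $n/2$ handles it, since there the geometric decay $2^{-k}$ beats the polynomial factor $n$. Everything else is routine bookkeeping of error terms.
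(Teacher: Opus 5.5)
Your proof is correct and follows essentially the same route as the paper. You reduce to $\Sigma_n$ via Lemma~\ref{lem-Hn-formula}, factor out the dominant term, and split the reversed index at about $n/2$ so that geometric decay beats the blow-up of the reciprocal factor. The only difference is cosmetic: you normalize by $2^n/n$ (with $k\ge 1$ and the explicit lower bound $\frac{n}{n-k}\ge 1$) instead of by $2^{n-1}/(n-1)$, which spares you the paper's final $n/(n-1)=1+O(1/n)$ conversion.
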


\begin{proof}
Recall that  $D_n = H_n - (2^n - 1)/n$ from Section~\ref{sec-best-approx}.
 By Lemma~\ref{lem-Hn-formula}, $H_n = \sum_{j=1}^n 2^j/j - H_n^{\mathrm{harm}}$ where $H_n^{\mathrm{harm}} := \sum_{j=1}^n 1/j = O(\log n)$. 
Separating off  the $j = n$ term and combining it with $-(2^n - 1)/n$ yields the exact identity
\[
D_n = \sum_{j=1}^{n-1} \frac{2^j}{j} + \frac{1}{n} - H_n^{\mathrm{harm}}.
\]
It therefore suffices to estimate $\Sigma_n := \sum_{j=1}^{n-1} 2^j/j$. 
Factoring out the largest term,
\[
\Sigma_n = \frac{2^{n-1}}{n-1} \cdot B_n, \qquad B_n := \sum_{k=0}^{n-2} \frac{n-1}{(n-1-k) \cdot 2^k},
\]
and we claim $B_n = 2 + O(1/n)$.
Write $B_n = G_n + R_n$, where $G_n := \sum_{k=0}^{n-2} 1/2^k$ is the geometric sum while 
the quantity $R_n := \sum_{k=0}^{n-2} k / ((n-1-k) \cdot 2^k)$ collects the corrections. 
Clearly it holds that  $G_n = 2 - 2^{2-n} = 2 + O(2^{-n})$.
For $R_n$, split the range at $k_0 := \lfloor n/2 \rfloor - 1$. 
For $k \leq k_0$, 
we have $n - 1 - k \geq n/2$, so each term is at most $2k/(n \cdot 2^k)$, and
\[
\sum_{k=0}^{k_0} \frac{k}{(n-1-k) \cdot 2^k} \leq \frac{2}{n} \sum_{k=0}^{\infty} \frac{k}{2^k} = \frac{4}{n}.
\]
For $k > k_0$, we have $1/2^k \leq 2^{1 - n/2}$ and $k/(n-1-k) \leq k \leq n$, so the tail contributes at most $(n/2) \cdot n \cdot 2^{1 - n/2} = O(n^2 / 2^{n/2})$, which is $o(1/n)$. 
Hence, $R_n = O(1/n)$ and $B_n = 2 + O(1/n)$.
Therefore,
\[
\Sigma_n = \frac{2^{n-1}}{n-1}\bigl(2 + O(1/n)\bigr) = \frac{2^n}{n-1}\bigl(1 + O(1/n)\bigr) = \frac{2^n}{n}\bigl(1 + O(1/n)\bigr),
\]
where the last step uses $n/(n-1) = 1 + 1/(n-1) = 1 + O(1/n)$. 
Substituting back,
\[
D_n = \Sigma_n + \frac{1}{n} - H_n^{\mathrm{harm}} = \frac{2^n}{n}\bigl(1 + O(1/n)\bigr) + O(\log n) = \frac{2^n}{n}\bigl(1 + O(1/n)\bigr),
\]
since $\log n = o(2^n / n^2)$.
\end{proof}

\subsection{Asymptotic Alignment of $\ESD$}

\begin{theorem}\label{thm-asym-ESD}
For every $n \geq 3$,
\[
1 - R^2(\ESD) = \frac{n - 1}{D_n},
\]
and for $n=2$ it holds that $\ESD=\Sh$ trivially.
In particular, as $n \to \infty$,
\[
\eps^*(\ESD) \to 1 \quad \text{with} \quad 1 - \eps^*(\ESD) = O\!\left(\frac{n^2}{2^n}\right),
\]
and
\[
R^2(\ESD) \to 1 \quad \text{with} \quad 1 - R^2(\ESD) = O\!\left(\frac{n^2}{2^n}\right).
\]
\end{theorem}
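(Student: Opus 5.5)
The plan is to combine the closed forms already available with the asymptotic estimate of $D_n$. First, I will obtain the exact identity. By Proposition~\ref{prop-eps-ESD} we have $\eps^*(\ESD) = 1 - (n-1)/D_n = (D_n-(n-1))/D_n$. By Lemma~\ref{lem-norm-ESD} we have $\|\ESD-\Sh\|^2_\mL = D_n-(n-1)$. Hence, whenever $D_n \neq n-1$,
\[
R^2(\ESD) = \frac{(D_n-(n-1))^2/D_n}{D_n-(n-1)} = \frac{D_n-(n-1)}{D_n},
\]
so $1-R^2(\ESD) = (n-1)/D_n$. Equivalently, Example~\ref{ex-ESD-stratified} gives $R^2(\ESD) = 1-w_1$ with $w_1 = (n-1)/D_n$, which I would cite as a cross-check.

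The only point needing care is that $\ESD \neq \Sh$, so that Definition~\ref{def-Rsquared} applies with a nonzero denominator. I will show $D_n - (n-1) = \sum_{a=2}^{n-1}\binom{n}{a}(1/a-1/n) > 0$ for $n\ge 3$, since the sum is nonempty and every term is positive. For $n=2$ this sum is empty, so $D_2 = 1 = n-1$ and $\|\ESD-\Sh\|=0$, whence $\ESD=\Sh$. This matches the stratified description $(0)\in\R^1$ and accounts for the separate treatment of $n=2$.

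For the asymptotics, I will invoke Lemma~\ref{lem-Dn-asymptotic}, $D_n = (2^n/n)(1+O(1/n))$. This gives
\[
\frac{n-1}{D_n} = \frac{n(n-1)}{2^n}\bigl(1+O(1/n)\bigr)^{-1} = O\!\left(\frac{n^2}{2^n}\right),
\]
and both $1-\eps^*(\ESD)$ and $1-R^2(\ESD)$ equal this same quantity $(n-1)/D_n$. Both bounds, and the limits $\eps^*(\ESD)\to1$ and $R^2(\ESD)\to1$, then follow at once.

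There is essentially no obstacle, since the hard analytic work sits in Lemma~\ref{lem-Dn-asymptotic}. The only step needing care is the degenerate case $n=2$ together with the positivity of $D_n-(n-1)$ for $n\ge3$, which is what licenses dividing by $\|\ESD-\Sh\|^2_\mL$.
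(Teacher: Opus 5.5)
Your proposal is correct and follows the paper's proof: both combine Proposition~\ref{prop-eps-ESD} and Lemma~\ref{lem-norm-ESD} to get the exact identity $1-R^2(\ESD)=(n-1)/D_n$, and both then read off the $O(n^2/2^n)$ rates from Lemma~\ref{lem-Dn-asymptotic}; the only difference is that you simplify $R^2$ directly, whereas the paper simplifies the residual ratio. Your explicit check that $D_n-(n-1)=\sum_{a=2}^{n-1}\binom{n}{a}(1/a-1/n)>0$ for $n\ge 3$ is a small improvement, since the paper divides by $\|\ESD-\Sh\|^2_\mL$ without justifying that it is nonzero.
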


\begin{proof}
Recall from Lemma~\ref{lem-norm-ESD} that the squared norm equals $\|\ESD - \Sh\|^2_{\mL} = D_n - (n-1)$, 
and from Proposition~\ref{prop-eps-ESD} that the optimal parameter equals $\eps^*(\ESD) = 1 - (n - 1)/D_n$.
Hence,
\begin{align*}
(\eps^*(\ESD))^2 \cdot D_n &= \left(1 - \tfrac{n-1}{D_n}\right)^2 \cdot D_n = D_n - 2(n-1) + \frac{(n-1)^2}{D_n}, \\
\|\ESD - \Sh\|^2_{\mL} &= D_n - (n - 1).
\end{align*}
We conclude that 
\begin{align*}
1 - R^2(\ESD)
&= \frac{\|\ESD - \Sh\|^2_{\mL} - (\eps^*(\ESD))^2 \cdot D_n}{\|\ESD - \Sh\|^2_{\mL}} \\
&= \frac{(D_n - (n-1)) - (D_n - 2(n-1) + (n-1)^2/D_n)}{D_n - (n-1)} \\
&= \frac{(n-1) - (n-1)^2/D_n}{D_n - (n-1)} \\
&= \frac{(n-1)\bigl(D_n - (n-1)\bigr)/D_n}{D_n - (n-1)} \\
&= \frac{n-1}{D_n},
\end{align*}
which proves the exact identity. 
The asymptotic statements follow from Lemma~\ref{lem-Dn-asymptotic}: 
since $D_n = (2^n/n)(1 + O(1/n))$, we have $(n-1)/D_n = (n-1) \cdot n / 2^n \cdot (1 + O(1/n)) = O(n^2 / 2^n)$, 
which gives both the rate for $1 - \eps^*(\ESD)$ and for $1 - R^2(\ESD)$.
\end{proof}

\subsection{Asymptotic Half-Alignment of $\Bz$}

The Banzhaf value behaves differently. 
To analyze it we need an asymptotic expression for $\|\Bz - \Sh\|^2_{\mL}$.

\begin{lemma}\label{lem-Bz-norm-asymptotic}
As $n \to \infty$,
\[
\|\Bz - \Sh\|^2_{\mL} = \frac{2^{n+1}}{n} \cdot \left(1 + O\!\left(\frac{1}{n}\right)\right).
\]
\end{lemma}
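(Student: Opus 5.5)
The plan is to start from the exact expression in Lemma~\ref{lem-norm-Bz} and expand the square termwise:
\[
a\Bigl(\tfrac{1}{2^{a-1}} - \tfrac{1}{a}\Bigr)^2 = \frac{a}{4^{a-1}} - \frac{2}{2^{a-1}} + \frac{1}{a}.
\]
Summing against $\binom{n}{a}$ over $a=1,\dots,n$ then splits $\|\Bz-\Sh\|^2_\mL$ into three sums. Two of them are closed-form binomial sums of the type already evaluated in the proof of Proposition~\ref{prop-eps-banzhaf}. The third is exactly $H_n$.

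The two closed-form sums are handled as follows. For the first, use $a\binom{n}{a} = n\binom{n-1}{a-1}$:
\[
\sum_{a=1}^n \binom{n}{a}\frac{a}{4^{a-1}} = n\sum_{k=0}^{n-1}\binom{n-1}{k}4^{-k} = n\bigl(\tfrac54\bigr)^{n-1}.
\]
The second is exactly $S_3 = 2\bigl((3/2)^n - 1\bigr)$ from Proposition~\ref{prop-eps-banzhaf}. This gives the exact identity
\[
\|\Bz - \Sh\|^2_\mL = H_n - 4\bigl((\tfrac32)^n - 1\bigr) + n\bigl(\tfrac54\bigr)^{n-1}.
\]
Both correction terms are $O(n(3/2)^n)$.

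The main step is the asymptotics of $H_n$. By Lemma~\ref{lem-Hn-formula},
\[
H_n = \Sigma_n + \frac{2^n}{n} - H_n^{\mathrm{harm}}, \qquad \Sigma_n = \sum_{j=1}^{n-1} 2^j/j.
\]
The proof of Lemma~\ref{lem-Dn-asymptotic} already established $\Sigma_n = (2^n/n)(1+O(1/n))$, and $H_n^{\mathrm{harm}} = O(\log n)$. Hence
\[
H_n = \frac{2^n}{n}\bigl(2 + O(1/n)\bigr) = \frac{2^{n+1}}{n}\bigl(1+O(1/n)\bigr).
\]

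Finally, I need the correction terms to be absorbed into the relative error $O(1/n)$. It suffices that $n(3/2)^n \big/ (2^{n+1}/n) = O\bigl(n^2 (3/4)^n\bigr)$ is $O(1/n)$, which holds because $n^3(3/4)^n \to 0$. The same holds for $\log n$ and for the constant $4$.

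The only delicate point is the $O(1/n)$ rate for $\Sigma_n$. It is not merely $o(1)$, because it is the dominant source of error. I take it verbatim from the proof of Lemma~\ref{lem-Dn-asymptotic}, where $B_n = 2 + O(1/n)$ was shown. Everything else is an exponentially smaller correction.
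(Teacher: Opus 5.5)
Your proposal is correct and follows essentially the same route as the paper: the same expansion of the square into the sums $n(5/4)^{n-1}$, $2\bigl((3/2)^n-1\bigr)$ and $H_n$, the same use of Lemma~\ref{lem-Hn-formula} and of the estimate $\Sigma_n=(2^n/n)(1+O(1/n))$ from Lemma~\ref{lem-Dn-asymptotic}, and the same observation that the correction terms are exponentially smaller than $2^{n+1}/n$. Your explicit identity $\|\Bz-\Sh\|^2_\mL = H_n - 4\bigl((3/2)^n-1\bigr) + n(5/4)^{n-1}$ is slightly cleaner than the paper's ratio comparison, and it checks out at $n=4$, where it gives $7/48$.
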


\begin{proof}
By Lemma~\ref{lem-norm-Bz},
\[
\|\Bz - \Sh\|^2_{\mL} = \sum_{a=1}^n \binom{n}{a} \cdot a \left(\frac{1}{2^{a-1}} - \frac{1}{a}\right)^{\!2}.
\]
Expanding the square and multiplying through by $a$,
\[
a \left(\frac{1}{2^{a-1}} - \frac{1}{a}\right)^{\!2} = \frac{a}{4^{a-1}} - \frac{2}{2^{a-1}} + \frac{1}{a}.
\]
Hence, $\|\Bz - \Sh\|^2_{\mL} = T_1 - 2 T_2 + T_3$, where
\[
T_1 = \sum_{a=1}^n \binom{n}{a} \frac{a}{4^{a-1}}, \qquad T_2 = \sum_{a=1}^n \binom{n}{a} \frac{1}{2^{a-1}}, \qquad T_3 = \sum_{a=1}^n \frac{\binom{n}{a}}{a} = H_n.
\]

\emph{Computing $T_1$.} 
Using $a \binom{n}{a} = n \binom{n-1}{a-1}$,
\[
T_1 = 4n \sum_{a=1}^n \binom{n-1}{a-1} \frac{1}{4^a} = n \sum_{k=0}^{n-1} \binom{n-1}{k} \frac{1}{4^k} = n \cdot \left(\frac{5}{4}\right)^{\!n-1}.
\]

\emph{Computing $T_2$.} 
Directly, $T_2 = 2 \sum_{a=1}^n \binom{n}{a}/2^a = 2((3/2)^n - 1)$.

\emph{Asymptotic comparison.} 
By proof of Lemma~\ref{lem-Dn-asymptotic},
$\Sigma_n := \sum_{j=1}^{n-1} 2^j/j = (2^n/n)(1 + O(1/n))$.
 We therefore have 
\[
T_3 = H_n = \frac{2^n}{n} + \Sigma_n - H_n^{\mathrm{harm}} = \frac{2^{n+1}}{n}\bigl(1 + O(1/n)\bigr),
\]
since the harmonic correction $H_n^{\mathrm{harm}} = O(\log n)$ is dominated by $2^n / n^2$.
For $T_1$ and $T_2$ we compare to $T_3$ directly:
\[
\frac{T_1}{T_3} = \frac{n(5/4)^{n-1} \cdot n}{2^{n+1}} \cdot (1 + o(1)) = \frac{2 n^2}{5} \cdot \left(\frac{5}{8}\right)^{\!n} (1 + o(1)) = O\bigl(n^2 (5/8)^n\bigr),
\]
\[
\frac{T_2}{T_3} = \frac{2\bigl((3/2)^n - 1\bigr) \cdot n}{2^{n+1}} \cdot (1 + o(1)) = n \cdot \left(\frac{3}{4}\right)^{\!n} (1 + o(1)) = O\bigl(n (3/4)^n\bigr).
\]
Both ratios tend to $0$.
  Therefore,
\[
\|\Bz - \Sh\|^2_{\mL} = T_3 \bigl(1 - 2T_2/T_3 + T_1/T_3\bigr) = T_3 \bigl(1 + o(1)\bigr) = \frac{2^{n+1}}{n} \cdot \bigl(1 + O(1/n)\bigr),
\]
where the final $O(1/n)$ comes from the $T_3$ asymptotic itself,
   the $T_1$ and $T_2$ corrections being of strictly lower order (indeed, exponentially smaller).
\end{proof}

\begin{lemma}\label{lem-eps-Bz-asymptotic}
As $n \to \infty$, $\eps^*(\Bz) \to 1$ and
\[
1 - \eps^*(\Bz) = \frac{4n}{3}\left(\frac{3}{4}\right)^{\!n}\bigl(1 + O(1/n)\bigr) = O\bigl(n (3/4)^n\bigr).
\]
\end{lemma}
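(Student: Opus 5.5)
Start from the exact formula of Proposition~\ref{prop-eps-banzhaf}, which gives
\[
1 - \eps^*(\Bz) = \frac{2\,(3/2)^{n-1} - 2}{D_n}.
\]
This reduces the lemma to an asymptotic statement about the ratio of an explicit exponential to $D_n$. No further structural input is needed. The only analytic ingredient is Lemma~\ref{lem-Dn-asymptotic}, namely $D_n = (2^n/n)(1 + O(1/n))$.

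First handle the numerator. Write $2(3/2)^{n-1} - 2 = \tfrac{4}{3}(3/2)^n\bigl(1 - (2/3)^{n-1}\bigr)$. The correction factor is $1 + O((2/3)^n)$, which is absorbed into $1 + O(1/n)$.

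Second, invert the asymptotic for $D_n$:
\[
\frac{1}{D_n} = \frac{n}{2^n}\bigl(1 + O(1/n)\bigr),
\]
which is legitimate because $1/(1+x) = 1 + O(x)$ for small $x$. Multiplying the two estimates gives
\[
1 - \eps^*(\Bz) = \frac{4}{3}\Bigl(\frac{3}{2}\Bigr)^{n} \cdot \frac{n}{2^n}\bigl(1+O(1/n)\bigr) = \frac{4n}{3}\Bigl(\frac{3}{4}\Bigr)^{n}\bigl(1+O(1/n)\bigr).
\]
The right-hand side is $O(n(3/4)^n)$, and since $n(3/4)^n \to 0$, this yields $\eps^*(\Bz) \to 1$.

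There is essentially no obstacle. The one point of care is the bookkeeping of error terms: the factor $(1 - (2/3)^{n-1})$ is exponentially close to $1$, and the $O(1/n)$ relative error in $D_n$ transfers to its reciprocal without loss. As a sanity check, I would compare the exact values $\eps^*(\Bz) = 1/58$ at $n=4$ and $11/206$ at $n=5$ (Corollary~\ref{cor-banzhaf-values}) against the asymptotic form. This confirms only the sign and order of magnitude, not the constant, since the $O(1/n)$ corrections are still large at such small $n$.
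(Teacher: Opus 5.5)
Your proposal is correct and follows essentially the same route as the paper. Both start from the exact formula of Proposition~\ref{prop-eps-banzhaf}, factor the numerator as $\tfrac{4}{3}(3/2)^n$ times an exponentially small correction, and divide by $D_n = (2^n/n)(1+O(1/n))$ from Lemma~\ref{lem-Dn-asymptotic}. Your explicit remark that the numerator correction is $1+O((2/3)^n)$, and so is absorbed into $1+O(1/n)$, is slightly tidier than the paper's $1+o(1)$.
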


\begin{proof}
By Proposition~\ref{prop-eps-banzhaf},
\[
1 - \eps^*(\Bz) = \frac{2 \cdot (3/2)^{n-1} - 2}{D_n} = \frac{(4/3) \cdot (3/2)^n - 2}{D_n}.
\]
The numerator equals $(4/3)(3/2)^n - 2 = (4/3)(3/2)^n \bigl(1 - 2 \cdot (3/4) \cdot (2/3)^n\bigr) = (4/3)(3/2)^n (1 + o(1))$, 
since $(2/3)^n \to 0$.
 By Lemma~\ref{lem-Dn-asymptotic}, $D_n = (2^n / n)(1 + O(1/n))$. 
Therefore,
\[
1 - \eps^*(\Bz) = \frac{(4/3)(3/2)^n (1 + o(1))}{(2^n/n)(1 + O(1/n))} = \frac{4n}{3} \cdot \left(\frac{3}{4}\right)^{\!n} \bigl(1 + O(1/n)\bigr),
\]
which is $O(n(3/4)^n) \to 0$.
\end{proof}

\begin{theorem}\label{thm-asym-Bz}
As $n \to \infty$,
\[
R^2(\Bz) = \frac{1}{2}\bigl(1 + O(1/n)\bigr) \to \frac{1}{2}.
\]
\end{theorem}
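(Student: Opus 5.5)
The plan is to assemble the three asymptotic ingredients already established. I will write
\[
R^2(\Bz) = \frac{(\eps^*(\Bz))^2 \cdot D_n}{\|\Bz - \Sh\|^2_{\mL}}
\]
(Definition~\ref{def-Rsquared}) and treat the numerator and the denominator separately, each with an explicit relative error.

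For the numerator, Lemma~\ref{lem-eps-Bz-asymptotic} gives $\eps^*(\Bz) = 1 - O(n(3/4)^n)$. Hence $(\eps^*(\Bz))^2 = 1 + O(n(3/4)^n)$, and this error is certainly $O(1/n)$. Combining with Lemma~\ref{lem-Dn-asymptotic}, $D_n = (2^n/n)(1+O(1/n))$, I get
\[
(\eps^*(\Bz))^2 D_n = \frac{2^n}{n}\bigl(1+O(1/n)\bigr).
\]

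For the denominator, Lemma~\ref{lem-Bz-norm-asymptotic} gives directly
\[
\|\Bz - \Sh\|^2_{\mL} = \frac{2^{n+1}}{n}\bigl(1+O(1/n)\bigr).
\]

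Dividing the two expressions yields $R^2(\Bz) = \tfrac12 \cdot \frac{1+O(1/n)}{1+O(1/n)}$. It then only remains to note that $1/(1+O(1/n)) = 1+O(1/n)$ for $n$ large, and that a product of two $(1+O(1/n))$ factors is again $1+O(1/n)$. This gives $R^2(\Bz) = \tfrac12(1+O(1/n)) \to \tfrac12$.

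The only point needing any care is bookkeeping the error terms. One has to check that the $O(1/n)$ in Lemma~\ref{lem-Bz-norm-asymptotic} is genuinely a relative error of that order, and not merely the $o(1)$ used in Corollary~\ref{cor-Bz-asym-stratified}. Its proof shows that this error comes from $T_3 = H_n = 2^n/n + \Sigma_n - H_n^{\mathrm{harm}}$, with $\Sigma_n = (2^n/n)(1+O(1/n))$, while the exponentially smaller $T_1/T_3$ and $T_2/T_3$ are absorbed into it. Once both leading constants ($2^n/n$ versus $2^{n+1}/n$) and both error rates are confirmed, the theorem is a one-line division. If sharper constants were wanted, I would track the $1/n$ terms of $\Sigma_n$ explicitly via $B_n = 2 + R_n$, but that is not needed for the stated rate.
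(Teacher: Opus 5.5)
Your proposal is correct and is essentially the paper's own proof: you obtain the numerator $(\eps^*(\Bz))^2 D_n = (2^n/n)(1+O(1/n))$ from Lemmas~\ref{lem-eps-Bz-asymptotic} and~\ref{lem-Dn-asymptotic}, take the denominator $(2^{n+1}/n)(1+O(1/n))$ from Lemma~\ref{lem-Bz-norm-asymptotic}, and divide. Your extra check is also right: the error in Lemma~\ref{lem-Bz-norm-asymptotic} is a genuine relative $O(1/n)$, coming from $\Sigma_n$ inside $T_3 = H_n$, while the $T_1$ and $T_2$ contributions are exponentially smaller, and this is exactly what the paper's argument relies on.
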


\begin{proof}
By Lemma~\ref{lem-eps-Bz-asymptotic}, $1 - \eps^*(\Bz) = O(n(3/4)^n)$, so
\[
(\eps^*(\Bz))^2 = 1 - 2(1 - \eps^*(\Bz)) + (1 - \eps^*(\Bz))^2 = 1 + O(n(3/4)^n).
\]
By Lemma~\ref{lem-Dn-asymptotic}, $D_n = (2^n/n)(1 + O(1/n))$. 
Hence, the numerator of $R^2(\Bz)$ is
\[
(\eps^*(\Bz))^2 \cdot D_n = \frac{2^n}{n} \cdot \bigl(1 + O(1/n)\bigr).
\]
By Lemma~\ref{lem-Bz-norm-asymptotic}, the denominator is
\[
\|\Bz - \Sh\|^2_{\mL} = \frac{2^{n+1}}{n} \cdot \bigl(1 + O(1/n)\bigr).
\]
Dividing,
\[
R^2(\Bz) = \frac{(2^n/n)(1 + O(1/n))}{(2^{n+1}/n)(1 + O(1/n))} = \frac{1}{2} \cdot \bigl(1 + O(1/n)\bigr) \to \frac{1}{2}. \qedhere
\]
\end{proof}

\subsection{Discussion of the Asymptotic Results}

Theorems~\ref{thm-asym-ESD} and~\ref{thm-asym-Bz} together exhibit a striking qualitative contrast.

The equal-surplus-division value becomes \emph{asymptotically aligned} with the egalitarian Shapley line: 
as $n$ grows, both $\eps^*(\ESD)$ and $R^2(\ESD)$ converge to $1$ at exponential rate. 
Geometrically, $\ESD$ is, for large $n$, almost identical to the equal-division solution $\ED$ in our inner-product, 
and the line $\{\Feps\}$ captures essentially all of its variation around $\Sh$. 
The intuition is structural: $\ESD$ differs from $\ED$ only on singleton-coalition values, 
and singletons constitute an asymptotically vanishing fraction of all coalitions.

The Banzhaf value, however, behaves very differently. 
While $\eps^*(\Bz) \to 1$ as well, the goodness-of-fit $R^2(\Bz)$ converges not to $1$ but to $1/2$. 
 This is the key qualitative finding: the projection of $\Bz$ onto the egalitarian Shapley line captures, asymptotically, 
 exactly half of the squared distance from $\Sh$, with the other half permanently in the orthogonal residual. 
  The Banzhaf value therefore retains a robust geometric distinctness from the egalitarian Shapley family at every scale:
    it is neither aligned with the line (as $\ESD$ becomes) nor orthogonal to it (as it nearly is at small $n$). 
   It occupies a half-and-half geometric position that no amount of growth in $n$ can change.

We summarize the contrast:

\smallskip

\begin{center}
\renewcommand{\arraystretch}{1.4}
\begin{tabular}{lcc}
\hline
$\Psi$ & $\lim_{n \to \infty} \eps^*(\Psi)$ & $\lim_{n \to \infty} R^2(\Psi)$ \\
\hline
$\Sh$  & $0$ & $1$ (conv.) \\
$\ED$  & $1$ & $1$ \\
$\Bz$  & $1$ & $1/2$ \\
$\ESD$ & $1$ & $1$ \\
$\So$  & $1$ & $1$ \\
\hline
\end{tabular}
\end{center}

\smallskip

 The asymptotic result $R^2(\Bz) \to 1/2$ established above admits the following sharp and somewhat surprising interpretation: 
 the distinctness of the Banzhaf value from the egalitarian Shapley family is by no means a small-$n$ artifact that gradually washes out as the player set grows ever larger.
  It is structurally permanent. 
  Half of Banzhaf's deviation from Shapley lives, asymptotically and forever, in directions that no convex combination of $\Sh$ and $\ED$ can reach.
    This makes Banzhaf qualitatively different from $\ESD$, whose distinctness from the egalitarian Shapley line vanishes as $n \to \infty$. 
    The geometric classification of Section~\ref{sec-pythagorean}, established at $n = 4$, is therefore not a coincidence of small player sets but a stable structural feature.

\section{Multi-Parameter Best Fit}\label{sec-multiparam}

The framework of Section~\ref{sec-best-approx} approximates a target $\Psi$ by the best member of a one-parameter family. 
We now extend it to richer affine subspaces of $\mL$, 
providing finer geometric resolution of where any given value map sits.

\subsection{Best Two-Parameter Fit through $\Sh$, $\ED$, and $\Bz$}

For $n \geq 3$, consider the affine $2$-plane through $\Sh$ in the directions $\ED - \Sh$ and $\Bz - \Sh$:
\[
\mathcal{S} := \Sh + \mathrm{span}_{\R}(\ED - \Sh, \Bz - \Sh) \subset \mL.
\]
Each element of $\mathcal{S}$ has the form
\[
F^{\eps,\delta} := (1 - \eps - \delta) \Sh + \eps \cdot \ED + \delta \cdot \Bz.
\]
This is a two-parameter family that interpolates between three canonical solutions, 
satisfying efficiency, symmetry and linearity for all $(\eps, \delta) \in \R^2$.

\begin{lemma}\label{lem-independence}
For $n \geq 3$, the maps $\ED - \Sh$ and $\Bz - \Sh$ are linearly independent in $\mL$.
\end{lemma}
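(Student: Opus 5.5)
The plan is to test linear independence directly on unanimity games. Suppose $\lambda(\ED - \Sh) + \mu(\Bz - \Sh) = 0$ in $\mL$ for some $\lambda, \mu \in \R$. We will show $\lambda = \mu = 0$ by evaluating this identity on suitable unanimity games $u_A$. The convenient coordinates are the stratum coordinates $(\alpha_a, \beta_a)$ of Lemma~\ref{lem-sym-stratum-dim}, since all three maps are symmetric. Alternatively, the stratified data of Proposition~\ref{prop-Bz-stratified} can be used.

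First evaluate on a unanimity game $u_A$ with $|A| = a < n$, at a player $i \notin A$. By Remark~\ref{rem-Sh-ED-coords}, $\beta_a(\ED - \Sh) = 1/n$. The computation in the proof of Proposition~\ref{prop-eps-banzhaf} gives $\beta_a(\Bz - \Sh) = 0$. The $\beta$-component of the relation therefore reads $\lambda/n = 0$, which forces $\lambda = 0$. Since $n \ge 2$, taking $a = 1$ is enough here.

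It remains to show $\Bz - \Sh \neq 0$, so that $\mu(\Bz - \Sh) = 0$ forces $\mu = 0$. For this, pick a stratum $a$ on which the inside coordinate $\alpha_a(\Bz - \Sh) = 1/2^{a-1} - 1/a$ is nonzero. This quantity vanishes for $a = 1, 2$ but equals $1/4 - 1/3 = -1/12 \neq 0$ for $a = 3$. This is exactly where the hypothesis $n \ge 3$ enters: $n \ge 3$ guarantees that a coalition of size $3$ exists. (If $n = 3$, then $a = 3 = n$, and $\alpha_n$ is still a valid coordinate of the top stratum.) Equivalently, $\|\Bz - \Sh\|^2_\mL > 0$ by Lemma~\ref{lem-norm-Bz}, since the $a = 3$ term is positive.

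There is essentially no obstacle. The only care needed is to point out that $\Bz = \Sh$ when $n = 2$, because then only strata $a = 1, 2$ occur and $1/2^{a-1} = 1/a$ on both. That is why the restriction $n \ge 3$ is necessary as well as sufficient.
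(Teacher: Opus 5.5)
Your proof is correct and is essentially the paper's argument: evaluate the relation on unanimity games where one of the two differences vanishes. The only difference is the order. The paper first eliminates the $\Bz$ coefficient using $u_N$ (where $(\ED-\Sh)(u_N)=0$ while $1/2^{n-1}\neq 1/n$ for $n\ge 3$), and then the $\ED$ coefficient using $u_{N\setminus\{1\}}$. You instead first eliminate the $\ED$ coefficient via a player outside $A$, and then use a coalition of size $3$ to see that $\Bz\neq\Sh$.
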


\begin{proof}
Suppose $\alpha(\ED - \Sh) + \beta(\Bz - \Sh) = 0$.
 Evaluating at $u_N$ we obtain the following: $\ED(u_N) = \Sh(u_N) = (1/n, \ldots, 1/n)$, so $(\ED - \Sh)(u_N) = 0$. 
On the other hand, we have 
$\Bz(u_N) = (1/2^{n-1}, \ldots, 1/2^{n-1})$, so $(\Bz - \Sh)(u_N)_i = 1/2^{n-1} - 1/n \neq 0$ for $n \geq 3$ (since $1/2^{n-1} \neq 1/n$). 
Hence, $\beta = 0$. 
Then $\alpha(\ED - \Sh) = 0$; evaluating at $u_{N \setminus \{1\}}$, 
the result is nonzero, forcing $\alpha = 0$.
\end{proof}

\begin{theorem}[Best Two-Parameter Fit]\label{thm-best-2param}
Let $n \geq 3$ and $\Psi \in \mL$. 
There exists a unique pair $(\eps^{**}(\Psi), \delta^{**}(\Psi)) \in \R^2$ minimizing $\|F^{\eps,\delta} - \Psi\|_{\mL}$. 
It satisfies the normal equations
\[
\begin{pmatrix}
\langle \ED - \Sh, \ED - \Sh \rangle_{\mL} & \langle \ED - \Sh, \Bz - \Sh \rangle_{\mL} \\
\langle \Bz - \Sh, \ED - \Sh \rangle_{\mL} & \langle \Bz - \Sh, \Bz - \Sh \rangle_{\mL}
\end{pmatrix}
\begin{pmatrix} \eps^{**}(\Psi) \\ \delta^{**}(\Psi) \end{pmatrix}
=
\begin{pmatrix}
\langle \ED - \Sh, \Psi - \Sh \rangle_{\mL} \\
\langle \Bz - \Sh, \Psi - \Sh \rangle_{\mL}
\end{pmatrix}.
\]
\end{theorem}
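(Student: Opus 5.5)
This is a finite-dimensional least-squares problem in the inner-product space $(\mL, \langle\cdot,\cdot\rangle_\mL)$, and I will treat it exactly as Theorem~\ref{thm-optimal-eps} treated the one-parameter case. Write $e_1 := \ED - \Sh$ and $e_2 := \Bz - \Sh$, so that $F^{\eps,\delta} - \Psi = \eps e_1 + \delta e_2 - (\Psi - \Sh)$. Minimizing $\|F^{\eps,\delta} - \Psi\|_\mL$ is then the same as minimizing the quadratic function
\[
q(\eps,\delta) := \|\eps e_1 + \delta e_2 - (\Psi - \Sh)\|^2_\mL .
\]
This is the problem of orthogonally projecting $\Psi - \Sh$ onto the two-dimensional subspace $W := \mathrm{span}_\R(e_1, e_2)$.

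\textbf{Existence and uniqueness.} First, $W$ really is two-dimensional: by Lemma~\ref{lem-independence}, $e_1$ and $e_2$ are linearly independent for $n \ge 3$. Since $\mL$ is a finite-dimensional inner-product space (Lemma~\ref{lem-IP-valid}), the orthogonal projection $P_W(\Psi - \Sh)$ exists and is the unique point of $W$ minimizing distance to $\Psi - \Sh$. This follows from the Pythagorean argument: for $w \in W$,
\[
\|w - x\|^2 = \|w - P_W x\|^2 + \|P_W x - x\|^2 .
\]
Because $e_1, e_2$ are independent, the coefficients $(\eps,\delta)$ of $P_W(\Psi - \Sh)$ in this basis are unique. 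Alternatively, one can note that $q$ is a quadratic whose Hessian is $2G$, where $G$ is the Gram matrix of $e_1, e_2$. This $G$ is positive definite, since $x^T G x = \|x_1 e_1 + x_2 e_2\|^2 > 0$ for $x \neq 0$ by independence. Hence $q$ is strictly convex and coercive, so it has a unique minimizer.

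\textbf{Normal equations.} At the minimizer, both partial derivatives of $q$ vanish:
\[
\partial_\eps q = 2\langle e_1, \eps e_1 + \delta e_2 - (\Psi - \Sh)\rangle_\mL = 0, \qquad
\partial_\delta q = 2\langle e_2, \eps e_1 + \delta e_2 - (\Psi - \Sh)\rangle_\mL = 0 .
\]
Expanding by bilinearity and using symmetry of the inner product gives exactly the displayed system $G\,(\eps^{**},\delta^{**})^T = (\langle e_1, \Psi - \Sh\rangle_\mL, \langle e_2, \Psi - \Sh\rangle_\mL)^T$. Invertibility of $G$ (its determinant is positive by Cauchy--Schwarz with strict inequality, again from independence) then confirms that this system has exactly one solution, matching the unique minimizer.

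The only substantive point is nondegeneracy of the Gram matrix, and it reduces entirely to Lemma~\ref{lem-independence}. That is precisely why the theorem assumes $n \ge 3$: for $n = 2$ we have $\Bz = \Sh$, so $e_2 = 0$ and uniqueness fails. Everything else is routine.
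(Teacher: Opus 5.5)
Your proposal is correct and follows the same route as the paper, which likewise treats this as the standard orthogonal projection of $\Psi - \Sh$ onto $\mathrm{span}(\ED - \Sh, \Bz - \Sh)$, with existence and uniqueness coming from invertibility of the Gram matrix via Lemma~\ref{lem-independence}. You just spell out the details the paper calls standard: positive-definiteness of the Gram matrix, and the normal equations as the vanishing of both partial derivatives.
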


\begin{proof}
This is the standard normal-equation result for orthogonal projection of $\Psi - \Sh$ onto the $2$-dimensional subspace $\mathrm{span}(\ED - \Sh, \Bz - \Sh)$. 
Existence and uniqueness follow from Lemma~\ref{lem-independence}, 
which guarantees that the Gram-matrix is invertible.
\end{proof}

\subsection{Worked Example: $\Psi = \ESD$ at $n = 4$}

We compute the best two-parameter fit for the equal-surplus-division value at $n = 4$.

The Gram-matrix entries are:
\begin{itemize}[leftmargin=2em]
\item $\langle \ED - \Sh, \ED - \Sh \rangle_{\mL} = D_4 = 29/6$ (Lemma~\ref{lem-Dn-values}).
\item $\langle \Bz - \Sh, \Bz - \Sh \rangle_{\mL} = 7/48$ (Proposition~\ref{prop-numerics-n4}).
\item $\langle \ED - \Sh, \Bz - \Sh \rangle_{\mL} 
= D_4 + 2 - 2(3/2)^3 = 1/12$
 (Proposition~\ref{prop-eps-banzhaf}).
\end{itemize}

The right-hand side entries for $\Psi = \ESD$ are:
\begin{itemize}[leftmargin=2em]
\item $\langle \ED - \Sh, \ESD - \Sh \rangle_{\mL} = D_4 - (n-1) = 29/6 - 3 = 11/6$ (Proposition~\ref{prop-eps-ESD}).
\item $\langle \Bz - \Sh, \ESD - \Sh \rangle_{\mL} = 1/12$. 
\emph{Computation:} $(\ESD - \Sh)(u_A)$ vanishes for $|A| = 1$ and equals $(\ED - \Sh)(u_A)$ for $|A| \geq 2$. 
The pairing with $(\Bz - \Sh)(u_A)$ vanishes for $|A| \in \{1, 2\}$ (the inner factor $1/2^{a-1} - 1/a$ vanishes at $a = 1, 2$). 
For $a = 3$: 
$\binom{4}{3} \cdot 3 \cdot (1/4 - 1/3)(1/4 - 1/3) = 1/12$. 
For $a = 4$:
 the factor $1/4 - 1/4 = 0$ kills the contribution. 
Total: $1/12$.
\end{itemize}

The Gram-matrix and its determinant are
\[
G = \begin{pmatrix} 29/6 & 1/12 \\ 1/12 & 7/48 \end{pmatrix}, \qquad
\det G 
= \frac{29}{6} \cdot \frac{7}{48} - \frac{1}{144} 
= \frac{203}{288} - \frac{2}{288} 
= \frac{67}{96}.
\]
By Cramer's rule,
\begin{align*}
\eps^{**}(\ESD) &= \frac{1}{67/96} \det
 \begin{pmatrix} 
 11/6 & 1/12 \\ 1/12 & 7/48
  \end{pmatrix} 
  = \frac{96}{67} \left( \frac{11}{6} \cdot \frac{7}{48} - \frac{1}{144} \right) = \frac{96}{67} \cdot \frac{75}{288} = \frac{25}{67}, \\[0.5em]
\delta^{**}(\ESD) &= \frac{1}{67/96} \det
 \begin{pmatrix} 29/6 & 11/6 \\ 1/12 & 1/12 
 \end{pmatrix} 
 = \frac{96}{67} \cdot \frac{1}{12}\left( \frac{29}{6} - \frac{11}{6} \right) = \frac{96}{67} \cdot \frac{1}{4} = \frac{24}{67}.
\end{align*}

\begin{proposition}\label{prop-best-2param-ESD}
For $n = 4$, the best approximation of $\ESD$ within $\Sh + \mathrm{span}(\ED - \Sh, \Bz - \Sh)$ is
\[
F^{25/67,\, 24/67} = \frac{18}{67} \Sh + \frac{25}{67} \ED + \frac{24}{67} \Bz.
\]
The associated goodness-of-fit, defined as $R^2_2(\ESD) := \|\text{projection}\|^2_{\mL} / \|\ESD - \Sh\|^2_{\mL}$, is equal 
to  $287/737 \approx 38.94\%$, compared with $R^2(\ESD) = 11/29 \approx 37.93\%$ for the one-parameter fit.
\end{proposition}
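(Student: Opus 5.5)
The plan is to take the solution $(\eps^{**}(\ESD), \delta^{**}(\ESD)) = (25/67, 24/67)$ of the normal equations of Theorem~\ref{thm-best-2param}, obtained by Cramer's rule in the computation just before the statement, as given. Then there are two things to check: the affine combination claimed for $F^{25/67,\,24/67}$, and the value of $R^2_2(\ESD)$.

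\emph{The affine combination.} By definition $F^{\eps,\delta} = (1-\eps-\delta)\Sh + \eps\,\ED + \delta\,\Bz$. With $\eps = 25/67$ and $\delta = 24/67$ the Shapley coefficient is $1 - 49/67 = 18/67$. This gives the displayed combination $\tfrac{18}{67}\Sh + \tfrac{25}{67}\ED + \tfrac{24}{67}\Bz$. Uniqueness of the minimiser comes from Theorem~\ref{thm-best-2param} together with Lemma~\ref{lem-independence}, since $n = 4 \geq 3$.

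\emph{The projection norm.} Write $P$ for the orthogonal projection of $\ESD - \Sh$ onto $\mathrm{span}(\ED-\Sh, \Bz-\Sh)$, and let $c = (\eps^{**},\delta^{**})^T$. Then $P = \eps^{**}(\ED-\Sh) + \delta^{**}(\Bz-\Sh)$ and $\|P\|^2_\mL = c^T G c$. I will not expand this quadratic form directly. Instead I use the normal equations $Gc = b$, where $b = (11/6, 1/12)^T$ is the right-hand side already computed, to rewrite $\|P\|^2_\mL = c^T b$. This gives
\[
\|P\|^2_\mL = \tfrac{25}{67}\cdot\tfrac{11}{6} + \tfrac{24}{67}\cdot\tfrac{1}{12} = \tfrac{275}{402} + \tfrac{12}{402} = \tfrac{287}{402}.
\]
Equivalently, $c^T b = \langle P, \ESD - \Sh\rangle_\mL = \|P\|^2_\mL$, because the residual $(\ESD - \Sh) - P$ is orthogonal to the span. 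This is the two-dimensional analogue of Lemma~\ref{lem-orthogonality} and Theorem~\ref{thm-pythagorean}.

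\emph{The ratio and the comparison.} Dividing by $\|\ESD - \Sh\|^2_\mL = 11/6$ from Lemma~\ref{lem-norm-ESD} gives
\[
R^2_2(\ESD) = \tfrac{287}{402}\cdot\tfrac{6}{11} = \tfrac{287}{737} \approx 38.94\%.
\]
This is to be set against $R^2(\ESD) = 11/29$ from Proposition~\ref{prop-numerics-n4}. The inequality $R^2_2 \geq R^2$ is automatic, since the two-plane contains the egalitarian Shapley line. The only real risk in the proof is an arithmetic slip in the Gram entries or in the Cramer determinants. As a cross-check I will therefore substitute $c$ back into both normal equations:
\[
\tfrac{29}{6}\cdot\tfrac{25}{67} + \tfrac{1}{12}\cdot\tfrac{24}{67} = \tfrac{11}{6}, \qquad \tfrac{1}{12}\cdot\tfrac{25}{67} + \tfrac{7}{48}\cdot\tfrac{24}{67} = \tfrac{1}{12}.
\]
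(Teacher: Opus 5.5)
Your proposal is correct and follows the paper's proof essentially line for line. Both take the Cramer solution as given, compute the Shapley coefficient $1-49/67=18/67$, obtain the projection norm as $\eps^{**}\langle \ED-\Sh,\ESD-\Sh\rangle_{\mL}+\delta^{**}\langle \Bz-\Sh,\ESD-\Sh\rangle_{\mL}=287/402$, and divide by $\|\ESD-\Sh\|^2_{\mL}=11/6$. Your justification of $\|P\|^2_{\mL}=c^{T}b$ via orthogonality of the residual, and your back-substitution into both normal equations (which do check out), make explicit what the paper calls the ``standard formula'' and leaves unverified.
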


\begin{proof}
The form of the projection follows from the computation of $(\eps^{**}, \delta^{**})$ and the identity $F^{\eps,\delta} = (1 - \eps - \delta)\Sh + \eps \ED + \delta \Bz$. 
With $\eps^{**} = 25/67$ and $\delta^{**} = 24/67$, the Shapley coefficient is $1 - 49/67 = 18/67$.
The squared norm of the projection is given by the standard formula
\[
\|\text{proj}\|^2_{\mL} = \eps^{**} \langle \ED - \Sh, \ESD - \Sh \rangle_{\mL} + \delta^{**} \langle \Bz - \Sh, \ESD - \Sh \rangle_{\mL} = \frac{25}{67} \cdot \frac{11}{6} + \frac{24}{67} \cdot \frac{1}{12} 
= \frac  {287}{402}.
\]
By Lemma~\ref{lem-norm-ESD}, $\|\ESD - \Sh\|^2_{\mL} = 11/6 = 737/402$. 
 Therefore, 
 $R^2_2(\ESD) = 287/737$.
By a direct numerical comparison 
we can readily deduce that the gain is equal to the quantity 
 $R^2_2 - R^2 = 287/737 - 11/29 \approx 0.0101$, i.e., approximately $1\%$.
\end{proof}

\begin{remark}\label{rem-2param-ESD}
The result is striking: among $\Sh$, $\ED$ and $\Bz$, 
the equal-surplus-division value sits remarkably close to the centroid, with weights $18/67 \approx 27\%$, $25/67 \approx 37\%$ and $24/67 \approx 36\%$ respectively. 
The ESD value is therefore well captured as a near-balanced mixture of the three canonical solutions. 
The marginal improvement of $R^2$ from the one-parameter fit ($\approx 37.9\%$) to the two-parameter fit ($\approx 38.9\%$) 
is modest because $\Bz$ is itself nearly orthogonal to the egalitarian Shapley axis (Remark~\ref{rem-Bz-orthogonal}); 
adding the Banzhaf direction therefore captures only a small additional component of $\ESD$.
\end{remark}

\subsection{General Pythagorean Identity}

The Pythagorean decomposition extends to the multi-parameter setting. 
Let $U \subseteq \mL$ be any finite-dimensional subspace, 
and let $P_U$ denote the orthogonal projection onto $\Sh + U$ within $\mL$. 
Then for every $\Psi \in \mL$,
\[
\|\Psi - \Sh\|^2_{\mL} = \|P_U(\Psi) - \Sh\|^2_{\mL} + \|\Psi - P_U(\Psi)\|^2_{\mL},
\]
yielding the goodness-of-fit measure
\[
R^2_U(\Psi) := \frac{\|P_U(\Psi) - \Sh\|^2_{\mL}}{\|\Psi - \Sh\|^2_{\mL}} \in [0, 1].
\]
For $U = \mathrm{span}(\ED - \Sh)$, this recovers Definition~\ref{def-Rsquared};
 for $U = \mathrm{span}(\ED - \Sh, \Bz - \Sh)$, 
 it gives the two-parameter version of Proposition~\ref{prop-best-2param-ESD}; 
 and the framework extends to arbitrary finite-dimensional $U \subseteq \mL$ by the same normal-equation construction.

\section{Discussion and Open Directions}\label{sec-discussion}

We have presented an inner-product geometry on linear value maps in cooperative game theory, 
built on a basis-independent inner product on the space $\mL$ of linear value maps. 
The structural core of the framework is the canonical isomorphism $\mL^{\mathrm{ESL}} \cong \R^{n-1}$ of Section~\ref{sec-spectral}: 
every efficient symmetric linear value map decomposes uniquely into $n - 1$ stratified epsilons, 
one per coalition size, with the classical egalitarian Shapley family of Joosten appearing as the one-parameter diagonal slice. 
Within this picture, the orthogonal projection onto the egalitarian Shapley line (Sections~\ref{sec-best-approx} 
and~\ref{sec-pythagorean}) becomes explicit and statistically transparent: 
the optimal parameter $\eps^*(\Psi)$ is the weighted mean of the stratified epsilons under an explicit probability distribution $\{w_a\}$ over coalition sizes, 
and the goodness-of-fit $R^2(\Psi)$ is one minus the relative weighted variance of those epsilons (Theorems~\ref{thm-stratified-proj} and~\ref{thm-R2-variance}), 
a literal analogue of the coefficient of determination from regression statistics.

Applied to the standard alternatives to the Shapley value, the framework yields a clean three-way classification: 
at $n = 4$, the Banzhaf value is nearly orthogonal to the egalitarian Shapley axis ($R^2 \approx 1\%$), 
the equal-surplus-division value is moderately aligned ($R^2 \approx 38\%$), 
and the solidarity value is almost entirely aligned ($R^2 \approx 99.6\%$). 
Asymptotically, this contrast sharpens further. 
The deviations of $\ESD$ and $\So$ from the egalitarian Shapley line are concentrated in directions whose weight in the stratified picture vanishes as $n \to \infty$, 
and consequently $R^2(\ESD) \to 1$ and $R^2(\So) \to 1$. 
The deviation of $\Bz$, by contrast, is governed by the structural identity $\delta_a(\Bz) = -\eps_a(\Bz)/n$ between the efficiency defect and the egalitarian-Shapley deviation at every stratum, 
which forces the squared norms of the efficient and uniform components to match asymptotically, 
hence $R^2(\Bz) \to 1/2$. 
The three-way classification is therefore not a numerical coincidence at $n = 4$ but a structural statement: $\ESD$, $\So$, and $\Bz$ each occupy a distinct location in $\mL^{\mathrm{S}}$, distinguished by the geometric type of their deviation from $\Sh$ (Section~\ref{subsec-classification-structural}).

Several directions remain open.

\paragraph{Other linear targets.} 
The framework applies to any linear value map. 
While in this paper we have computed $\eps^*$, $R^2$ and the residual $\Phi_\perp$ for the Banzhaf, equal-surplus-division, 
and solidarity values, several other natural solutions remain to be analyzed: the weighted Shapley values of 
Kalai and Samet~\cite{kalaisamet1987}, the Owen value~\cite{owen1977} for games with coalition structures, 
and the discounted Shapley values would all extend the picture and likely produce further qualitative findings about the geometry of $\mL$.

\paragraph{Sharper asymptotic results via the spectral decomposition.} 
Theorems~\ref{thm-asym-ESD} and~\ref{thm-asym-Bz} establish the limiting 
behavior of $R^2$ for $\ESD$ and $\Bz$ via direct norm calculations. 
The spectral decomposition of Section~\ref{sec-spectral} provides a unified 
alternative: each result reduces to computing the limit of a weighted mean 
or weighted variance of stratified epsilons, often more transparently 
(see Equation~\eqref{eq-So-asymp} and Corollary~\ref{cor-Bz-asym-stratified}). 
Sharpening the convergence rates is a natural next step:
 in particular,  one can characterize the precise asymptotic of $1 - R^2(\So)$ as $n \to \infty$ 
in terms of the dispersion of $\eps_a(\So)$ across strata.

\paragraph{Higher-dimensional projection.} 
Section~\ref{sec-multiparam} treats projection onto $\mathrm{span}(\Sh, \ED, \Bz)$. 
Larger canonical bases of $\mL$ 
(for instance, including the equal-surplus-division and solidarity values)
 give finer resolution of any target in terms of its decomposition into known solutions.

\paragraph{Beyond linearity.} 
Theorem~\ref{thm-basis-indep} fails for nonlinear value maps. 
A natural question is whether some weaker invariance survives in particular subclasses of nonlinear maps, 
or whether a different inner product structure can be defined that handles nonlinear targets canonically.

\paragraph{Axiomatic interpretation.} 
Given that $\eps^*(\Psi)$ has a closed form, it is natural to ask for an axiomatic characterization of the value $F^{\eps^*(\Psi)}$ for each fixed target $\Psi$; 
in effect, a ``best-fit'' axiomatic characterization that combines the egalitarian-null-player axiom with a minimum-distance condition. 
We leave this for future work.

\paragraph{Statistical analogy.} 
Theorem~\ref{thm-R2-variance} makes the regression analogy literal: 
$R^2(\Psi)$ is the squared ratio of the weighted mean of the stratified 
epsilons to their weighted root-mean-square, with explicit weights 
$w_a = \binom{n}{a}(1/a - 1/n)/D_n$ over coalition sizes. 
Techniques from  regression analysis (such as partial $R^2$, hierarchical decomposition, model 
selection criteria, and leave-one-stratum-out diagnostics) thus admit direct 
analogues for value maps, with ``regressors'' being canonical efficient 
symmetric linear solutions, the ``response'' an arbitrary efficient  symmetric linear $\Psi$, 
and the underlying probability space being  $(\{1, \ldots, n-1\}, w)$.
Pursuing this analogy systematically is left  for future work.

\paragraph{Distributive interpretation.} 
The Shapley line $\{\Feps \mid \eps \in [0,1]\}$ admits a natural reading as a one-parameter family of distributive philosophies, 
interpolating between strict marginalism ($\eps = 0$, the Shapley value: 
each agent receives a payoff determined by their marginal contributions) and strict egalitarianism ($\eps = 1$, 
the equal-division solution: each agent receives the same share regardless of contribution). 
The $\eps$-Dummy axiom can be read in the same spirit, as the axiomatic skeleton of a guaranteed minimum: 
a player who contributes nothing nonetheless receives a positive fraction $\eps$ of the average grand-coalition worth.
Within this reading, our results acquire a corresponding interpretation. 
The near-orthogonality of the Banzhaf value to the marginalism--egalitarianism axis, both at $n = 4$ and asymptotically 
(in the sense that $R^2(\Bz) \to 1/2$), suggests that Banzhaf-style power-index reasoning is not a third position along the marginalism-to-egalitarianism continuum, but rather expresses a logically independent concern; 
namely, the measurement of influence or decision-making power, rather than the allocation of value. 
 The near-alignment of the solidarity value, by contrast, suggests that the Nowak--Radzik construction, 
 despite its apparently distinct motivation (averaging marginal contributions within coalitions), 
 reduces in our geometry to a member of the egalitarian Shapley family. 
  We make no claim that this geometric reduction settles the philosophical comparison between these solutions; 
  we observe only that the inner-product structure on $\mL$ provides a quantitative vocabulary 
  for distinctions that have until now been expressed only informally.
The spectral decomposition adds a further interpretive layer. 
Within  $\mL^{\mathrm{ESL}}$, every value map is a stratified egalitarian
Shapley value with potentially different stratum-specific weightings:
 a solution can be more egalitarian on, say, three-player coalitions than on 
two-player ones, and the framework records exactly this asymmetry. 
Distinct distributive philosophies that ``feel different'' in finite cases (such as 
solidarity versus the egalitarian Shapley family proper) often sit at slightly 
different points in $\R^{n-1}$ but converge to the same diagonal as $n$ grows, 
which is the geometric content of the asymptotic alignment results.

\appendix

\section{Closed Form for the Stratified Epsilons of the Solidarity Value}\label{app-So-formula}

This appendix proves the closed-form formula stated in Example~\ref{ex-So-stratified}:
\begin{equation}\label{eq-app-So-eps-formula}
\eps_a(\So) = a \sum_{s = a+1}^{n} \frac{\binom{n - a - 1}{s - a - 1}}{s \cdot \binom{n-1}{s-1}}, 
\qquad 1 \leq a \leq n - 1.
\end{equation}
This is the direct analogue of Lemma~\ref{lem-So-n4} for general $n$, 
recast in the stratified-flat coordinates of Section~\ref{sec-spectral}. 
Our argument works directly with the value of $\So$ on $u_A$ for $i \notin A$, 
which feeds straight into $\eps_a$ via the identity $\eps_a(\Psi) = n \beta_a(\Psi)$ for any $\Psi \in \mL^{\mathrm{ESL}}$.

\subsection*{The Identity $\eps_a(\Psi) = n \, \beta_a(\Psi)$ for Efficient Symmetric Linear Maps}

\begin{lemma}\label{lem-eps-equals-n-beta}
For every $\Psi \in \mL^{\mathrm{ESL}}$ and every $a \in \{1, \ldots, n-1\}$,
\[
\eps_a(\Psi) = n \, \beta_a(\Psi),
\]
where $\beta_a(\Psi) = \Psi_i(u_A)$ for $i \in N \setminus A$, $|A| = a$.
\end{lemma}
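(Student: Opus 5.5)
The plan is to read off the $\beta$-coordinate of both sides of the defining identity \eqref{eq-eps-a-defn}. Fix $a \in \{1,\ldots,n-1\}$ and a coalition $A$ with $|A| = a$. By Lemma~\ref{lem-Eff-implies-eff-stratum}, $(\Psi - \Sh)^{(a)}$ lies in the one-dimensional space $\mL_a^{\mathrm{eff}}$. Since $(\ED-\Sh)^{(a)}$ is a nonzero element of that space, $\eps_a(\Psi)$ is the unique scalar with $(\Psi - \Sh)^{(a)} = \eps_a(\Psi)\,(\ED - \Sh)^{(a)}$. Both sides are symmetric stratum-$a$ maps, so by Lemma~\ref{lem-sym-stratum-dim} they are determined by their $(\alpha_a,\beta_a)$ coordinates. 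The identity is therefore equivalent to the pair of scalar equations
\[
\alpha_a(\Psi) - \alpha_a(\Sh) = \eps_a(\Psi)\bigl(\alpha_a(\ED) - \alpha_a(\Sh)\bigr), \qquad \beta_a(\Psi) - \beta_a(\Sh) = \eps_a(\Psi)\bigl(\beta_a(\ED) - \beta_a(\Sh)\bigr).
\]

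Next I substitute the coordinates from Remark~\ref{rem-Sh-ED-coords}, namely $\beta_a(\Sh) = 0$ and $\beta_a(\ED) = 1/n$. The second equation then reads $\beta_a(\Psi) = \eps_a(\Psi)/n$, which is exactly $\eps_a(\Psi) = n\,\beta_a(\Psi)$. Here $\beta_a$ is well defined, independent of the choice of $A$ and of $i \in N\setminus A$, because $a<n$ and by Lemma~\ref{lem-sym-stratum-dim}.

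For completeness I will also check that the $\alpha$-equation is consistent, which confirms that efficiency is what is being used. Efficiency on $u_A$ gives $a\,\alpha_a(\Psi) + (n-a)\,\beta_a(\Psi) = 1$, hence $\alpha_a(\Psi) = \bigl(1 - (n-a)\eps_a/n\bigr)/a$. Subtracting $1/a$ gives $-(n-a)\eps_a/(na) = \eps_a(1/n - 1/a)$, as required.

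There is no real obstacle here. The only point needing care is to justify that the coordinatewise comparison is legitimate, i.e.\ that equality in $\mL_a^{\mathrm{S}}$ is equivalent to equality of the $(\alpha_a,\beta_a)$ coordinates. This is precisely the content of Lemma~\ref{lem-sym-stratum-dim} for $a<n$.
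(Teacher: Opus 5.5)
Your proposal is correct and is essentially the paper's proof. The paper reads off the $\beta$-coordinate of $\Psi^{(a)} = (1-\eps_a)\Sh^{(a)} + \eps_a \ED^{(a)}$, which is just a rearrangement of the defining identity you use, together with $\beta_a(\Sh)=0$ and $\beta_a(\ED)=1/n$. Your extra $\alpha$-consistency check is not needed for the lemma, but it does no harm.
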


\begin{proof}
By Theorem~\ref{thm-stratified-flat}, $\Psi^{(a)} = (1 - \eps_a) \Sh^{(a)} + \eps_a \ED^{(a)}$. 
Reading the $\beta$-component (i.e., the value at any $i \notin A$),
\[
\beta_a(\Psi) = (1 - \eps_a) \cdot \beta_a(\Sh) + \eps_a \cdot \beta_a(\ED) = (1 - \eps_a) \cdot 0 + \eps_a \cdot \tfrac{1}{n} = \tfrac{\eps_a}{n}.
\]
\end{proof}

By Lemma~\ref{lem-eps-equals-n-beta}, 
computing $\eps_a(\So)$ reduces to computing $\So_i(u_A)$ for $i \notin A$, 
$|A| = a$, which is more direct than computing $(\So - \Sh)_i(u_A)$.

\subsection*{The Defining Sum for $\So_i(u_A)$ When $i \notin A$}

Recall the solidarity value definition (Section~\ref{subsec-solidarity}):
 for $v \in \GN$ and $i \in N$,
\[
\So_i(v) = \sum_{S \ni i} \frac{(n - |S|)! \, (|S| - 1)!}{n!} \cdot A^v(S),
\qquad A^v(S) = \frac{1}{|S|} \sum_{j \in S} (v(S) - v(S \setminus \{j\})).
\]
For the unanimity game $v = u_A$ with $|A| = a$, 
Lemma~\ref{lem-So-unanimity} gives 
\[
A^{u_A}(S) = 
\begin{cases}
a/|S| & \text{if } A \subseteq S, \\
0 & \text{otherwise.}
\end{cases}
\]
Therefore,
\begin{equation}\label{eq-app-So-formula-1}
\So_i(u_A) = \sum_{\substack{S \ni i \\ A \subseteq S}} \frac{(n - |S|)! \, (|S| - 1)!}{n!} \cdot \frac{a}{|S|}.
\end{equation}

\subsection*{Counting the Relevant Subsets When $i \notin A$}

Fix $i \in N \setminus A$. 
The subsets $S$ contributing to~\eqref{eq-app-So-formula-1} satisfy $A \cup \{i\} \subseteq S$. 
For each $s \in \{a+1, \ldots, n\}$, the number of such $S$ with $|S| = s$ equals 
the number of ways to choose the remaining $s - a - 1$ elements of $S$ from the $n - a - 1$ 
elements of $N \setminus (A \cup \{i\})$, namely $\binom{n - a - 1}{s - a - 1}$. 
Substituting into~\eqref{eq-app-So-formula-1},
\begin{equation}\label{eq-app-So-formula-2}
\So_i(u_A) = a \sum_{s = a + 1}^{n} \binom{n - a - 1}{s - a - 1} \cdot \frac{(n - s)! \, (s - 1)!}{n!} \cdot \frac{1}{s}.
\end{equation}

\subsection*{Algebraic Simplification of the Weight}

We rewrite the Shapley-style weight in~\eqref{eq-app-So-formula-2} in a form that makes the dependence on $n$ more transparent. 
Using $(n - s)! \, (s - 1)! / n!  =  1/[n \cdot \binom{n - 1}{s - 1}]$ 
(which follows from the definition of the  binomial   $\binom{n-1}{s-1} = (n-1)!/[(s-1)!(n-s)!]$ and a single cancellation),
\begin{equation}\label{eq-app-So-formula-3}
\So_i(u_A) = \frac{a}{n} \sum_{s = a + 1}^{n} \frac{\binom{n - a - 1}{s - a - 1}}{s \cdot \binom{n - 1}{s - 1}} \qquad (i \notin A).
\end{equation}
The right-hand side depends only on $a$, $s$, and $n$, not on $i$ or the specific choice of $A$, 
confirming the symmetry of $\So$ at the stratum level.

\subsection*{Conclusion}

By Lemma~\ref{lem-eps-equals-n-beta}, $\eps_a(\So) = n \cdot \So_i(u_A)$ for any $i \notin A$, $|A| = a$. 
Combining with~\eqref{eq-app-So-formula-3},
\[
\eps_a(\So) = n \cdot \frac{ a}{n} \sum_{s = a + 1}^{n} \frac{\binom{n -  a - 1}{s - a - 1}}{s \cdot   \binom{n - 1}{s - 1}} = a \sum_{s = a + 1}^{n} \frac{\binom{n - a - 1}{s - a - 1 } }{ s  \cdot \binom{n - 1}{s - 1}},
\]
which is~\eqref{eq-app-So-eps-formula}. \qed

\subsection*{Sanity Check at $n = 4$}

We verify against Lemma~\ref{lem-So-n4}. 
For $a = 3$, the sum reduces to $s = 4$:
\[
\eps_3(\So) = 3 \cdot \frac{\binom{0}{0}}{4 \cdot \binom{3}{3}} = \frac{3}{4}.
\]
For $a = 2$, the sum has terms $s \in \{3, 4\}$:
\[
\eps_2(\So) = 2 \left[\frac{\binom{1}{0}}{3 \cdot \binom{3}{2}}  + \frac{\binom{1}{1}}{ 4 \cdot \binom{ 3}{3}}\right] = 2 \left[\frac{1}{9} + \frac{1}{4}\right] = 2 \cdot \frac{13}{36} = \frac{13}{18}.
\]
For $a = 1$, $s \in \{2, 3, 4\}$:
\[
\eps_1(\So) = 1 \left[\frac{\binom{2}{0}}{2 \cdot \binom{3}{1}} + \frac{\binom{2}{1}}{3 \cdot \binom{3}{2}} + \frac{\binom{2}{2}}{4 \cdot \binom{3}{3}}\right] = \frac{1}{6} + \frac{2}{9} + \frac{1}{4} = \frac{6 + 8 + 9}{36} = \frac{23}{36}.
\]
These reproduce the values $(\eps_1, \eps_2, \eps_3) = (23/36, 13/18, 3/4)$ recorded in Example~\ref{ex-So-stratified}, 
consistent with Lemma~\ref{lem-So-n4} via the stratum-by-stratum identity $(\So - \Sh)^{(a)} = \eps_a (\ED - \Sh)^{(a)}$.

\subsection*{Asymptotic Estimate for Fixed $a$}

We show $\eps_a(\So) \to 1$ for each fixed integer $a$ as $n \to \infty$.

\begin{lemma}\label{lem-So-eps-asymp}
The closed form~\eqref{eq-app-So-eps-formula} can be rewritten as
\begin{equation}\label{eq-app-So-cleaner}
\eps_a(\So) = \frac{a}{\binom{n-1}{a}} \sum_{s=a+1}^{n} \frac{\binom{s-1}{a}}{s} \qquad (1 \leq a \leq n-1).
\end{equation}
For $a = 1$, this gives the exact identity
\begin{equation}\label{eq-app-So-a1-exact}
\eps_1(\So) = 1 - \frac{H_n^{\mathrm{harm}} - 1}{n - 1}, \qquad \text{so} \qquad 1 - \eps_1(\So) = O\!\left(\frac{\log n}{n}\right).
\end{equation}
For every fixed $a \geq 1$, $\eps_a(\So) \to 1$ as $n \to \infty$.
\end{lemma}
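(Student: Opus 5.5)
The plan is to prove the three assertions in order. First comes the rewriting \eqref{eq-app-So-cleaner}, which is a single binomial identity. The case $a=1$ then follows by specialization, and the limit for general fixed $a$ follows from a hockey-stick telescoping.

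\emph{Step 1 (rewriting).} I would establish the identity
\[
\binom{n-a-1}{s-a-1}\binom{n-1}{a} \;=\; \binom{n-1}{s-1}\binom{s-1}{a}
\]
by double counting. Both sides count the pairs $(T,B)$ with $B\subseteq T\subseteq[n-1]$, $|T|=s-1$ and $|B|=a$. The right-hand side picks $T$ first and then $B$ inside it. The left-hand side picks $B$ first and then completes $T$ from the remaining $n-1-a$ elements. Dividing \eqref{eq-app-So-eps-formula} termwise by $\binom{n-1}{s-1}$ and using the identity gives \eqref{eq-app-So-cleaner} immediately.

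\emph{Step 2 ($a=1$).} I would put $a=1$ in \eqref{eq-app-So-cleaner}:
\[
\eps_1(\So)=\frac{1}{n-1}\sum_{s=2}^n\frac{s-1}{s}=\frac{1}{n-1}\Bigl[(n-1)-\bigl(H_n^{\mathrm{harm}}-1\bigr)\Bigr],
\]
which is \eqref{eq-app-So-a1-exact}. The bound $O(\log n/n)$ then follows from $H_n^{\mathrm{harm}}=O(\log n)$.

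\emph{Step 3 (fixed $a$).} This is the only step that needs some care. I would use the absorption identity $\binom{s-1}{a}=\frac{s-1}{a}\binom{s-2}{a-1}$, which turns each summand into
\[
\frac{\binom{s-1}{a}}{s}=\frac1a\Bigl(1-\frac1s\Bigr)\binom{s-2}{a-1}.
\]
By the hockey-stick identity, $\sum_{s=a+1}^n\binom{s-2}{a-1}=\binom{n-1}{a}$. Substituting into \eqref{eq-app-So-cleaner} gives the exact expression
\[
1-\eps_a(\So)=\frac{1}{\binom{n-1}{a}}\sum_{s=a+1}^{n}\frac{\binom{s-2}{a-1}}{s}\;\ge 0 .
\]
It remains to show that this correction tends to $0$. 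Fix a cutoff $M$ and split the sum at $s=M$.
\begin{itemize}
\item The terms with $s\le M$ form a finite sum that does not depend on $n$. Since $\binom{n-1}{a}\to\infty$ for fixed $a\ge1$, their contribution vanishes.
\item The terms with $s>M$ satisfy $1/s<1/M$. By hockey-stick again their total is at most $\frac1M\binom{n-1}{a}$, so they contribute at most $1/M$.
\end{itemize}
Hence $\limsup_n(1-\eps_a(\So))\le 1/M$ for every $M$, and therefore $\eps_a(\So)\to1$.

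The main obstacle is spotting the binomial identity in Step 1 and the absorption-plus-hockey-stick rewriting in Step 3. Once these are found, everything else is routine.
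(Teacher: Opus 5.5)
Your proof is correct. Steps 1 and 2 coincide with the paper's: your double-counting identity is the same fact as the paper's falling-factorial ratio $\binom{n-a-1}{s-a-1}/\binom{n-1}{s-1}=\binom{s-1}{a}/\binom{n-1}{a}$. Step 3, however, takes a genuinely different route.

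\textbf{The paper's route.} It applies Abel summation to the rewritten sum, with partial sums $\binom{s}{a+1}$. This splits $\eps_a(\So)$ into a boundary term $a/(a+1)$ plus a remainder. The remainder is then treated as a Riemann sum converging to $\int_0^1 \frac{a x^{a-1}}{a+1}\,dx=\frac{1}{a+1}$.

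\textbf{Your route.} You use absorption and hockey-stick to get the exact identity $1-\eps_a(\So)=\binom{n-1}{a}^{-1}\sum_{s=a+1}^{n}\binom{s-2}{a-1}/s$, and then finish with a cutoff argument.

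\textbf{What your route buys.}
\begin{itemize}
\item It involves no approximation. The paper's claim that the summand equals $\frac{a x_s^{a-1}}{(a+1)n}\bigl(1+O(1/n)\bigr)$ uniformly in $s/n\in(0,1)$ does not hold for bounded $s$; at $s=a+1$ the ratio of the two sides tends to a constant different from $1$. So the paper's Riemann-sum step tacitly needs exactly the small-$s$ cutoff you supply explicitly.
\item It shows $\eps_a(\So)\le 1$.
\item It contains the $a=1$ identity as a special case.
\item It yields explicit rates with one more absorption step. For $a\ge 2$, bound $1/s\le 1/(s-2)$ and use $\binom{s-2}{a-1}/(s-2)=\binom{s-3}{a-2}/(a-1)$ together with hockey-stick. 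This gives $1-\eps_a(\So)\le \frac{a}{(a-1)(n-1)}$. For comparison, the paper obtains an explicit rate only for $a=1$.
\end{itemize}

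\textbf{What the paper's route buys.} It presents the limit as a boundary term $a/(a+1)$ plus an integral $1/(a+1)$. This gives a continuum picture of where the mass of the sum sits, which is a natural starting point if one wants finer asymptotics.
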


\begin{proof}
Form~\eqref{eq-app-So-cleaner} follows from~\eqref{eq-app-So-eps-formula} via the elementary identity
\[
\frac{\binom{n-a-1}{s-a-1}}{\binom{n-1}{s-1}} = \frac{(s-1)(s-2)\cdots(s-a)}{(n-1)(n-2)\cdots(n-a)} = \frac{\binom{s-1}{a}}{\binom{n-1}{a}}.
\]
For $a = 1$, $\binom{s-1}{1} = s-1$ and $\binom{n-1}{1} = n-1$, so 
the proof of~\eqref{eq-app-So-a1-exact} is complete:
\[
\eps_1(\So) = \frac{1}{n-1}\sum_{s=2}^{n}\frac{s-1}{s} = \frac{1}{n-1}\sum_{s=2}^{n}\!\left(1 - \frac{1}{s}\right) = 1 - \frac{H_n^{\mathrm{harm}} - 1}{n-1}.
\]
For general $a$, apply Abel summation to~\eqref{eq-app-So-cleaner} with summands $\binom{s-1}{a}$ and weights $1/s$, 
using the hockey-stick identity 
$\sum_{r=a+1}^{s}\binom{r-1}{a} = \binom{s}{a+1}$ and $\binom{n}{a+1} = \binom{n-1}{a} \cdot n/(a+1)$:
\begin{equation}\label{eq-app-So-abel}
\eps_a(\So) = \frac{a}{a+1} + \frac{a}{\binom{n-1}{a}} \sum_{s=a+1}^{n-1} \frac{\binom{s}{a+1}}{s(s+1)}.
\end{equation}
Setting $x_s := s/n$, the summand satisfies, as $n \to \infty$ uniformly in $s/n \in (0,1)$,
\[
\frac{a}{\binom{n-1}{a}}  \cdot \frac{\binom{s}{a+1}}{s(s+1)} = \frac{a\,(n-1-a)}{a+1}  \cdot \frac{\binom{s}{a+1}/\binom{n-1}{a+1}}{s(s+1)} = \frac{a\, x_s^{a+1}}{(a+1)\, n\, x_s^2}\bigl(1 + O(1/n)\bigr) =  \frac{a\, x_s^{a-1}}{(a+1)\, n}\bigl(1 +  O(1/n)\bigr),
\]
where we used the identities 
 $\binom{s}{a+1}/\binom{n-1}{a+1} = \prod_{j=0}^{a}(s-j)/(n-1-j) = x_s^{a+1}(1 + O(1/n))$ and $(n-1-a)/n = 1 + O(1/n)$. 
 Recognizing the Riemann sum,
\[
\frac{a}{\binom{n-1}{a}} \sum_{s=a+1}^{n-1} \frac{\binom{s}{a+1}}{s(s+1)} \;\longrightarrow\; \int_0^1 \frac{a\, x^{a-1}}{a+1}\, dx \;=\; \frac{1}{a+1}.
\]
Substituting into~\eqref{eq-app-So-abel} gives 
$\eps_a(\So) \to a/(a+1) + 1/(a+1) = 1$.
\end{proof}

Combined with Theorem~\ref{thm-R2-variance}, Lemma~\ref{lem-So-eps-asymp} yields 
Equation~\eqref{eq-So-asymp}: $R^2(\So) \to 1$ as $n \to \infty$.

\begin{remark}
The closed form~\eqref{eq-app-So-eps-formula} expresses $\eps_a(\So)$ as a finite sum of explicit rational terms but does not, to our knowledge, 
simplify further to a closed form involving only standard special functions for general $a$ and $n$. 
Beyond the exact formula~\eqref{eq-app-So-a1-exact} for $a = 1$, the two opposite extreme cases admit clean explicit values: 
$\eps_{n-1}(\So) = (n-1)/n$ (only the $s = n$ term contributes) and $\eps_{n-2}(\So) = (n-2)\bigl[(n-1)^2 + n\bigr] / \bigl[n (n-1)^2\bigr]$.
\end{remark}

\end{document}